\newtheorem{theorem}{Theorem}[section]
\newtheorem{definition}{Definition}[section]
\newtheorem{corollary}{Corollary}[section]
\newtheorem{lemma}{Lemma}[section]
\newtheorem{remark}{Remark}[section]
\newcommand{\diag}{\operatorname{diag}}
\numberwithin{equation}{section}
\begin{document}
\markboth{}{}

\title[Dynamic hedging under stochastic volatility] {A general Multidimensional Monte Carlo Approach for Dynamic Hedging under stochastic volatility}

\author{Dorival Le\~ao}

\address{Departamento de Matem\'atica Aplicada e Estat\'istica. Universidade de S\~ao
Paulo, 13560-970, S\~ao Carlos - SP, Brazil} \email{leao@icmc.usp.br}\email{vinicns@icmc.usp.br}

\author{Alberto Ohashi}

\address{Departamento de Matem\'atica, Universidade Federal da Para\'iba, 13560-970, Jo\~ao Pessoa - Para\'iba, Brazil} \email{alberto.ohashi@pq.cnpq.br}

\author{Vin\'{\i}cius Siqueira}

\thanks{We would like to thank Bruno Dupire and Francesco Russo for stimulating discussions and several suggestions about the numerical algorithm proposed in this work. We also gratefully acknowledge the computational support from LNCC (Laborat\'orio Nacional de Computa\c{c}\~ao Cient\'ifica - Brazil). The second author was supported by CNPq grant 308742.}
\date{\today}

\keywords{Martingale representation, hedging contingent claims, path dependent options} \subjclass{Primary: C02; Secondary: G12}

\begin{center}
\end{center}

\begin{abstract}
In this work, we introduce a Monte Carlo method for the dynamic hedging of general European-type contingent claims in a multidimensional Brownian arbitrage-free market. Based on bounded variation martingale approximations for Galtchouk-Kunita-Watanabe decompositions, we propose a feasible and constructive methodology which allows us to compute \textit{pure hedging strategies} w.r.t arbitrary square-integrable claims in incomplete markets. In particular, the methodology can be applied to quadratic hedging-type strategies for fully path-dependent options with stochastic volatility and discontinuous payoffs. We illustrate the method with numerical examples based on generalized F\"{o}llmer-Schweizer decompositions, locally-risk minimizing and mean-variance hedging strategies for vanilla and path-dependent options written on local volatility and stochastic volatility models.
\end{abstract}

\maketitle

\section{Introduction}
\subsection{Background and Motivation}\label{back}
Let $(S,\mathbf{F}, \mathbb{P})$ be a financial market composed by a continuous $\mathbf{F}$-semimartingale $S$ which represents a discounted risky asset price process, $\mathbf{F}=\{\mathbf{F}_t;0\le t\le T \}$ is a filtration which encodes the information flow in the market on a finite horizon $[0,T]$, $\mathbb{P}$ is a physical probability measure and $\mathcal{M}^e$ is the set of equivalent local martingale measures. Let $H$ be an $\mathbf{F}_T$-measurable contingent claim describing the net payoff whose the trader is faced at time $T$. In order to hedge this claim, the trader has to choose a dynamic portfolio strategy.

Under the assumption of an arbitrage-free market, the classical Galtchouk-Kunita-Watanabe~(henceforth abbreviated by GKW) decomposition yields

\begin{equation}\label{frep}
H = \mathbb{E}_{\mathbb{Q}}[H] + \int_0^T \theta^{H,\mathbb{Q}}_\ell dS_\ell + L^{H,\mathbb{Q}}_T\quad\text{under}~\mathbb{Q}\in \mathcal{M}^e,
\end{equation}
where $L^{H,\mathbb{Q}}$ is a $\mathbb{Q}$-local martingale which is strongly orthogonal to $S$ and $\theta^{H,\mathbb{Q}}$ is an adapted process.

The GKW decomposition plays a crucial role in determining optimal hedging strategies in a general Brownian-based market model subject to stochastic volatility. For instance, if $S$ is a one-dimensional It\^o risky asset price process which is adapted to the information generated by a two-dimensional Brownian motion $W = (W^1,W^2)$, then there exists a two-dimensional adapted process $\phi^{H,\mathbb{Q}}:=(\phi^{H,1},\phi^{H,2})$ such that

$$H = \mathbb{E}_{\mathbb{Q}}[H] + \int_0^T\phi^{H,\mathbb{Q}}_tdW_t,$$
which also realizes
\begin{equation}\label{basicob}
\theta^{H,\mathbb{Q}}_t = \phi^{H,1}_t [S_t \sigma_t]^{-1},\quad L^{H,\mathbb{Q}}_t = \int_0^t \phi^{H,2}dW^2_s;~0\le t\le T.
\end{equation}

In the complete market case, there exists a unique $\mathbb{Q}\in \mathcal{M}^e$ and in this case, $L^{H,\mathbb{Q}}=0$, $\mathbb{E}_{\mathbb{Q}}[H]$ is the unique fair price and the hedging replicating strategy is fully described by the process $\theta^{H,\mathbb{Q}}$. In a general stochastic volatility framework, there are infinitely many GKW orthogonal decompositions parameterized by the set $\mathcal{M}^e$ and hence one can ask if it is possible to determine the notion of non-self-financing optimal hedging strategies solely based on the quantities \eqref{basicob}. This type of question was firstly answered by F\"{o}llmer and Sonderman~\cite{Sondermann1986} and later on extended by Schweizer~\cite{Schweizer1991} and F\"{o}llmer and Schweizer~\cite{Follmer1991} through the existence of the so-called F\"{o}llmer-Schweizer decomposition which turns out to be equivalent to the existence of locally-risk minimizing hedging strategies. The GKW decomposition under the so-called minimal martingale measure constitutes the starting point to get locally risk minimizing strategies provided one is able to check some square-integrability properties of the components in~(\ref{frep}) under the physical measure. See e.g~\cite{Heath} and~\cite{Schweizer2001} for details and other references therein. Orthogonal decompositions without square-integrability properties can also be defined in terms of the the so-called generalized F\"{o}llmer-Schweizer decomposition~(see e.g Schweizer~\cite{Schweizer1995}).

In contrast to the local-risk minimization approach, one can insist in working with self-financing hedging strategies which give rise to the so-called mean-variance hedging methodology. In this approach, the spirit is to minimize the expectation of the squared hedging error over all initial endowments $x$ and all suitable admissible strategies $\varphi \in \Theta$:

\begin{equation}\label{2opt}
\inf_{\varphi\in \Theta, x\in \mathbb{R}}\mathbb{E}_{\mathbb{P}}\Big|H- x - \int_0^T \varphi_tdS_t\Big|^2.
\end{equation}
The nature of the optimization problem \eqref{2opt} suggests to work with the subset $\mathcal{M}^e_2:=\{\mathbb{Q}\in\mathcal{M}^e; \frac{d\mathbb{Q}}{d\mathbb{P}}\in L^2(\mathbb{P})\}$. Rheinlander and Schweizer~\cite{rheinlander}, Gourieroux, Laurent and Pham~\cite{pham_1} and Schweizer~\cite{Schweizer_1996} show that if $\mathcal{M}^e_2\neq \emptyset$ and $H\in L^2(\mathbb{P})$ then the optimal quadratic hedging strategy exists and it is given by $\big(\mathbb{E}_{\tilde{\mathbb{P}}}[H],\eta^{\tilde{\mathbb{P}}}\big)$, where

\begin{equation}\label{opmvhs}
\eta^{\tilde{\mathbb{P}}}_t :=\theta^{H,\tilde{\mathbb{P}}}_t - \frac{\tilde{\zeta}_t}{\tilde{Z}_t}\Bigg(V^{H,\tilde{\mathbb{P}}}_{t-} - \mathbb{E}_{\tilde{\mathbb{P}}}[H] - \int_0^t\eta^{\tilde{\mathbb{P}}}_\ell dS_\ell \Bigg);~0\le t\le T.
\end{equation}
Here $\theta^{H,\tilde{\mathbb{P}}}$ is computed in terms of $\tilde{\mathbb{P}}$, the so-called variance optimal martingale measure, $\tilde{\zeta}$ realizes

\begin{equation}\label{radonvomm}
\tilde{Z}_t:=\mathbb{E}_{\tilde{\mathbb{P}}}\Bigg[\frac{d\tilde{\mathbb{P}}}{d\mathbb{P}}\Big|\mathbf{F}_t \Bigg] = \tilde{Z}_0 + \int_0^t\tilde{\zeta}_\ell dS_\ell;~0\le t\le T,
\end{equation}
and $V^{H,\tilde{\mathbb{P}}}:=\mathbb{E}_{\tilde{\mathbb{P}}}[H|\mathbf{F}_\cdot]$ is the value option price process under $\tilde{\mathbb{P}}$. See also Cern\'y and Kallsen~\cite{kallsen_2} for the general semimartingale case and the works~\cite{kallsen_1},~\cite{kramkov1} and~\cite{kramkov2} for other utility-based hedging strategies based on GKW decompositions.

Concrete representations for the \textit{pure hedging strategies} $\{\theta^{H,\mathbb{Q}}; \mathbb{Q}=\hat{\mathbb{P}},\tilde{\mathbb{P}}\}$ can in principle be obtained by computing cross-quadratic variations $d[V^{H,\mathbb{Q}}, S]_t/d[S,S]_t$ for $\mathbb{Q}\in \{\tilde{\mathbb{P}},\hat{\mathbb{P}}\}$. For instance, in the classical vanilla case, pure hedging strategies can be computed by means of the Feynman-Kac theorem~(see e.g~Heath, Platen and Schweizer~\cite{Heath}). In the path-dependent case, the obtention of concrete computationally efficient representations for $\theta^{H,\mathbb{Q}}$ is a rather difficult problem. Feynman-Kac-type arguments for fully path-dependent options mixed with stochastic volatility typically face not-well posed problems on the whole trading period as well as highly degenerate PDEs arise in this context. Generically speaking, one has to work with non-Markovian versions of the Feynman-Kac theorem in order to get robust dynamic hedging strategies for fully path dependent options written on stochastic volatility risky asset price processes.

In the mean variance case, the only quantity in \eqref{opmvhs} not related to GKW decomposition is $\tilde{Z}$ which can in principle be expressed in terms of the so-called fundamental representation equations given by Hobson~\cite{hobson} and Biagini, Guasoni and Pratelli~\cite{biagini} in the stochastic volatility case. For instance, Hobson derives closed form expressions for $\tilde{\zeta}$ and also for any type of $q$-optimal measure in the Heston model~\cite{heston}. Recently, semi-explicit formulas for vanilla options based on general characterizations of the variance-optimal hedge in Cern\'y and Kallsen~\cite{kallsen_2} have been also proposed in the literature which allow for a feasible numerical implementation in affine models. See~Kallsen and Vierthauer~\cite{kallsen_3} and Cern\'y and Kallsen~\cite{kallsen_4} for some results in this direction. A different approach based on backward stochastic differential equations can also be used in order to get useful characterizations for the optimal mean variance hedging strategies. See e.g~Jeanblanc, Mania, Santacrose and Schweizer~\cite{jeanblanc} and other references therein.

\subsection{Contribution of the current paper.}
In spite of deep characterizations of optimal quadratic hedging strategies and concrete numerical schemes available for vanilla-type options, to our best knowledge no feasible approach has been proposed to tackle the problem of obtaining dynamic optimal quadratic hedging strategies for fully path dependent options written on a generic multidimensional It\^o risky asset price process. In this work, we attempt to solve this problem with a probabilistic approach. The main difficulty in dealing with fully path dependent and/or discontinuous payoffs is the non-Markovian nature of the option value and a priori lack of path smoothness of the pure hedging strategies. Usual numerical schemes based on PDE and martingale techniques do not trivially apply to this context.

The main contribution of this paper is the obtention of flexible and computationally efficient multidimensional non-Markovian representations for generic option price processes which allow for a concrete computation of the associated GKW decomposition $\big(\theta^{H,\mathbb{Q}},L^{H,\mathbb{Q}}\big)$  for $\mathbb{Q}$-square integrable payoffs $H$ with $\mathbb{Q}\in\mathcal{M}^e$. We provide a Monte Carlo methodology capable to compute optimal quadratic hedging strategies w.r.t general square-integrable claims in a multidimensional Brownian-based market model.

This article provides a feasible and constructive method to compute generalized F\"{o}llmer-Schweizer decompositions under full generality. As far as the mean variance hedging is concerned, we are able to compute pure optimal hedging strategies $\theta^{H,\tilde{\mathbb{P}}}$ for arbitrary square-integrable payoffs. Hence, our methodology also applies to this case provided one is able to compute the fundamental representation equations in Hobson~\cite{hobson} and Biagini, Guasoni and Pratelli~\cite{biagini} which is the case for the classical Heston model. In mathematical terms, we are able to compute $\mathbb{Q}$-GKW decompositions under full generality so that the results of this article can also be used to other non-quadratic hedging methodologies where orthogonal martingale representations play an important role in determining optimal hedging strategies.

The starting point of this article is based on weak approximations developed by Le\~ao and Ohashi~\cite{LEAO_OHASHI09} for one-dimensional Brownian functionals. They introduced a one-dimensional space-filtration discretization scheme constructed from suitable waiting times which measure the instants when the Brownian motion hits some a priori levels. In this work, we extend~\cite{LEAO_OHASHI09} to the multidimensional case as follows: More general and stronger convergence results are obtained in order to recover incomplete markets with stochastic volatility. Hitting times induced by multidimensional noises which drive the stochastic volatility are carefully analyzed in order to obtain $\mathbb{Q}$-GKW decompositions under rather weak integrability conditions for any $\mathbb{Q}\in \mathcal{M}^e$. Moreover, a complete analysis is performed w.r.t weak approximations for gain processes by means of suitable non-antecipative discrete-time hedging strategies for square-integrable payoffs, including path-dependent ones.

It is important to stress that the results of this article can be applied to both complete and incomplete markets written on a generic multidimensional It\^o risky asset price process. One important restriction of our methodology is the assumption that the risky asset price process has continuous paths. This is a limitation that we hope to overcome in a future work.

Numerical results based on the standard Black-Scholes, local-volatility and Heston models are performed in order to illustrate the theoretical results and the methodology of this article. In particular, we briefly compare our results with other prominent methodologies based on Malliavin weights (complete market case) and PDE techniques (incomplete market case) employed by Bernis, Gobet and Kohatsu-Higa~\cite{kohatsu} and Heath, Platen and Schweizer~\cite{Heath}, respectively. The numerical experiments suggest that pure hedging strategies based on generalized F\"{o}llmer-Schweizer decompositions mitigate very well the cost of hedging of a path-dependent option even if there is no guarantee of the existence of locally-risk minimizing strategies. We also compare hedging errors arising from optimal mean variance hedging strategies for one-touch options written on a Heston model with nonzero correlation.

The remainder of this paper is structured as follows. In Section \ref{capitulo:modelo}, we fix the notation and we describe the basic underlying market model. In Section~\ref{capitulo:aproximacao}, we provide the basic elements of the Monte Carlo methodology proposed in this article. In Section~\ref{wdhsection}, we formulate dynamic hedging strategies starting from a given GKW decomposition and we translate our results to well-known quadratic hedging strategies. The Monte Carlo algorithm and the numerical study are described in Sections~\ref{capitulo:algoritmo} and \ref{capitulo:resultados}, respectively. The Appendix presents more refined approximations when the martingale representations admit additional hypotheses.

\section{Preliminaries}\label{capitulo:modelo}
Throughout this paper, we assume that we are in the usual Brownian market model with finite time horizon $0\leq T< \infty$ equipped with the stochastic basis $(\Omega, \mathbf{F},\mathbb{P})$ generated by a standard $p$-dimensional Brownian motion $B = \{(B^{(1)}_t,\ldots,B^{(p)}_t); 0\le t\le T \}$ starting from $0$. The filtration $\mathbf{F}:=(\mathbf{F}_t)_{0\leq t\leq T}$ is the $\mathbb{P}$-augmentation of the natural filtration generated by $B$. For a given $m$-dimensional vector $J = (J_1 , \ldots , J_m)$, we denote by $\diag(J)$ the $m \times m$ diagonal matrix whose $\ell$-th diagonal term is $J_{\ell}$. In this paper, for all unexplained terminology concerning general theory of processes, we refer to Dellacherie and Meyer~\cite{dellacherie}.

In view of stochastic volatility models, let us split $B$ into two multidimensional Brownian motions as follows $B^S := (B^{(1)},\ldots,B^{(d)})$ and $B^I := (B^{(d+1)},\ldots,B^{(p)})$. In this section, the market consists of $d + 1$ assets $(d \leq p)$: one riskless asset given by

$$
d S^0_t=r_tS^0_tdt, \quad S^0_0=1; \quad 0\leq t\leq T,
$$
and a $d$-dimensional vector of risky assets $\bar{S}:=(\bar{S}^{1},\ldots,\bar{S}^{d})$ which satisfies the following stochastic differential equation

$$d \bar{S}_t = \diag(\bar{S}_t) \left( b_t dt + \sigma_tdB^S_t \right),\quad \bar{S}_0 = \bar{x}\in \mathbb{R}^d; \quad 0\leq t\leq T.$$

Here, the real-valued interest rate process $r = \{r_t;0\leq t\leq T\}$, the vector of mean rates of return $b := \{b_t=(b^{1}_t,\ldots,b^{d}_t);0\leq t\leq T\}$ and the volatility matrix $\sigma := \{\sigma_t=(\sigma^{ij}_t); 1\leq i\leq d,1\le j\le d,~0\leq t\leq T\}$ are assumed to be predictable and they satisfy the standard assumptions in such way that both $S^0$ and $\bar{S}$ are well-defined positive semimartingales. We also assume that the volatility matrix $\sigma$ is non-singular for almost all $(t, \omega)\in [0,T]\times \Omega$. The discounted price $S := \{S_i:=\bar{S}^i / S^0; i=1,\ldots, d\}$ follows

$$d S_t = \diag(S_t) \left[( b_t - r_t \textbf{1}_d) dt + \sigma_tdB^S_t \right];\quad S_0 = x\in \mathbb{R}^d,~0\leq t\leq T,$$
where $\textbf{1}_d$ is a d-dimensional vector with every component equal to $1$. The market price of risk is given by




$$\psi_t:= \sigma^{-1}_t \left[ b_t - r_t \textbf{1}_d  \right], \quad 0 \leq t \leq T,$$
where we assume
$$\int_{0}^T \| \psi_u\|_{\mathbb{R}^d}^2 du < \infty~a.s.$$

In the sequel, $\mathcal{M}^{e}$ denotes the set of $\mathbb{P}$-equivalent probability measures $\mathbb{Q}$ such that the respective Radon-Nikodym derivative process is a $\mathbb{P}-$martingale and the discounted price $S$ is a $\mathbb{Q}$-local martingale. Throughout this paper, we assume that $\mathcal{M}^e \neq \emptyset$. In our setup, it is well known that $\mathcal{M}^{e}$ is given by the subset of probability measures with Radon-Nikodym derivatives of the form

$$\frac{d\mathbb{Q}}{d\mathbb{P}} := \exp \left[ - \int_{0}^T \psi_u dB^S_u - \int_{0}^T \nu_udB^I_u - \frac{1}{2} \int_{0}^T \big\{\|\psi_u\|^2_{\mathbb{R}^d} + \|\nu_u\|^2_{\mathbb{R}^{p-d}}\big\} du \right],$$
for some $\mathbb{R}^{p-d}$-valued adapted process $\nu$ such that $\int_{0}^T \| \nu_t\|^2_{\mathbb{R}^{p-d}} dt < \infty$ a.s.

\

\noindent \textbf{Example}: The typical example studied in the literature is the following one-dimensional stochastic volatility model

\begin{equation}\label{svmexample}
\left\{\begin{array}{l}
    \displaystyle dS_t = S_t\mu(t,S_t,\sigma_t)dt + S_t\sigma_tdY_t^{(1)}\\
    \displaystyle d\sigma_t^2 = a(t,S_t,\sigma_t)dt + b(t,S_t, \sigma_t)dY^{(2)}_t;~0\le t\le T,
\end{array}\right.
\end{equation}
where $Y^{(1)}$ and $Y^{(2)} $ are correlated Brownian motions with correlation $\rho \in [-1,1]$, $\mu, a$ and $b$ are suitable functions such that $(S, \sigma^2)$ is a well-defined two-dimensional Markov process. All continuous stochastic volatility model commonly used in practice fit into the specification \eqref{svmexample}. In this case, $p=2>d=1$ and we recall that the market is incomplete where the set $\mathcal{M}^e$ is infinity. The dynamic hedging procedure turns out to be quite challenging due to extrinsic randomness generated by the non-tradeable volatility, specially w.r.t to exotic options.


\subsection{GKW Decomposition}
In the sequel, we take $\mathbb{Q} \in \mathcal{M}^{e}$ and we set $W^S:=(W^{(1)},\ldots,W^{(d)})$ and $W^I:=(W^{(d+1)},\ldots, W^{(p)})$ where

\begin{equation}\label{W}
W^{(j)}_t:=  \left\{
\begin{array}{ll}
  \displaystyle B^{(j)}_t + \int_{0}^t \psi^{j}_u du, & j=1, \ldots , d  \\
  \displaystyle B^{(j)}_t + \int_{0}^t \nu^{j}_u du, & j=d+1, \ldots , p;~0\le t\le T,
\end{array}
\right.
\end{equation}
is a standard $p$-dimensional Brownian motion under the measure $\mathbb{Q}$ and filtration $\mathbb{F}:= \{ \mathcal{F}_t;0\le t\le T \}$ generated by $W= (W^{(1)} , \ldots , W^{(p)})$. In what follows, we fix a discounted contingent claim $H$. Recall that the filtration $\mathbb{F}$ is contained in $\mathbf{F}$, but it is not necessarily equal. In the remainder of this article, we assume the following hypothesis.

\

\noindent \textbf{(M)} The contingent claim $H$ is also $\mathcal{F}_T$-measurable.

\
\begin{remark}
Assumption~\textbf{(M)} is essential for the approach taken in this work because the whole algorithm is based on the information generated by the Brownian motion $W$ (defined under the measure $\mathbb{Q}$ and filtration $\mathbb{F}$). As long as the \textit{numeraire} is deterministic, this hypothesis is satisfied for any stochastic volatility model of the form \eqref{svmexample} and a payoff $\Phi (S_t; 0\le t\le T)$ where $\Phi:\mathcal{C}_T\rightarrow \mathbb{R}$ is a Borel map and $\mathcal{C}_T$ is the usual space of continuous paths on $[0,T]$. Hence, \textbf{(M)} holds for a very large class of examples founded in practice.
\end{remark}

For a given $\mathbb{Q}$-square integrable claim $H$, the Brownian martingale representation (computed in terms of $(\mathbb{F},\mathbb{Q})$) yields


$$
H = \mathbb{E}_{{\mathbb{Q}}}[H] + \int_{0}^T \phi^{H, \mathbb{Q}}_u dW_u,
$$
where $\phi^{H, \mathbb{Q}}: = (\phi^{H,\mathbb{Q},1},\ldots,\phi^{H,\mathbb{Q},p} )$ is a $p$-dimensional $\mathbb{F}$-predictable process. In what follows, we set $\phi^{H,\mathbb{Q} , S}: = ( \phi^{H,\mathbb{Q},1},\ldots,\phi^{H,\mathbb{Q},d}  )$, $\phi^{H, \mathbb{Q}, I}: = (\phi^{H,\mathbb{Q},d+1},\ldots,\phi^{H,\mathbb{Q},p} )$ and

\begin{equation}\label{cost}
L^{H, \mathbb{Q}}_t:=\int_{0}^t \phi^{H,\mathbb{Q} , I}_u dW^I_u,\quad \hat{V}_t:=\mathbb{E}_{\mathbb{Q}}[H|\mathcal{F}_t];~0\le t\le T.
\end{equation}
The discounted stock price process has the following $\mathbb{Q}$-dynamics

$$
d S_t =\diag(S_t) \sigma_td W^S_t, \quad S_0 = x, ~  0\leq t\leq T,
$$
and therefore the $\mathbb{Q}$-GKW decomposition for the pair of locally square integrable local martingales $(\hat{V},S)$ is given by

\begin{align}\label{pfsg}
\hat{V}_t & = \mathbb{E}_{\mathbb{Q}}[H] + \int_{0}^t \phi^{H,\mathbb{Q} , S}_u d W^S_u + L^{H, \mathbb{Q}}_t \nonumber\\
          & = \mathbb{E}_{\mathbb{Q}}[H] + \int_{0}^t \theta^{H,\mathbb{Q}}_u d S_u + L^{H, \mathbb{Q}}_t;\quad 0\le t\le T,
\end{align}


\noindent where
\begin{equation}\label{keyprocess}
\theta^{H,\mathbb{Q}}:=\phi^{H,\mathbb{Q},S} \left[\diag(S) \sigma \right]^{-1}.
\end{equation}
The $p$-dimensional process $\phi^{H,\mathbb{Q}}$ which constitutes \eqref{cost} and \eqref{keyprocess} plays a major role in several types of hedging strategies in incomplete markets and it will be our main object of study.

\begin{remark}
If we set $\nu^j= 0$ for $j=d+1,\ldots,p$ and the correspondent density process is a martingale then the resulting minimal martingale measure $\hat{\mathbb{P}}$ yields a GKW decomposition where $L^{H,\hat{\mathbb{P}}}$ is still a $\mathbb{P}$-local martingale orthogonal to the martingale component of $S$ under $\mathbb{P}$. In this case, it is also natural to implement a pure hedging strategy based on $\theta^{H,\hat{\mathbb{P}}}$ regardless the existence of the F\"{o}llmer-Schweizer decomposition. If this is the case, this hedging strategy can be based on the generalized F\"{o}llmer-Schweizer decomposition~(see e.g Th.9 in~\cite{Schweizer1995}).
\end{remark}

\section{The Random Skeleton and Weak Approximations for GKW Decompositions}\label{capitulo:aproximacao}

In this section, we provide the fundamentals of the numerical algorithm of this article for the obtention of hedging strategies in complete and incomplete markets.



\subsection{The Multidimensional Random Skeleton}\label{bo}
At first, we fix once and for all $\mathbb{Q}\in \mathcal{M}^e$ and a $\mathbb{Q}$-square-integrable contingent claim $H$ satisfying~\textbf{(M)}. In the remainder of this section, we are going to fix a $\mathbb{Q}$-Brownian motion $W$ and with a slight abuse of notation all $\mathbb{Q}$-expectations will be denoted by $\mathbb{E}$. The choice of $\mathbb{Q}\in \mathcal{M}^e$ is dictated by the pricing and hedging method used by the trader.

In the sequel, $[\cdot,\cdot]$ denotes the usual quadratic variation between semimartingales and the usual jump of a process is denoted by $\Delta Y_t=Y_t-Y_{t-}$ where $Y_{t-}$ is the left-hand limit of a cadlag process $Y$. For a pair $(a,b)\in \mathbb{R}^2$, we denote $a\vee b:=\max\{a,b\}$ and $a\wedge b:=\min \{a,b\}$. Moreover, for any two stopping times $S$ and $J$, we denote the stochastic intervals $[[S,J [[:=\{(\omega,t); S(\omega) \le t < J(\omega) \}$, $[[S]]:=\{(\omega,t); S(\omega)=t \}$ and so on. Throughout this article, $Leb$ denotes the Lebesgue measure on the interval $[0,T]$.

For a fixed positive integer $k$ and for each $j = 1, 2, \ldots, p$ we define $T^{k,j}_0 := 0$ a.s. and

\begin{equation}\label{stopping_times}
T^{k,j}_n := \inf\{T^{k,j}_{n-1}< t <\infty;  |W^{(j)}_t - W^{(j)}_{T^{k,j}_{n-1}}| = 2^{-k}\}, \quad n \ge 1,
\end{equation}
where $W:=(W^{(1)},\ldots,W^{(p)})$ is the $p$-dimensional $\mathbb{Q}$-Brownian motion as defined in \eqref{W}.

For each $j\in \{1,\ldots,p \}$, the family $(T^{k,j}_n)_{n\ge 0}$ is a sequence of $\mathbb{F}$-stopping times where the increments $\{T^{k,j}_n - T^{k,j}_{n-1}; n\ge 1\}$ is an i.i.d sequence with the same distribution as $T^{k,j}_1$. In the sequel, we define $A^k:=(A^{k,1},\ldots, A^{k,p})$ as the $p$-dimensional step process given componentwise by

\[
A^{k,j}_t := \sum_{n=1}^{\infty}2^{-k}\eta^{k,j}_n1\!\!1_{\{T^{k,j}_n\leq t \}};~0\le t\le T,
\]
where

\begin{equation}\label{sigmakn}
\eta^{k,j}_n:=\left\{
\begin{array}{rl}
1; & \hbox{if} \ W^{(j)}_{T^{k,j}_n} - W^{(j)}_{T^{k,j}_{n-1}} = 2^{-k} ~ ~ \mbox{and} ~ ~ T^{k,j}_n < \infty \\
-1;& \hbox{if} \ W^{(j)}_{T^{k,j}_n} - W^{(j)}_{T^{k,j}_{n-1}} = -2^{-k} ~ ~ \mbox{and}~ ~ T^{k,j}_n < \infty \\
0; & \hbox{if} \ T^{k,j}_n = \infty.
\end{array}
\right.
\end{equation}
for $k,n\ge 1$ and $j=1, \ldots , p$. We split $A^k$ into $(A^{S,k},A^{I,k})$ where $A^{S,k}$ is the $d$-dimensional process constituted by the first $d$ components of $A^k$ and $A^{I,k}$ the remainder $p-d$-dimensional process. Let $\mathbb{F}^{k,j} := \{ \mathcal{F}^{k,j}_t : 0 \leq t\le T \} $ be the natural filtration generated by $\{A^{k,j}_t; 0\leq t \le T\}$. One should notice that $\mathbb{F}^{k,j}$ is a discrete-type filtration in the sense that
\[
\mathcal{F}^{k,j}_t = \bigvee_{\ell=0}^{\infty} \Big( \mathcal{F}^{k,j}_{T^{k,j}_\ell} \cap \{T^{k,j}_{\ell} \le t < T^{k,j}_{\ell+1}\} \Big),~0\le t\le T,
\]
where $\mathcal{F}^{k,j}_0 = \{\Omega, \emptyset \}$ and $\mathcal{F}^{k,j}_{T^{k,j}_m}=\sigma(T^{k,j}_1, \ldots, T^{k,j}_m, \eta^{k,j}_1, \ldots, \eta^{k,j}_m)$ for $m\ge 1$ and $j=1,\ldots, p$. Here, $\bigvee$ denotes the smallest sigma-algebra generated by the union. One can easily check that $\mathcal{F}^{k,j}_{T^{k,j}_m} = \sigma(A^{k,j}_{s\wedge T^{k,j}_m}; s \ge 0)$ and hence

$$\mathcal{F}^{k,j}_{T^{k,j}_m}=\mathcal{F}^{k,j}_t~a.s~\text{on}~\big\{T^{k,j}_m \le t < T^{k,j}_{m+1}\big\}.$$
With a slight abuse of notation we write $\mathcal{F}^{k,j}_t$ to denote its $\mathbb{Q}$-augmentation satisfying the usual conditions.

Let us now introduce the multidimensional filtration generated by $A^k$. Let us consider $\mathbb{F}^k := \{\mathcal{F}^k_t ; 0 \leq t \leq T\}$ where $\mathcal{F}^k_t := \mathcal{F}^{k,1}_t\otimes\mathcal{F}^{k,2}_t\otimes\cdots\otimes\mathcal{F}^{k,p}_t$ for $0\le t\le T$. Let $\mathcal{T}^k:=\{T^k_m; m\ge 0\}$ be the order statistics obtained from the family of random variables $\{T^{k,j}_\ell; \ell\ge 0 ;j=1,\ldots,p\}$. That is, we set $T^k_0:=0$,

\begin{equation}\label{difst}
T^k_1:= \inf_{\substack {1\le j\le p\\ m\ge 1} }\Big\{T^{k,j}_m \Big\},\quad T^k_n:= \inf_{\substack {1\le j\le p\\ m\ge 1} } \Big\{T^{k,j}_m ; T^{k,j}_m \ge T^{k}_1 \vee \ldots \vee T^k_{n-1}\Big\}
\end{equation}
for $n\ge 1$. In this case, $\mathcal{T}^k$ is the partition generated by all stopping times defined in \eqref{stopping_times}. The finite-dimensional distribution of $W^{(j)}$ is absolutely continuous for each $j=1,\ldots,p$ and therefore the elements of $\mathcal{T}^k$ are almost surely distinct for every $k\ge 1$. Moreover, the following result holds true.

\begin{lemma}\label{st}
For every $k\ge 1$, the set $\mathcal{T}^k$ is an exhaustive sequence of $\mathbb{F}^k$-stopping times such that $\sup_{n\ge 1}|T^k_n-T^k_{n-1}|\rightarrow 0$ in probability as $k\rightarrow \infty$.
\end{lemma}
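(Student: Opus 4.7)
The plan is to prove the lemma in three steps: stopping time property, exhaustiveness, and mesh convergence. The first two are straightforward bookkeeping; the mesh convergence is the main content.

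First, for the stopping time property, each $T^{k,j}_m$ is a hitting time of the continuous $\mathbb{F}^{k,j}$-adapted process $W^{(j)}$ at a constant distance from an $\mathcal{F}^{k,j}_{T^{k,j}_{m-1}}$-measurable value, hence an $\mathbb{F}^{k,j}$-stopping time; since $\mathbb{F}^{k,j}\subset \mathbb{F}^k$, it is also an $\mathbb{F}^k$-stopping time. The recursive definition of $T^k_n$ as an infimum over a countable family of $\mathbb{F}^k$-stopping times subject to the constraint of exceeding the already-defined ones again yields an $\mathbb{F}^k$-stopping time. For exhaustiveness, fix $j$: the increments $T^{k,j}_n-T^{k,j}_{n-1}$ are i.i.d. copies of the first exit time of a one-dimensional Brownian motion from $[-2^{-k},2^{-k}]$, hence strictly positive a.s., so by the strong law of large numbers $T^{k,j}_n\to\infty$ a.s. as $n\to\infty$. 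The merged order statistics $T^k_n$ therefore also diverge, proving that $\mathcal{T}^k$ is exhaustive.

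The crux is the mesh convergence. The first ingredient is a purely combinatorial observation: since $T^k_{n-1}$ is an order statistic, we have $T^k_{n-1}=T^{k,j}_m$ for some (a.s.~unique) pair $(j,m)$, and then $T^{k,j}_{m+1}$ is a valid candidate for $T^k_n$, so $T^k_n\le T^{k,j}_{m+1}$. This yields the domination
\[
\sup_{n:\,T^k_{n-1}\le T}(T^k_n-T^k_{n-1})\ \le\ \max_{1\le j\le p}\ \sup_{m:\,T^{k,j}_m\le T}\bigl(T^{k,j}_{m+1}-T^{k,j}_m\bigr),
\]
reducing the problem to the mesh of each single-component grid. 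The second ingredient is an anti-concentration estimate for Brownian paths. Fix $\delta>0$. If $T^{k,j}_{m+1}-T^{k,j}_m>\delta$ for some $m$ with $T^{k,j}_m\le T$, then by the very definition of $T^{k,j}_{m+1}$ the path $W^{(j)}$ does not leave the open strip of radius $2^{-k}$ around $W^{(j)}_{T^{k,j}_m}$ throughout $[T^{k,j}_m,T^{k,j}_m+\delta]$. Setting
\[
\alpha_{j,\delta}\ :=\ \inf_{0\le s\le T}\ \sup_{s\le t\le s+\delta}\bigl|W^{(j)}_t-W^{(j)}_s\bigr|,
\]
this forces $\alpha_{j,\delta}<2^{-k}$.

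To finish, I would argue that $\alpha_{j,\delta}>0$ a.s. Continuity of $W^{(j)}$ makes $s\mapsto \sup_{s\le t\le s+\delta}|W^{(j)}_t-W^{(j)}_s|$ continuous on the compact set $[0,T]$, so the infimum is attained. Since Brownian paths are a.s.~nowhere constant on any interval (a standard consequence of $W^{(j)}_b-W^{(j)}_a\ne 0$ a.s.~for each fixed rational pair $a<b$, applied to a countable dense family), the attained value is strictly positive almost surely. Therefore $\mathbb{Q}(\alpha_{j,\delta}<2^{-k})\to 0$ as $k\to\infty$, and a finite union bound over $j=1,\ldots,p$ yields the desired convergence in probability. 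The principal obstacle is precisely the anti-concentration step: one must justify that the \emph{uniform} lower bound $\alpha_{j,\delta}>0$ holds a.s., not merely the pointwise statement for each $s$, and one must cleanly combine this with the ordering inequality to dominate the multi-dimensional mesh by a maximum of finitely many one-dimensional meshes.
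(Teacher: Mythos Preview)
Your proposal is largely correct and takes a genuinely different route from the paper, but there is one slip in the stopping-time step that needs fixing.

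\textbf{The slip.} You assert that $W^{(j)}$ is $\mathbb{F}^{k,j}$-adapted. It is not: the filtration $\mathbb{F}^{k,j}$ is generated by the step process $A^{k,j}$, and between consecutive jump times the sigma-algebra $\mathcal{F}^{k,j}_t$ is constant (equal to $\mathcal{F}^{k,j}_{T^{k,j}_n}$), while $W^{(j)}_t$ has a continuous law and cannot be measurable with respect to this discrete sigma-algebra. The right justification is immediate once stated correctly: $T^{k,j}_m$ is the $m$-th jump time of the $\mathbb{F}^{k,j}$-adapted process $A^{k,j}$, hence an $\mathbb{F}^{k,j}$-stopping time by construction of that filtration. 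This repair does not affect the rest of your argument.

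\textbf{Comparison with the paper.} The paper allocates its effort in the opposite way from you. It spends almost the entire proof on the stopping-time property, via an explicit recursive construction: $T^k_1$ is identified as the first hitting time of $\|A^k_\cdot\|_{\mathbb{R}^p}$ at level $2^{-k}$, and each $T^k_n$ is obtained by shifting the step process past $T^k_{n-1}$ and repeating. Your order-statistic shortcut---$\{T^k_n\le t\}$ is the event that at least $n$ of the countably many $\mathbb{F}^k$-stopping times $T^{k,j}_m$ lie in $[0,t]$, a countable union of $\mathcal{F}^k_t$-sets---is shorter and equally valid once the $T^{k,j}_m$ are known to be $\mathbb{F}^k$-stopping times. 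Conversely, the paper disposes of the mesh convergence in a single line (the domination $\sup_n|T^k_n-T^k_{n-1}|\le \max_j\sup_n|T^{k,j}_n-T^{k,j}_{n-1}|$, then an appeal to the one-dimensional result), whereas you supply a self-contained anti-concentration argument via $\alpha_{j,\delta}>0$ a.s. Your route buys independence from the predecessor paper; the authors' route buys brevity. Both implicitly restrict the supremum to indices with $T^k_{n-1}\le T$, as indeed one must.
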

\begin{proof}
The following obvious estimate holds

\[
\sup_{n\ge 1}|T^k_n-T^k_{n-1}| \leq \max_{1\le j\le p}\sup_{n\ge 1}|T^{k,j}_n-T^{k,j}_{n-1}| \rightarrow 0,
\] in probability as $k\rightarrow \infty$ and $T^k_n \rightarrow \infty$ a.s as $n\rightarrow \infty$ for each $k\ge 1$. Let us now prove that $\mathcal{T}^k=\{T^k_n; n\ge 0 \}$ is a sequence of $\mathbb{F}^k$-stopping times. In order to show this, we write $(T^k_n)_{n\ge 0}$ in a different way. This sequence can be defined recursively as follows

\[
T^k_0 = 0, \quad T^k_1 = \inf \{ t > 0 ; \parallel A^k_t \parallel_{\Bbb{R}^p} = 2^{-k} \},
\] where $\parallel \cdot \parallel_{\Bbb{R}^p}$ denotes the $\Bbb{R}^p$-maximum norm. Therefore, $T^k_1$ is an $\mathbb{F}^k$-stopping time. Next, let us define a family of $\mathcal{F}^k_{T^k_1}$-random variables related to the index $j$ which realizes the hitting time $T^k_1$ as follows

\[
\ell^{k,j}_1 :=
\left\{ \begin{array}{l}
    0, \ \hbox{if} \  \mid A^{k,j}_{T^k_1} \mid \neq 2^{-k} \\ \\
    1, \ \hbox{if} \  \mid A^{k,j}_{T^k_1} \mid = 2^{-k},
\end{array}
\right.
\] for any $j=1, \ldots , p$. Then, we shift $A^k$ as follows

\[
\tilde{A}^k_1 (t) := \left( \tilde{A}^{k,1}_1(t) := A^{k,1} (t + T^k_1) - A^{k,1} (T^k_{\ell^{k,1}_1}) ; \ldots ; \tilde{A}^{k,p}_1(t) := A^{k,p} (t + T^k_1) - A^{k,p} (T^k_{\ell^{k,p}_1})\right),
\] for $t \geq 0$. In this case, we conclude that $\tilde{A}^k_1$ is adapted to the filtration $\{\mathcal{F}^k_{t + T^k_1}; t \geq 0\}$, the hitting time

\[
S^k_2 := \inf \{ t > 0; \parallel \tilde{A}^k_1 (t) \parallel_{\Bbb{R}^p} = 2^{-k} \}
\] is a $\{\mathcal{F}^k_{t + T^k_1}: t \geq 0\}$-stopping time and $T^k_2 = T^k_1 + S^k_2$ is a $\mathbb{F}^k$-stopping time. In the sequel, we define a family of $\mathcal{F}^k_{T^k_2}$-random variables related to the index $j$ which realizes the hitting time $T^k_2$ as follows

\[
\ell^{k,j}_2 :=
\left\{ \begin{array}{l}
    0, \ \hbox{if} \  \mid \tilde{A}^{k,j}_1 (S^k_2) \mid \neq 2^{-k} \\
    2, \ \hbox{if} \  \mid \tilde{A}^{k,j}_1 (S^k_2) \mid = 2^{-k},
\end{array}
\right.
\]
for $j=1, \ldots , p$. If we denote $S^k_0 =0$, we shift $\tilde{A}^k_1$ as follows

\[
\tilde{A}^k_2 (t) := \left( \tilde{A}^{k,1}_2(t) := \tilde{A}^{k,1}_1 (t + S^k_2) - \tilde{A}^{k,1}_1 (S^k_{\ell^{k,1}_2}); \ldots ; \tilde{A}^{k,p}_2(t) = \tilde{A}^{k,p} (t + S^k_2) - \tilde{A}^{k,p} (S^k_{\ell^{k,p}_2})\right),
\] for every $t \geq 0$. In this case, we conclude that $\tilde{A}^k_2$ is adapted to the filtration $\{\mathcal{F}^k_{t + T^k_2}; t \geq 0\}$, the hitting time

\[
S^k_3 = \inf \{ t > 0 ; \parallel \tilde{A}^k_2 (t) \parallel_{\Bbb{R}^p} = 2^{-k} \}
\] is an $\{\mathcal{F}^k_{t + T^k_2}; t \geq 0\}$-stopping time and $T^k_3 = T^k_2 + S^k_3$ is a $\mathbb{F}^k$-stopping time. By induction, we conclude that $(T^k_n)_{n\ge 0}$ is a sequence of $\mathbb{F}^k$-stopping times.
\end{proof}

With Lemma~\ref{st} at hand, we notice that the filtration $\mathbb{F}^k$ is a discrete-type filtration in the sense that

$$\mathcal{F}^k_{T^k_n}=\mathcal{F}^k_t~a.s~\text{on}~\{T^k_n \le t < T^k_{n+1} \},$$
for $k\ge 1$ and $n\ge 0$. It\^o representation theorem yields

$$
\mathbb{E}[H|\mathcal{F}_t] = \mathbb{E}[H] + \int_0^t \phi^H_u dW_u; \quad 0\le t\le T,
$$
where $\phi^H$ is a $p$-dimensional $\mathbb{F}$-predictable process such that
$$
\mathbb{E}\int_0^T\|\phi^H_t\|^2_{\mathbb{R}^p}dt<\infty.
$$
The payoff $H$ induces the $\mathbb{Q}$-square-integrable $\mathbb{F}$-martingale $X_t:=\mathbb{E}[H|\mathcal{F}_t];~0\le t\le T$. We now embed the process $X$ into the quasi left-continuous filtration $\mathbb{F}^{k}$ by means of the following operator

$$
\delta^{k}X_t:= X_0 + \sum_{m=1}^\infty \mathbb{E}\big[X_{T^k_m}|\mathcal{F}^k_{T^k_m}\big] 1\!\!1_{\{T^k_m\leq t< T^k_{m+1}\}};~0\le t\le T.
$$
Since $X$ is a $\mathbb{F}$-martingale, then the usual optional stopping theorem yields the representation
\[
\delta^{k}X_t=\mathbb{E}[X_T|\mathcal{F}^k_t] = \mathbb{E}[H|\mathcal{F}^k_t], \quad 0\le t\le T.
\]
Therefore, $\delta^{k}X$ is indeed a $\mathbb{Q}$-square-integrable $\mathbb{F}^k$-martingale and we shall write it as

\begin{align}\label{deltaXdef}
\delta^kX_t & = X_0 + \sum_{m=1}^\infty  \Delta \delta^{k} X_{T^k_m} 1\!\!1_{\{T^k_m \leq t\}}
=X_0 + \sum_{j=1}^p\sum_{n=1}^\infty  \Delta \delta^{k} X_{T^{k,j}_n} 1\!\!1_{\{T^{k,j}_n \leq t\}} \nonumber \\
& = X_0 + \sum_{j=1}^{p} \sum_{\ell=1}^\infty  \frac{\Delta \delta^{k}X_{T^{k,j}_\ell}}{\Delta A^{k,j}_{T^{k,j}_\ell}} \Delta A^{k,j}_{T^{k,j}_\ell} 1\!\!1_{\{T^{k,j}_\ell \le t\}} = X_0 + \sum_{j=1}^{p} \int_{0}^t \mathcal{D}^j\delta^{k} X_u dA^{k,j}_u,
\end{align}
where
$$
\mathcal{D}^j \delta^{k} X :=  \sum_{\ell=1}^{\infty} \frac{\Delta \delta^{k}X_{T^{k,j}_\ell}}{\Delta A^{k,j}_{T^{k,j}_\ell}} 1\!\!1_{[[T^{k,j}_\ell,T^{k,j}_\ell ]]},
$$
and the integral in \eqref{deltaXdef} is computed in the Lebesgue-Stieltjes sense.
\begin{remark}\label{delcon}
Similar to the univariate case, one can easily check that $\mathbb{F}^k\rightarrow \mathbb{F}$ weakly and since $X$ has continuous paths then $\delta^kX\rightarrow X$ uniformly in probability as $k\rightarrow \infty$. See Remark 2.1 in~\cite{LEAO_OHASHI09}.
\end{remark}
Based on the Dirac process $\mathcal{D}^j\delta^k X$, we denote

$$
\mathbb{D}^{k,j} X :=\sum_{\ell=1}^{\infty} \mathcal{D}^j_{T^{k,j}_\ell} \delta^{k} X1\!\!1_{[[T^{k,j}_{\ell}, T^{k,j}_{\ell + 1}[[},~k\ge 1, j=1,\ldots, p.
$$

In order to work with non-antecipative hedging strategies, let us now define a suitable $\mathbb{F}^k$-predictable version of $\mathbb{D}^{k,j}X$ as follows

\[
\mathbf{D}^{k,j}X:= 01\!\!1_{[[0]]} + \sum_{n=1}^{\infty}\mathbb{E}\big[\mathbb{D}^{k,j}X_{T^{k,j}_n}| \mathcal{F}^{k}_{T^{k,j}_{n-1}}\big]1\!\!1_{ ]]T^{k,j}_{n-1}, T^{k,j}_n]]}; k\ge 1, j=1,\ldots, d.
\]
One can check that $\mathbf{D}^{k,j}X$ is $\mathbb{F}^k$-predictable. See e.g~\cite{He92}, Ch.5 for details.

\

\noindent \textbf{Example}: Let $H$ be a contingent claim satisfying \textbf{(M)}. Then
for a given $j=1,\dots, p$, we have

\begin{equation}\label{exder}
\mathbf{D}^{k,j} X_{t} = \mathbb{E} \sum_{\ell=1}^{\infty} \Bigg[\frac{\mathbb{E}\big[H\big| \mathcal{F}^k_{T^{k}_\ell} \big] - \mathbb{E}\big[H\big|\mathcal{F}^k_{T^{k}_{\ell-1}}\big]}{W^{(j)}_{T^{k,j}_1} -W^{(j)}_{T^{k,j}_0} }\Bigg] 1\!\!1_{ \{ T^{k,j}_1 = T^k_{\ell}  \} }, \quad 0 < t \leq T^{k,j}_1.
\end{equation}
One should notice that~(\ref{exder}) is reminiscent from the usual delta-hedging strategy but the price is shifted on the level of the sigma-algebras jointly with the increments of the driving Brownian motion instead of the pure spot price. For instance, in the one-dimensional case $(p=d=1)$, we have

$$
\mathbf{D}^{k,1} X_{t} = \mathbb{E} \Bigg[\frac{\mathbb{E}\big[H\big| \mathcal{F}^k_{T^{k,1}_1} \big] - \mathbb{E}[H]}{W^{(1)}_{T^{k,1}_1} -W^{(1)}_{T^{k,1}_0} }\Bigg], \quad 0 < t \leq T^{k,1}_1,
$$
and hence a natural procedure to approximate pure hedging strategies is to look at $\mathbf{D}^{k,1}X_{T^{k,1}_1} / S_0\sigma_0$ at time zero. In the incomplete market case, additional randomness from e.g stochastic volatilities are encoded by $\mathbb{E}[H|\mathcal{F}^{k}_{T^{k}_1}]$ where $T^{k}_1$ is determined not only by the hitting times coming from the risky asset prices but also by possibly Brownian motion hitting times coming from stochastic volatility.

In the next sections, we will construct feasible approximations for the gain and cost processes based on the ratios~(\ref{exder}). We will see that hedging ratios of the form~(\ref{exder}) will be the key ingredient to recover the gain process in full generality.

\

\subsection{Weak approximation for the hedging process}
Based on \eqref{cost}, \eqref{pfsg} and \eqref{keyprocess}, let us denote

\begin{equation}\label{GHedge_Strategy}
\theta^H_t := \phi^{H,S}_t \left[\diag(S_t)\sigma_t \right]^{-1}\quad \mbox{and} \quad L^H_t:= \mathbb{E}[H]+ \int_0^t\phi^{H,I}_\ell dW^I_\ell;~0\le t\le T.
\end{equation}
In order to shorten notation, we do not write $(\phi^{H,\mathbb{Q},S},\phi^{H,\mathbb{Q},I})$ in \eqref{GHedge_Strategy}. The main goal of this section is the obtention of bounded variation martingale weak approximations for both the gain and cost processes, given respectively, by

$$\int_{0}^t \theta^{H}_u d S_u,\quad L^H_t;~0\le t\le T.$$
We assume the trader has some knowledge of the underlying volatility so that the obtention of $\phi^{H,S}$ will be sufficient to recover $\theta^{H}$. The typical example we have in mind are generalized F\"{o}llmer-Schweizer decompositions, locally-risk minimizing and mean variance strategies as explained in the Introduction. The scheme will be very constructive in such way that all the elements of our approximation will be amenable to a feasible numerical analysis. Under very mild integrability conditions, the weak approximations for the gain process will be translated into the physical measure.

\

\noindent \textit{The weak topology}. In order to obtain approximation results under full generality, it is important to consider a topology which is flexible to deal with nonsmooth hedging strategies $\theta^H$ for possibly non-Markovian payoffs $H$ and at the same time justifies Monte Carlo procedures. In the sequel, we make use of the weak topology $\sigma(\text{B}^p,\text{M}^q)$ of the Banach space $\text{B}^p(\mathbb{F})$ constituted by $\mathbb{F}$-optional processes $Y$ such that

$$\mathbb{E}|Y^*_T|^p < \infty,$$
where $Y^*_T:=\sup_{0\le t\le T}|Y_t|$ and $1\le p, q<\infty$ such that $\frac{1}{p} +\frac{1}{q}=1$. The subspace of the square-integrable $\mathbb{F}$-martingales will be denoted by $\text{H}^2(\mathbb{F})$. It will be also useful to work with $\sigma(\text{B}^1,\Lambda^\infty)$-topology given in~\cite{LEAO_OHASHI09}. For more details about these topologies, we refer to the works~\cite{dellacherie,dellacherie1978,LEAO_OHASHI09}. It turns out that $\sigma(\text{B}^2,\text{M}^2)$ and $\sigma(\text{B}^1,\Lambda^\infty)$ are very natural notions to deal with generic square-integrable random variables as described in~\cite{LEAO_OHASHI09}.

In the sequel, we recall the following notion of covariation introduced in~\cite{LEAO_OHASHI09}.

\begin{definition}\label{candelta}
Let $\{Y^k;k\ge 1\}$ be a sequence of square-integrable $\mathbb{F}^k$-martingales. We say that $\{Y^k; k\ge 1\}$ has $\delta$-covariation w.r.t jth component of $A^k$ if the limit

$$\lim_{k\rightarrow \infty} [Y^k, A^{k,j}]_t$$
exists weakly in $L^1(\mathbb{Q})$ for every $t\in [0,T]$.
\end{definition}

\begin{lemma}\label{intres}
Let $\Big\{ Y^{k,j} = \int_0^\cdot H^{k,j}_sdA^{k,j}; k\ge 1, j=1,\ldots, p  \Big\}$ be a sequence of stochastic integrals and $Y^k:=\sum_{j=1}^pY^{k,j}$. Assume that
$$
\sup_{k\ge 1} \mathbb{E}[Y^{k}, Y^k]_T < \infty.
$$
Then $Y^j:=\lim_{k\rightarrow \infty}Y^{k,j}$ exists weakly in $\text{B}^2(\mathbb{F})$ for each $j=1,\ldots,p$ with $Y^j\in\text{H}^2(\mathbb{F})$~if, and only if, $\{Y^{k};k\ge 1\}$~admits $\delta$-covariation w.r.t~jth component of $A^k$. In this case,

$$\lim_{k\rightarrow \infty}[Y^{k},A^{k,j}]_t =\lim_{k\rightarrow \infty}[Y^{k,j},A^{k,j}]_t= [Y^j,W^{(j)}]_t\quad\text{weakly in}~L^1;~t\in [0,T]$$
for $j=1,\ldots, p$.
\end{lemma}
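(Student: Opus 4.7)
The plan is to reduce the multidimensional statement to its componentwise univariate counterpart already available in \cite{LEAO_OHASHI09}, by exploiting an orthogonality across the $p$ components. First, I would observe that the a.s.\ distinctness of the elements of $\mathcal{T}^k$ (justified in the discussion preceding Lemma~\ref{st} via absolute continuity of the $W^{(j)}$-hitting times) implies that the jumps of $A^{k,i}$ and $A^{k,j}$ for $i\neq j$ occur at disjoint random times. Consequently $[A^{k,i},A^{k,j}]\equiv 0$ for $i\neq j$, and since each $Y^{k,j}$ is a pure-jump martingale whose jumps are supported on the jump set of $A^{k,j}$, one obtains $[Y^{k,i},Y^{k,j}]\equiv 0$ and $[Y^{k,i},A^{k,j}]\equiv 0$ for $i\neq j$. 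In particular,
\[
[Y^k, Y^k] = \sum_{j=1}^p [Y^{k,j}, Y^{k,j}], \qquad [Y^k, A^{k,j}] = [Y^{k,j}, A^{k,j}],
\]
so the existence of the $\delta$-covariation of $\{Y^k\}$ w.r.t.\ the $j$-th component of $A^k$ is \emph{literally} the same condition as the existence of the weak $L^1$-limit of $[Y^{k,j},A^{k,j}]_t$, which will also justify the last displayed identity in the statement.

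Next I would transfer the uniform bound: $\sup_k\mathbb{E}[Y^k,Y^k]_T<\infty$ together with the decomposition above yields $\sup_k\mathbb{E}[Y^{k,j},Y^{k,j}]_T<\infty$ for each $j$. By Doob's inequality applied to the square-integrable $\mathbb{F}^k$-martingale $Y^{k,j}$, this gives $\sup_k\mathbb{E}|(Y^{k,j})^*_T|^2<\infty$, so each sequence $\{Y^{k,j}\}_k$ is norm-bounded in $\text{B}^2(\mathbb{F})$ and therefore weakly relatively sequentially compact under $\sigma(\text{B}^2,\text{M}^2)$. Every weak subsequential limit is then a martingale with respect to $\mathbb{F}$, because $\mathbb{F}^k\to\mathbb{F}$ weakly (Remark~\ref{delcon}) and the martingale property passes to weak $\text{B}^2$-limits via testing against bounded $\text{M}^2$-elements of the form $(\mathbb{E}[\cdot\mid \mathcal{F}_t])$.

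With boundedness and orthogonality in place, the argument reduces per index $j$ to the univariate statement proved in \cite{LEAO_OHASHI09}: a sequence of stochastic integrals against $A^{k,j}$ which is bounded in $\text{B}^2$ converges weakly in $\text{B}^2(\mathbb{F})$ to some $Y^j\in\text{H}^2(\mathbb{F})$ if and only if $[Y^{k,j},A^{k,j}]_t$ converges weakly in $L^1$ for every $t\in[0,T]$, and in that case the limit is $[Y^j,W^{(j)}]_t$. Applying this coordinatewise, combined with the identity $[Y^k,A^{k,j}]=[Y^{k,j},A^{k,j}]$ from the first step, gives both directions of the equivalence and the limiting covariation formula.

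The main obstacle is the identification ``weak $\text{B}^2$-limit $Y^j$ carries $[Y^j,W^{(j)}]_t$ equal to the weak $L^1$-limit of $[Y^{k,j},A^{k,j}]_t$''. This requires interchanging a weak limit with a bilinear functional in a setting where the integrator $A^{k,j}$ converges to $W^{(j)}$ only uniformly in probability and the ambient filtrations $\mathbb{F}^k$ vary with $k$. My plan is to follow the univariate scheme of \cite{LEAO_OHASHI09}: pair both sides against bounded $\mathbb{F}$-martingales of the form $M_t=\mathbb{E}[Z\mid\mathcal{F}_t]$ with $Z\in L^\infty$, use the Kunita--Watanabe inequality together with the uniform $\text{B}^2$-bound to control remainders coming from the mismatch between $\mathbb{F}^k$ and $\mathbb{F}$, and exploit the coordinate-$j$ disjointness established in Step~1 to make sure that no cross terms from $i\neq j$ pollute the limit. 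Once this identification is granted, uniqueness of $[Y^j,W^{(j)}]$ (given the weak limit of $[Y^{k,j},A^{k,j}]_t$) forces all $\sigma(\text{B}^2,\text{M}^2)$-subsequential limits of $\{Y^{k,j}\}$ to coincide, upgrading subsequential convergence to full convergence and closing the ``if'' direction; the ``only if'' direction then reads off from the same pairing computation.
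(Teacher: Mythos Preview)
Your proposal is correct and follows essentially the same approach as the paper: the paper's proof is a one-line reference stating that the result follows from the arguments in Proposition~3.2 of \cite{LEAO_OHASHI09} together with the independence of the family $\{W^{(j)};1\le j\le p\}$, and your write-up is precisely the natural unpacking of that sentence---using the a.s.\ disjointness of the jump times of the $A^{k,j}$ to kill the cross terms $[Y^{k,i},A^{k,j}]$ and $[Y^{k,i},Y^{k,j}]$ for $i\neq j$, and then invoking the univariate equivalence coordinatewise.
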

\begin{proof}
The proof follows easily from the arguments given in the proof of Prop. 3.2 in~\cite{LEAO_OHASHI09} by using the fact that $\{W^{(j)}; 1\le j\le p\}$ is an independent family of Brownian motions, so we omit the details.
\end{proof}
In the sequel, we present a key asymptotic result for the numerical algorithm of this article.

\begin{theorem}\label{deltaj}
Let $H$ be a $\mathbb{Q}$-square integrable contingent claim satisfying~\textbf{(M)}. Then


\begin{equation}\label{convcomp}
\lim_{k\rightarrow \infty}\sum_{j=1}^d \int_0^\cdot\mathbf{D}_s^{k,j} XdA^{k,j}_s= \sum_{j=1}^d\int_0^\cdot \phi^{H,j}_udW^{(j)}_u = \int_0^\cdot \theta^H_udS_u,
\end{equation}
and
\begin{equation}\label{wlim1}
L^{H}=\lim_{k\rightarrow \infty}\sum_{j=d+1}^{p}\int_0^\cdot\mathbf{D}_s^{k,j} XdA^{k,j}_s
\end{equation}
weakly in $B^2(\mathbb{F})$. In particular,

\begin{equation}\label{optionalpr}
\lim_{k\rightarrow\infty} \mathbf{D}^{k,j} X=\phi^{H,j},
\end{equation}
weakly in~$L^2(Leb\times\mathbb{Q})$ for each $j=1,\ldots,p.$
\end{theorem}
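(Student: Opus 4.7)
I would apply Lemma \ref{intres} to the $\mathbb{F}^k$-martingales $Y^{k,j}:=\int_0^\cdot \mathbf{D}^{k,j}X_s\,dA^{k,j}_s$ for $j=1,\dots,p$, with $Y^k:=\sum_{j=1}^p Y^{k,j}$. Each $Y^{k,j}$ is a square-integrable $\mathbb{F}^k$-martingale because $\mathbf{D}^{k,j}X$ is $\mathbb{F}^k$-predictable and $A^{k,j}$ is a purely discontinuous $\mathbb{F}^k$-martingale (its martingale property in the enlarged filtration $\mathbb{F}^k$ uses the independence of the driving Brownian motions $W^{(1)},\ldots,W^{(p)}$). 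For $i\ne j$ the hitting times $T^{k,i}_n$ and $T^{k,j}_m$ are almost surely distinct because $W^{(i)}$ and $W^{(j)}$ are independent with continuous paths, so $[A^{k,i},A^{k,j}]=0$, whence $[Y^{k,i},Y^{k,j}]=0$ and $[Y^k,Y^k]=\sum_j[Y^{k,j},Y^{k,j}]$.

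\textbf{A priori bound (main obstacle).} The crux is to prove $\sup_k\mathbb{E}[Y^k,Y^k]_T<\infty$. Writing $\mathbb{E}[Y^{k,j},Y^{k,j}]_T=\mathbb{E}\int_0^T(\mathbf{D}^{k,j}X_s)^2\,d[A^{k,j}]_s$, I would combine the predictable-projection identity $\mathbf{D}^{k,j}X_{T^{k,j}_n}=\mathbb{E}\bigl[\mathbb{D}^{k,j}X_{T^{k,j}_n}\bigm|\mathcal{F}^k_{T^{k,j}_{n-1}}\bigr]$, the conditional Jensen inequality, and the pointwise identity $(\mathbb{D}^{k,j}X_{T^{k,j}_n})^2(\Delta A^{k,j}_{T^{k,j}_n})^2=(\Delta\delta^kX_{T^{k,j}_n})^2$ to dominate this expression by the contribution of the jumps of $\delta^kX$ arriving at the $j$th family of stopping times. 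Summing over $j$ should telescope into $\mathbb{E}[\delta^kX,\delta^kX]_T=\mathbb{E}(\delta^kX_T-X_0)^2$, which is uniformly bounded by $\mathbb{E}H^2<\infty$ since $\delta^kX=\mathbb{E}[H\mid\mathcal{F}^k_\cdot]$ is $L^2$-bounded uniformly in $k$. The technical subtlety is to handle the non-predictable indicator $\mathbf{1}_{\{T^{k,j}_n\le T\}}$ when moving $\mathbb{E}[\cdot\mid\mathcal{F}^k_{T^{k,j}_{n-1}}]$ across the sum; I would route this through the Lebesgue-Stieltjes representation against $d[A^{k,j}]$ and the disjointness of jumps across $j$.

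\textbf{Identification of the limit.} With the bound secured, Lemma \ref{intres} yields weak $\text{B}^2(\mathbb{F})$-convergence of each $Y^{k,j}$ to some $Y^j\in\text{H}^2(\mathbb{F})$ as soon as the $\delta$-covariations $\lim_k[Y^k,A^{k,j}]_t$ exist weakly in $L^1$. Cross-orthogonality reduces these to $[Y^{k,j},A^{k,j}]_t=\int_0^t\mathbf{D}^{k,j}X_s\,d[A^{k,j}]_s$. Since $[A^{k,j}]_t=2^{-2k}N^{k,j}_t$ is the rescaled counting process of the level-$2^{-k}$ hitting times of $W^{(j)}$ and converges to $t$ in $L^1$, and since the a priori bound of the previous paragraph gives $L^2(Leb\times\mathbb{Q})$-boundedness of $\mathbf{D}^{k,j}X$, any weak cluster point $\bar\phi^j$ of $\mathbf{D}^{k,j}X$ in $L^2(Leb\times\mathbb{Q})$ must satisfy $\int_0^t\bar\phi^j_s\,ds=[X,W^{(j)}]_t=\int_0^t\phi^{H,j}_s\,ds$ for all $t$, forcing $\bar\phi^j=\phi^{H,j}$ a.e.; this is \eqref{optionalpr}. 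Pairing instead with $A^{k,i}$ for $i\ne j$ yields $[Y^j,W^{(i)}]=0$, so uniqueness of the Brownian martingale representation identifies $Y^j=\int_0^\cdot\phi^{H,j}\,dW^{(j)}$.

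\textbf{Conclusion.} Summing over $j=1,\dots,d$ and invoking \eqref{GHedge_Strategy} together with $dS_t=\diag(S_t)\sigma_t\,dW^S_t$ gives \eqref{convcomp}; summing over $j=d+1,\dots,p$ gives \eqref{wlim1}. The arguments after Step 2 are of a standard martingale-representation nature; the real difficulty is concentrated in establishing the uniform $L^2$ estimate $\sup_k\mathbb{E}[Y^k,Y^k]_T<\infty$, which is where I expect most of the technical effort to lie.
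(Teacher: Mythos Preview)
Your overall strategy---verify the energy bound and invoke Lemma~\ref{intres}---matches the paper's, and your treatment of the a priori bound is essentially the paper's Step~2 (the ``trick'' you anticipate is precisely the rewriting in~\eqref{trick}). The gap is in your \emph{identification of the limit}. You assert that any weak $L^2(Leb\times\mathbb{Q})$ cluster point $\bar\phi^j$ of $\mathbf{D}^{k,j}X$ satisfies $\int_0^t\bar\phi^j_s\,ds=[X,W^{(j)}]_t$, but nothing in your outline connects $Y^{k,j}$, $\mathbf{D}^{k,j}X$, or $[Y^k,A^{k,j}]$ back to $X$. Weak $L^2$ compactness of $\mathbf{D}^{k,j}X$ together with $[A^{k,j}]_t\to t$ tells you, at best, that along subsequences $[Y^{k,j},A^{k,j}]_t=\int_0^t\mathbf{D}^{k,j}X_s\,d[A^{k,j}]_s$ is close to $\int_0^t\bar\phi^j_s\,ds$ (and even this step is delicate, since $d[A^{k,j}]$ is purely atomic while the weak limit is taken against Lebesgue measure); it does not identify that quantity as $[X,W^{(j)}]_t$. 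A secondary issue: the a priori bound controls $\mathbb{E}\int|\mathbf{D}^{k,j}X|^2\,d[A^{k,j}]$, not $\mathbb{E}\int|\mathbf{D}^{k,j}X|^2\,ds$, so the claimed $L^2(Leb\times\mathbb{Q})$-boundedness requires a separate argument (e.g.\ via independence of $T^{k,j}_n-T^{k,j}_{n-1}$ from $\mathcal{F}^k_{T^{k,j}_{n-1}}$).

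The paper supplies the missing link in two moves you have omitted. First (its Step~1) it exploits the \emph{identity} $\delta^kX=X_0+\sum_j\int_0^\cdot\mathbb{D}^{k,j}X\,dA^{k,j}$ together with $\delta^kX\to X$ in $\text{B}^2(\mathbb{F})$ (Remark~\ref{delcon}) to obtain $[\delta^kX,A^{k,j}]_t\to[X,W^{(j)}]_t$ weakly in $L^1$; this is where $X$ actually enters the picture. Second (its Step~3) it proves $\mathbb{E}\,g\,[M^k-\delta^kX,A^{k,j}]_t\to 0$ for bounded $g$, by splitting the test function into its $\mathbb{F}^k$-optional and $\mathbb{F}^k$-predictable projections and using path continuity of $X$ to kill the resulting boundary and increment terms. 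This step transfers the $\delta$-covariation from the optional integrand $\mathbb{D}^{k,j}X$ to the predictable one $\mathbf{D}^{k,j}X$; without it, or a substitute that genuinely ties $M^k=Y^k$ back to $X$, Lemma~\ref{intres} cannot be closed and neither \eqref{convcomp}--\eqref{wlim1} nor \eqref{optionalpr} follows from your outline.
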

\begin{proof}
We divide the proof into three steps. Throughout this proof $C$ is a generic constant which may defer from line to line.

\

\noindent \textbf{STEP1.}~ We claim that

\begin{equation}\label{f1}
\lim_{k\rightarrow \infty}\int_0^\cdot\mathbb{D}^{k,j}X_sdA^{k,j}_s = \int_0^\cdot\phi^{H,j}_udW^{(j)}_u\quad\text{weakly in}~\text{B}^2(\mathbb{F})
\end{equation}
for each $j=1,\ldots, p$. In order to prove~(\ref{f1}), we begin by noticing that Lemma~\ref{st} states that the elements of $\mathcal{T}^k$ are $\mathbb{F}$-stopping times. By assumption, $X$ is $\mathbb{Q}$-square integrable martingale and hence one may use similar arguments given in the proof of Lemma 3.1 in~\cite{LEAO_OHASHI09} to safely state that the following estimate holds

\begin{equation}\label{menergy}
\sup_{k\ge 1}\mathbb{E}[\delta^kX,\delta^kX]_T = \sup_{k\ge 1}\mathbb{E}\sum_{j=1}^p\int_0^T|\mathbb{D}^{k,j}X_s|^2d[A^{k,j}, A^{k,j}]_s   \le \sup_{k\ge 1}\mathbb{E}\sum_{m=1}^\infty (X_{T^k_m}-X_{T^k_{m-1}})^2 1\!\!1_{\{T^k_m \leq T\}} <  \infty.
\end{equation}

\

Now, we notice that the sequence $\mathbb{F}^k$ converges weakly to $\mathbb{F}$, $X$ is continuous and therefore $\delta^kX \rightarrow X$ uniformly in probability (see Remark~\ref{delcon}). Since $X\in \text{B}^2(\mathbb{F})$, then a simple application of Burkh\"{o}lder inequality allows us to state that $\delta^kX$ converges strongly in $\text{B}^1(\mathbb{F})$ and a routine argument based on the definition of the $\text{B}^2$-weak topology yields

\begin{equation}\label{f2}
\lim_{k\rightarrow \infty}\delta^k X=X~\text{weakly in}~\text{B}^2(\mathbb{F}).
\end{equation}
Now under~(\ref{f2}) and~(\ref{menergy}), we shall prove in the same way as in Prop.3.2 in~\cite{LEAO_OHASHI09} that


\begin{equation}\label{dx}
\lim_{k\rightarrow \infty}[\delta^kX, A^{k,j}]_t = [X, W^{(j)}]_t =\int_0^t \phi^{H,j}_u du;0\le t\le T,
\end{equation}
holds weakly in $L^1$ for each $t\in [0,T]$ and $j=1,\ldots, p$ due to the pairwise independence of $\{W^{(j)}; 1\le j\le p\}$. Summing up~(\ref{menergy}) and~(\ref{dx}), we shall apply Lemma~\ref{intres} to get~(\ref{f1}).

\

\noindent \textbf{\noindent \textbf{STEP 2.}} In the sequel, let $(\cdot)^{o,k}$ and $(\cdot)^{p,k}$ be the optional and predictable projections w.r.t $\mathbb{F}^{k}$, respectively. Let us consider the $\mathbb{F}^k$-martingales given by
$$M^k_t:=\sum_{j=1}^p M^{k,j}_t;~0\le t\le T,$$
where
$$M^{k,j}_t := \int_0^t\mathbf{D}^{k,j}X_sdA^{k,j}_s;~0\le t\le T,~j=1,\ldots, p.$$
We claim that $\sup_{k\ge 1}\mathbb{E}[M^k,M^k]_T < \infty$. One can check that $\mathbf{D}^{k,j}X_{T^{k,j}_n} = \Big(\mathbb{D}^{k,j}X\Big)^{p,k}_{T^{k,j}_n}$ a.s~ for each $n,k\ge 1$ and $j=1\ldots, p$ (see e.g~chap.5, section 5 in~\cite{He92}). Moreover, by the very definition

\begin{equation}\label{deltaA}
\{(t,\omega)\in [0,T]\times \Omega; \Delta [A^{k,j},A^{k,j}]_t(\omega) \neq 0\} = \bigcup_{n=1}^{\infty}[[T^{k,j}_n, T^{k,j}_n]].
\end{equation}
Therefore, Jensen inequality yields

\begin{eqnarray}
\nonumber \mathbb{E}[M^k,M^k]_T &=&\mathbb{E} \sum_{j=1}^p\int_0^T|\mathbf{D}^{k,j}X_s|^2d[A^{k,j},A^{k,j}]_s\\
\nonumber & &\\
\nonumber&=&\mathbb{E} \sum_{j=1}^p\int_0^T\Big|\Big(\mathbb{D}^{k,j}X\Big)_s^{p,k}\Big|^2d[A^{k,j},A^{k,j}]_s\\
\nonumber& &\\
\nonumber&\le&  \mathbb{E}\sum_{j=1}^p\int_0^T \Big( (\mathbb{D}^{k,j}X_s)^2\Big)^{p,k}_s d[ A^{k,j}, A^{k,j}]_s\\
\nonumber& &\\
\label{f3}&=& \sum_{j=1}^p\mathbb{E}\sum_{n=1}^\infty\mathbb{E}\big[(\mathbb{D}^{k,j}X_{T^{k,j}_{n}})^2 | \mathcal{F}^{k}_{T^{k,j}_{n-1}} \big]2^{-2k}1\!\!1_{ \{T_{n}^{k,j} \leq T  \} }:=J^{k},
\end{eqnarray}
where in~(\ref{f3}) we have used~(\ref{deltaA}) and the fact that $\Big((\mathbb{D}^{k,j}X)^2\Big)^{p,k}_{T^{k,j}_{n}} = \mathbb{E}\big[(\mathbb{D}^{k,j}X_{T^{k,j}_{n}})^2 | \mathcal{F}^{k}_{T^{k,j}_{n-1}} \big]$ a.s for each $n,k\ge 1$ and $j=1\ldots, p$.  We shall write~$J^{k}$ in a slightly different manner as follows

\begin{equation}\label{trick}
J^{k} = \sum_{j=1}^p\mathbb{E}\sum_{n=1}^\infty\mathbb{E}\Big[(\mathbb{D}^{k,j}X_{T^{k,j}_{n}})^2 | \mathcal{F}^{k}_{T^{k,j}_{n-1}} \Big]2^{-2k}1\!\!1_{ \{T_{n-1}^{k,j} \leq T  \} }
\end{equation}

$$ - \sum_{j=1}^p\mathbb{E}\Big[\mathbb{E}\big[(\mathbb{D}^{k,j}X_{T^{k,j}_{q}})^2
|\mathcal{F}^{k}_{T^{k,j}_{q-1}}\big]\Big]2^{-2k}1\!\!1_{ \{T_{q-1}^{k,j} \leq T < T^{k,j}_q  \} }$$

$$
 = \sum_{j=1}^p\mathbb{E}\sum_{n=1}^\infty|\Delta \delta^k X_{T^{k,j}_{n}}|^2 1\!\!1_{ \{T_{n-1}^{k,j} \leq T  \} }
$$

$$ - \sum_{j=1}^p\mathbb{E}\Big[\mathbb{E}\big[(\Delta\delta^k X_{T^{k,j}_{q}})^2
|\mathcal{F}^{k}_{T^{k,j}_{q-1}}\big]\Big]1\!\!1_{ \{T_{q-1}^{k,j} \leq T < T^{k,j}_q  \} }.$$
The above identities, the estimates~(\ref{menergy}),~(\ref{f3}) and Remark~\ref{delcon} yield

\begin{equation}\label{g1}
\limsup_{k\rightarrow \infty}\mathbb{E}[M^k,M^k]_T \le \limsup_{k\rightarrow\infty}J^k<\infty.
\end{equation}

\

\noindent \textbf{STEP 3}. We claim that for a given $g\in L^\infty$, $t\in [0,T]$ and $j=1\ldots, p$ we have

\begin{equation}\label{f4}
\lim_{k\rightarrow \infty}\mathbb{E}g[M^k-\delta^kX, A^{k,j}]_t = 0.
\end{equation}
By using the fact that $\mathbb{D}^{k,j}X$ is $\mathbb{F}^k$-optional and $\mathbf{D}^{k,j}X$ is $\mathbb{F}^k$-predictable, we shall use duality of the $\mathbb{F}^{k}$-optional projection to write

$$\mathbb{E}g[M^k-\delta^kX, A^{k,j}]_t = \mathbb{E}\int_0^t (g)^{o,k}_s \Big( \mathbf{D}^{k,j}X_s  - \mathbb{D}^{k,j}X_s\Big)d[A^{k,j},A^{k,j}]_s.$$
In order to prove~(\ref{f4}), let us check that

\begin{equation}\label{f5}
\lim_{k\rightarrow\infty}\mathbb{E}\int_0^t(g)^{p,k}_s\Big(  \mathbf{D}^{k,j}X_s  - \mathbb{D}^{k,j}X_s  \Big)d[A^{k,j},A^{k,j}]_s = 0,
\end{equation}
and

\begin{equation}\label{f6}
\lim_{k\rightarrow\infty}\mathbb{E}\int_0^t\big((g)^{o,k}_s - (g)^{p,k}_s\big)\Big(  \mathbf{D}^{k,j}X_s  - \mathbb{D}^{k,j}X_s  \Big)d[A^{k,j},A^{k,j}]_s = 0.
\end{equation}
The same trick we did in~(\ref{trick}) together with (\ref{deltaA}) yield

$$\mathbb{E}\int_0^t(g)^{p,k}_s\Big(  \mathbf{D}^{k,j}X_s  - \mathbb{D}^{k,j}X_s  \Big)d[A^{k,j},A^{k,j}]_s =
\mathbb{E}\Big[(g)^{p,k}_{T^{k,j}_q}\mathbb{D}^{k,j}X_{T^{k,j}_q}\Big]2^{-2k}1\!\!1_{ \{T_{q-1}^{k,j} \leq t < T^{k,j}_q  \}} $$

$$- \mathbb{E}  \Big[ \mathbb{E}\big[(g)^{p,k}_{T^{k,j}_q}\mathbb{D}^{k,j}X_{T^{k,j}_{q}} | \mathcal{F}^{k}_{T^{k,j}_{q-1}} \big] \Big]2^{-2k}1\!\!1_{ \{T_{q-1}^{k,j} \leq t < T^{k,j}_q  \} }\rightarrow0
$$
as $k\rightarrow \infty$ because $X$ has continuous paths~(see Remark~\ref{delcon}). This proves~(\ref{f5}). Now, in order to shorten notation let us denote $I^{k,j}$ by the expectation in~(\ref{f6}). Cauchy-Schwartz and Burkholder-Davis-Gundy inequalities jointly with~(\ref{g1}) and~(\ref{menergy}) yield

$$|I^{k,j}|\le \mathbb{E}^{1/2}\sup_{0< \ell\le T }|(g)^{o,k}_\ell  - (g)^{p,k}_\ell|^2$$
$$\times \Big\{ \mathbb{E}^{1/2}\int_0^T |  \mathbf{D}^{k,j}X_s  - \mathbb{D}^{k,j}X_s |^2 d[A^{k,j},A^{k,j}]_s \times \mathbb{E}^{1/2}[A^{k,j},A^{k,j}]_T     \Big\}^{1/2}$$

\begin{equation}\label{f7}
\le C \mathbb{E}^{1/2}\sup_{n\ge 1} |\mathbb{E}[g|\mathcal{F}^{k}_{T^{k,j}_n} ] - \mathbb{E}[g|\mathcal{F}^{k}_{T^{k,j}_{n-1}}]|^2 1\!\!1_{ \{T_{n}^{k,j} \leq T  \} }.
\end{equation}

We shall proceed similar to Lemma 4.1 in~\cite{LEAO_OHASHI09} to safely state that $\mathbb{E}^{1/2}\sup_{n\ge 1} |\mathbb{E}[g|\mathcal{F}^{k}_{T^{k,j}_n} ] - \mathbb{E}[g|\mathcal{F}^{k}_{T^{k,j}_{n-1}}]|^2 1\!\!1_{ \{T_{n}^{k,j} \leq T  \} }\rightarrow 0$ as $k\rightarrow\infty$ and from~(\ref{f7}) we conclude that (\ref{f6}) holds. Summing up Steps 1, 2 and 3, we shall use Lemma~\ref{intres} to conclude that~(\ref{convcomp}) and (\ref{wlim1}) hold true. It remains to show~(\ref{optionalpr}) but this is a straightforward consequence of (\ref{f4}) together with a similar argument given in the proof of Theorem 4.1 and Remark 4.2 in~\cite{LEAO_OHASHI09}, so we omit the details. This concludes the proof of the theorem.
\end{proof}



Stronger convergence results can be obtained under rather weak integrability and path smoothness assumptions for representations $(\phi^1,\ldots, \phi^p)$. We refer the reader to the Appendix for further details.

\section{Weak dynamic hedging}\label{wdhsection}
In this section, we apply Theorem~\ref{deltaj} for the formulation of a dynamic hedging strategy starting with a given GKW decomposition

\begin{equation}\label{gfs}
H =\mathbb{E}[H] + \int_0^T\theta^H_t dS_t + L^H_T,
\end{equation}
where $H$ is a $\mathbb{Q}$-square integrable European-type option satisfying \textbf{(M)} for a given $\mathbb{Q}\in \mathcal{M}^e$. The typical examples we have in mind are quadratic hedging strategies w.r.t a fully path-dependent option. We recall that when $\mathbb{Q}$ is the minimal martingale measure then~(\ref{gfs}) is the generalized F\"{o}llmer-Schweizer decomposition so that under some $\mathbb{P}$-square integrability conditions on the components of~(\ref{gfs}), $\theta^H$ is the locally risk minimizing hedging strategy~(see~e.g~\cite{Heath},~\cite{Schweizer1995}). In fact, GKW and F\"{o}llmer-Schweizer decompositions are essentially equivalent for the market model assumed in Section~\ref{capitulo:modelo}. We recall that decomposition~(\ref{gfs}) is not sufficient to fully describe mean variance hedging strategies but the additional component rests on the fundamental representation equations as described in Introduction. See also expression~(\ref{hobeq}) in Section~\ref{capitulo:resultados}.

For simplicity of exposition, we consider a financial market $(\Omega,\textbf{F},\mathbb{P})$ driven by a two-dimensional Brownian motion $B$ and a one-dimensional risky asset price process $S$ as described in Section~\ref{capitulo:modelo}. We stress that all results in this section hold for a general multidimensional setting with the obvious modifications.

In the sequel, we denote

$$\theta^{k,H}: = \sum_{n=1}^\infty \frac{\mathbf{D}^{k,1}X_{T^{k,1}_n}} {\sigma_{T^{k,1}_{n-1}}S_{T^{k,1}_{n-1}}}
1\!\!1_{[[T^{k,1}_{n-1}, T^{k,1}_n [[}$$
where $\mathbf{D}^{k,1}X_{T^{k,1}_n}=\mathbb{E}\Big[\mathbb{D}^{k,1}X_{T^{k,1}_n}| \mathcal{F}^{k}_{T^{k,1}_{n-1}}\Big]$ for $k,n\ge 1$.

\begin{corollary}\label{changeP}
For a given $\mathbb{Q}\in\mathcal{M}^e$, let $H$ be a $\mathbb{Q}$-square integrable claim satisfying~\textbf{(M)}. Let

$$
H = \mathbb{E}[H] + \int_0^T\theta^H_tdS_t + L^H_T
$$
be the correspondent GKW decomposition under $\mathbb{Q}$. If $\frac{d\mathbb{P}}{d\mathbb{Q}} \in L^{1}(\mathbb{P})$ and

\begin{equation}\label{sup1}
\mathbb{E}_{\mathbb{P}}\sup_{0\le t\le T}\Big|\int_0^t \theta^H_udS_u\Big| <\infty,
\end{equation}
then

$$
\sum_{n=1}^\infty \theta^{k,H}_{T^{k,1}_{n-1}}(S_{T^{k,1}_n} - S_{T^{k,1}_{n-1}})1\!\!1_{\{T^{k,1}_n\le \cdot\}}\rightarrow \int_0^\cdot\theta^H_tdS_t\quad\text{as}~k\rightarrow \infty,
$$
in the $\sigma(\text{B}^1,\Lambda^\infty)$-topology under $\mathbb{P}$.
\end{corollary}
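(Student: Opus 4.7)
The plan is to decompose the discrete sum into a main piece controlled by Theorem~\ref{deltaj} and a pathwise-small remainder, and then transfer the convergence from $\mathbb{Q}$ to $\mathbb{P}$. Write
\[
Y^k_t := \sum_{n=1}^\infty \theta^{k,H}_{T^{k,1}_{n-1}}\big(S_{T^{k,1}_n}-S_{T^{k,1}_{n-1}}\big)1\!\!1_{\{T^{k,1}_n \le t\}}.
\]
By definition $\theta^{k,H}_{T^{k,1}_{n-1}}\sigma_{T^{k,1}_{n-1}}S_{T^{k,1}_{n-1}}=\mathbf{D}^{k,1}X_{T^{k,1}_n}$ and $\Delta A^{k,1}_{T^{k,1}_n}=W^{(1)}_{T^{k,1}_n}-W^{(1)}_{T^{k,1}_{n-1}}$. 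Adding and subtracting $\mathbf{D}^{k,1}X_{T^{k,1}_n}\Delta A^{k,1}_{T^{k,1}_n}$ and using the $\mathbb{Q}$-dynamics $dS=S\sigma\, dW^{(1)}$ yields $Y^k_t = \int_0^t \mathbf{D}^{k,1}_sX\, dA^{k,1}_s + E^k_t$, where
\[
E^k_t = \sum_{n=1}^\infty \mathbf{D}^{k,1}X_{T^{k,1}_n}\int_{T^{k,1}_{n-1}}^{T^{k,1}_n}\!\Big[\frac{S_u\sigma_u}{S_{T^{k,1}_{n-1}}\sigma_{T^{k,1}_{n-1}}}-1\Big]dW^{(1)}_u\,1\!\!1_{\{T^{k,1}_n\le t\}}.
\]

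For the main term, Theorem~\ref{deltaj} specialised to $d=1$ gives $\int_0^\cdot \mathbf{D}^{k,1}_sX\, dA^{k,1}_s \to \int_0^\cdot \theta^H_u\, dS_u$ weakly in $\text{B}^2(\mathbb{F})$ under $\mathbb{Q}$. I would then transfer this convergence to $\mathbb{P}$ by noting that $d\mathbb{P}/d\mathbb{Q}\in L^1(\mathbb{P})$ is equivalent to $d\mathbb{P}/d\mathbb{Q}\in L^2(\mathbb{Q})$, so that for any $\zeta\in\Lambda^\infty$ the pairing $\mathbb{E}_\mathbb{P}[\zeta\,\cdot]=\mathbb{E}_\mathbb{Q}[\zeta(d\mathbb{P}/d\mathbb{Q})\,\cdot]$ is well defined and continuous along the $\sigma(\text{B}^2,\text{M}^2)$-convergent sequence. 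Hypothesis~\eqref{sup1} ensures the candidate limit $\int_0^\cdot \theta^H_u\, dS_u$ lies in $\text{B}^1(\mathbb{F},\mathbb{P})$, the natural ambient space for the target topology.

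For the remainder $E^k$, the key observation is that $\mathbf{D}^{k,1}X_{T^{k,1}_n}$ is $\mathcal{F}^k_{T^{k,1}_{n-1}}$-measurable by construction and, since $\mathbb{F}^k\subset\mathbb{F}$, also $\mathcal{F}_{T^{k,1}_{n-1}}$-measurable. Thus each summand of $E^k$ is a genuine $\mathbb{Q}$-martingale increment and It\^o isometry yields
\[
\mathbb{E}_\mathbb{Q}|E^k_T|^2 = \mathbb{E}_\mathbb{Q}\sum_{n=1}^\infty (\mathbf{D}^{k,1}X_{T^{k,1}_n})^2 \int_{T^{k,1}_{n-1}}^{T^{k,1}_n}\!\Big|\frac{S_u\sigma_u}{S_{T^{k,1}_{n-1}}\sigma_{T^{k,1}_{n-1}}}-1\Big|^2 du\, 1\!\!1_{\{T^{k,1}_n\le T\}}.
\]
Lemma~\ref{st} gives $\sup_n|T^{k,1}_n-T^{k,1}_{n-1}|\to 0$ in probability and $S\sigma$ has continuous paths, so the correction factor converges to $0$ uniformly on each stochastic interval. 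Combining this with the uniform energy bound \eqref{menergy}--\eqref{g1} inherited from the proof of Theorem~\ref{deltaj} and a dominated convergence argument, one obtains $E^k\to 0$ in $L^2(\mathbb{Q})$, and Doob's inequality then yields $E^k\to 0$ in $\text{B}^1(\mathbb{Q})$. Equivalence of measures, together with the uniform integrability provided by \eqref{sup1} and $d\mathbb{P}/d\mathbb{Q}\in L^1(\mathbb{P})$, upgrades this to vanishing in the $\sigma(\text{B}^1,\Lambda^\infty)$-topology under $\mathbb{P}$, completing the proof.

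The main obstacle I anticipate is rigorously decoupling the possibly unbounded weights $\mathbf{D}^{k,1}X_{T^{k,1}_n}$ from the small pathwise factor inside the bound for $\mathbb{E}_\mathbb{Q}|E^k_T|^2$. A clean way to handle this is to first truncate the weights at level $N$, exploit uniform continuity of $S\sigma$ on compacts to obtain vanishing for the truncated error, and then remove the truncation using the uniform $L^2(\mathbb{Q})$-bound \eqref{g1}. Once this step is executed, the change-of-measure argument for the main term combines directly with $E^k\to 0$ to deliver the claimed $\sigma(\text{B}^1,\Lambda^\infty)$-convergence under $\mathbb{P}$.
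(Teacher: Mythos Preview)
Your overall architecture coincides with the paper's: invoke Theorem~\ref{deltaj} to get $\int_0^\cdot \mathbf{D}^{k,1}_sX\,dA^{k,1}_s\to\int_0^\cdot\theta^H_u\,dS_u$ weakly in $\text{B}^2(\mathbb{F})$ under $\mathbb{Q}$, then transfer to $\mathbb{P}$ by observing $d\mathbb{P}/d\mathbb{Q}\in L^1(\mathbb{P})\Leftrightarrow d\mathbb{P}/d\mathbb{Q}\in L^2(\mathbb{Q})$ and pairing against the optional element $J_t=g\,\mathbb{E}[d\mathbb{P}/d\mathbb{Q}\mid\mathcal{F}_G]1\!\!1_{\{G\le t\}}$. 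The difference lies in how the discrete sum in $S$-increments is linked to $\int\mathbf{D}^{k,1}X\,dA^{k,1}$. The paper simply writes, ``by the very definition'',
\[
\sum_{n}\theta^{k,H}_{T^{k,1}_{n-1}}\sigma_{T^{k,1}_{n-1}}S_{T^{k,1}_{n-1}}\big(W^{(1)}_{T^{k,1}_n}-W^{(1)}_{T^{k,1}_{n-1}}\big)1\!\!1_{\{T^{k,1}_n\le t\}}
=\sum_{n}\theta^{k,H}_{T^{k,1}_{n-1}}\big(S_{T^{k,1}_n}-S_{T^{k,1}_{n-1}}\big)1\!\!1_{\{T^{k,1}_n\le t\}},
\]
i.e.\ it identifies $\sigma_{T^{k,1}_{n-1}}S_{T^{k,1}_{n-1}}\Delta W^{(1)}$ with $\Delta S$ across each stochastic interval and therefore needs no remainder term at all. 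You, by contrast, keep the genuine It\^o increment $S_{T^{k,1}_n}-S_{T^{k,1}_{n-1}}=\int_{T^{k,1}_{n-1}}^{T^{k,1}_n}S_u\sigma_u\,dW^{(1)}_u$ and carry the Euler-type error $E^k$ explicitly. Your route is more honest about this discrepancy; the price is that closing the $E^k$ estimate requires controlling $\sup_u|S_u\sigma_u/S_{T^{k,1}_{n-1}}\sigma_{T^{k,1}_{n-1}}-1|$ in $L^2$ against the energy $\sum(\mathbf{D}^{k,1}X_{T^{k,1}_n})^2(T^{k,1}_n-T^{k,1}_{n-1})$, which uses the random interval lengths rather than the deterministic $2^{-2k}$ appearing in \eqref{g1}, and implicitly demands some moment bound on $S\sigma$ that the corollary does not assume. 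Your truncation idea is the right device here; just note that the paper sidesteps the entire issue by reading the last equality in \eqref{deff} as a definition rather than proving it.
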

\begin{proof}
We have $\mathbb{E}|\frac{d\mathbb{P}}{d\mathbb{Q}}|^2 =
\mathbb{E}_{\mathbb{P}}|\frac{d\mathbb{P}}{d\mathbb{Q}}|^2\frac{d\mathbb{Q}}{d\mathbb{P}} = \mathbb{E}_\mathbb{P}\frac{d\mathbb{P}}{d\mathbb{Q}} < \infty$. To shorten notation, let $Y^k_t:=\int_0^t\mathbf{D}_s^{k,1} XdA^{k,1}_s$ and $Y_t:=\int_0^t\theta^H_\ell dS_\ell$ for $0\le t\le T.$ Let $G$ be an arbitrary $\mathbb{F}$-stopping time bounded by $T$ and let $g\in L^{\infty}(\mathbb{P})$ be an essentially $\mathbb{P}$-bounded random variable and $\mathcal{F}_G$-measurable. Let $J\in \text{M}^2$ be a continuous linear functional given by the purely discontinuous $\mathbb{F}$-optional bounded variation process

$$J_t:=g\mathbb{E}\Big[\frac{d\mathbb{P}}{d\mathbb{Q}}\big|\mathcal{F}_G\Big]1\!\!1_{ \{G\le t \} };0\le t\le T,$$
where the duality action $\big(\cdot~,~\cdot\big)$ is given by $ \big( J, N\big) = \mathbb{E}\int_0^T N_sdJ_s; N\in \text{B}^2(\mathbb{F})$. See~\cite{LEAO_OHASHI09} for more details. Then Theorem~\ref{deltaj} and the fact $\frac{d\mathbb{P}}{d\mathbb{Q}} \in L^2(\mathbb{Q})$ yield

$$\mathbb{E}_\mathbb{P} gY^k_G  =\mathbb{E} Y^k_Gg\frac{d\mathbb{P}}{d\mathbb{Q}} = \big(J, Y^k\big) \rightarrow \big(J, Y \big)=
\mathbb{E}Y_Gg\frac{d\mathbb{P}}{d\mathbb{Q}} = \mathbb{E}_{\mathbb{P}}gY_G$$
as $k\rightarrow \infty$. By the very definition,

\begin{eqnarray}\label{deff}
\int_0^t\mathbf{D}_s^{k,1} XdA^{k,1}_s &=& \sum_{n=1}^\infty
\mathbb{E}\Big[\mathbb{D}^{k,1}X_{T^{k,1}_n}\big|\mathcal{F}^k_{T^{k,1}_{n-1}}\Big]\Delta A^{k,1}_{T^{k,1}_n} 1\!\!1_{\{T^{k,1}_n\le t\}}\\
\nonumber& &\\
\nonumber &=& \sum_{n=1}^\infty\theta^{k,H}_{T^{k,1}_{n-1}} \sigma_{T^{k,1}_{n-1}}S_{T^{k,1}_{n-1}} (W^{(1)}_{T^{k,1}_{n}} -W^{(1)}_{T^{k,1}_{n-1}})1\!\!1_{\{T^{k,1}_n\le t\}} \\
\nonumber& &\\
\nonumber&=& \sum_{n=1}^\infty \theta^{k,H}_{T^{k,1}_{n-1}}(S_{T^{k,1}_n} - S_{T^{k,1}_{n-1}})1\!\!1_{\{T^{k,1}_n\le t\}};~0\le t\le T.
\end{eqnarray}
Then from the definition of the $\sigma(\text{B}^1,\Lambda^\infty)$-topology based on the physical measure $\mathbb{P}$, we shall conclude the proof.
\end{proof}

\begin{remark}\label{L1h}
Corollary~\ref{changeP} provides a non-antecipative Riemman-sum approximation for the gain process $\int_0^\cdot\theta^H_tdS_t$ in a multi-dimensional filtration setting where none path regularity of the pure hedging strategy $\theta^H$ is imposed. The price we pay is a weak-type convergence instead of uniform convergence in probability. However, from the financial point of view this type of convergence is sufficient for the implementation of Monte Carlo methods in hedging. More importantly, we will see that $\theta^{k,H}$ can be fairly simulated and hence the resulting Monte Carlo hedging strategy can be calibrated from market data.
\end{remark}

\begin{remark}
If one is interested only at convergence at the terminal time $0< T <\infty$, then assumption~(\ref{sup1}) can be weakened to $\mathbb{E}_{\mathbb{P}}|\int_0^T\theta^H_tdS_t|< \infty$. Assumption $\mathbb{E}_{\mathbb{P}}\frac{d\mathbb{P}}{d\mathbb{Q}} < \infty$ is essential to change the $\mathbb{Q}$-convergence into the physical measure $\mathbb{P}$. One should notice that the associated density process is no longer a $\mathbb{P}$-local-martingale and in general such integrability assumption must be checked case by case. Such assumption holds locally for every underlying It\^o risky asset price process. Our numerical results suggest that this property behaves well for a variety of spot price models.
\end{remark}
Of course, in practice both the spot prices and trading dates are not observable at the stopping times so we need to translate our results to a given deterministic set of rebalancing hedging dates.

\subsection{Hedging Strategies}\label{dished}
In this section, we provide a dynamic hedging strategy based on a refined set of hedging dates $\Pi:=0=s_0 < \ldots < s_{p-1} < s_{p} = T$. For this, we need to introduce some objects. For a given $s_i\in \Pi$, we set $W^{(j)}_{s_i,t}:= W^{(j)}_{s_i+t} - W^{(j)}_{s_i};~0\le t\le T-s_i$ for $j=1,2$. Of course, by the strong Markov property of the Brownian motion, we know that $W^{(j)}_{s_i,\cdot}$ is an $(\mathcal{F}^j_{s_i,t})_{0\le t\le  T-s_i}$-Brownian motion for each $j=1,2$ and independent from $\mathcal{F}^j_{s_i}$, where $\mathcal{F}^j_{s_i,t}:=\mathcal{F}^j_{s_i+t}$ for $0\le t\le T-s_i$. Similar to Section~\ref{bo}, we set $T^{k,1}_{s_i,0}:=0$ and

$$T^{k,j}_{s_i,n}:= \inf\{ t> T^{k,j}_{s_i,n-1}; |W^{(j)}_{s_i,t} -  W^{(j)}_{s_i,T^{k,j}_{s_i,n-1}} | = 2^{-k} \}; n\ge 1, j=1,2.$$
For a given $k\ge 1$ and $j=1,2$, we define $\mathcal{H}^{k,j}_{s_i,n}$ as the sigma-algebra generated by $\{T^{k,j}_{s_i,\ell};1\le \ell \le n\}$ and $W^{(j)}_{s_i,T^{k,j}_{s_i,\ell}} -W^{(j)}_{s_i,T^{k,j}_{s_i,\ell-1}};1\le\ell \le n$. We then define the following discrete jumping filtration

$$\mathcal{F}^{k,j}_{s_i,t}: =\mathcal{H}^{k,j}_{s_i,n}~a.s~\text{on}~\{T^{k,j}_{s_i,n} \le t < T^{k,j}_{s_i,n+1} \}. $$
In order to deal with fully path dependent options, it is convenient to introduce the following augmented filtration

$$\mathcal{G}^{k,j}_{s_i,t}: = \mathcal{F}^{j}_{s_i}\vee \mathcal{F}^{k,j}_{s_i,t};~0\le t\le T - s_i,$$
for $j=1,2$. The bidimensional information flows are defined by $\mathcal{F}_{s_i,t}: = \mathcal{F}^1_{s_i,t}\otimes \mathcal{F}^{2}_{s_i,t}$ and $\mathcal{G}^{k}_{s_i,t}: = \mathcal{G}^{k,1}_{s_i,t}\otimes \mathcal{G}^{k,2}_{s_i,t}$ for $0\le t \le T-s_i$. We set $\mathbb{G}^{k}_{s_i}: = \{\mathcal{G}^{k}_{s_i,t}; 0\le t\le T-s_i\}$. We shall assume that they satisfy the usual conditions. The piecewise constant martingale projection $A^{k,j}_{s_i}$ based on $W^{(j)}_{s_i}$ is given by

$$A^{k,j}_{s_i,t}:=\mathbb{E}[W^{(j)}_{s_i, T-s_i}| \mathcal{G}^{k,j}_{s_i,t}];0\le t\le T-s_i.$$
We set $\{T^{k}_{s_i,n}; n\ge 0\}$ as the order statistic generated by the stopping times $\{T^{k,j}_{s_i,n}; j=1,2,~n\ge 0\}$ similar to~(\ref{difst}).

If $H\in L^2(\mathbb{Q})$ and $X_t = \mathbb{E}[H|\mathcal{F}_t];~0\le t\le  T$, then we define

$$\delta^k_{s_i}X_t: = \mathbb{E}[H| \mathcal{G}^k_{s_i,t}];~0\le t\le T-s_i,$$
so that the related derivative operators are given by

$$\mathbb{D}^{k,j}_{s_i}X:= \sum_{n=1}^\infty \mathcal{D}^j_{T^{k,j}_{s_i,n}}\delta^k_{s_i}X
1\!\!1_{[[T^{k,j}_{s_i,n}, T^{k,j}_{s_i,n + 1}[[},$$
where

$$\mathcal{D}^j\delta^k_{s_i} X := \sum_{n=1}^\infty \frac{\Delta \delta^k_{s_i} X_{T^{k,j}_{s_i,n}}}{\Delta A^{k,j}_{T^{k,j}_{s_i,n}}}1\!\!1_{[[ T^{k,j}_{s_i,n},T^{k,j}_{s_i,n} ]]};~j=1,2, k\ge 1.$$

An $\mathbb{G}^k_{s_i}$-predictable version of $\mathbb{D}^{k,j}_{s_i}X$ is given by

$$\mathbf{D}^{k,j}_{s_i}X:= 01\!\!1_{[[0]]} + \sum_{n=1}^\infty \mathbb{E}\big[
\mathbb{D}^{k,j}_{s_i}X_{T^{k,j}_{s_i,n}}|\mathcal{G}^k_{s_i,T^{k,j}_{s_i,n-1}}\big]
1\!\!1_{]]T^{k,j}_{s_i,n-1}, T^{k,j}_{s_i,n}]]};j=1,2.$$
\noindent In the sequel, we denote

\begin{equation}\label{truestrat}
\theta^{k,H}_{s_i}: = \sum_{n=1}^\infty \frac{\mathbf{D}^{k,1}_{s_i}X_{T^{k,1}_{s_i,n}}} {
\sigma_{s_i,T^{k,1}_{s_i,n-1}}S_{s_i,T^{k,1}_{s_i,n-1}}}
1\!\!1_{[[T^{k,1}_{s_i,n-1}, T^{k,j}_{s_i,n}[[};~s_i \in \Pi,
\end{equation}
where $\sigma_{s_i,\cdot}$ is the volatility process driven by the shifted filtration $\{\mathcal{F}_{s_i, t}; 0\le t\le T-s_i\}$ and $S_{s_i,\cdot}$ is the risky asset price process driven by the shifted Brownian $W^{(1)}_{s_i}$.

We are now able to present the main result of this section.

\begin{corollary}\label{shiftTh}
For a given $\mathbb{Q}\in\mathcal{M}^e$, let $H$ be a $\mathbb{Q}$-square integrable claim satisfying~\textbf{(M)}. Let

$$
H = \mathbb{E}[H] + \int_0^T\theta^H_tdS_t + L^H_T
$$
be the correspondent GKW decomposition under $\mathbb{Q}$. If $\frac{d\mathbb{P}}{d\mathbb{Q}} \in L^{1}(\mathbb{P})$ and

$$
\mathbb{E}_{\mathbb{P}}\Big|\int_0^T \theta^H_udS_u\Big| <\infty,
$$
then for any set of trading dates $\Pi=\{(s_i)_{i=0}^{p}\}$, we have

\begin{equation}\label{gaink}
\lim_{k\rightarrow \infty}\sum_{s_i\in \Pi}\sum_{n=1}^\infty\theta^{k,H}_{s_i,T^{k,1}_{s_i,n-1}}
\big(S_{s_i,T^{k,1}_{s_i,n}} -S_{s_i,T^{k,1}_{s_i,n-1}} \big)1\!\!1_{\{T^{k,1}_{s_i,n} \le s_{i+1} - s_i \}} = \int_0^T\theta^H_tdS_t
\end{equation}
weakly in $L^1$ under $\mathbb{P}$.
\end{corollary}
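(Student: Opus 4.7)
The plan is to localize Corollary~\ref{changeP} to each sub-interval $[s_i,s_{i+1}]$ of the partition $\Pi$ via the shifted Brownian motions $W^{(j)}_{s_i}$ and filtrations $\mathbb{G}^k_{s_i}$ constructed just before the statement, and then to assemble the finitely many pieces by linearity of the weak limit.

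Fix $s_i\in\Pi$ and consider the $\mathbb{Q}$-square integrable martingale $t\mapsto X_{s_i+t}=\mathbb{E}[H|\mathcal{F}_{s_i+t}]$ on the shifted interval $[0,s_{i+1}-s_i]$ relative to $(\mathcal{F}_{s_i,t})$. The strong Markov property of $W$ at $s_i$, together with the definitions of $A^{k,j}_{s_i}$, $\delta^k_{s_i}X$, $\mathbb{D}^{k,j}_{s_i}X$ and $\mathbf{D}^{k,j}_{s_i}X$, makes the entire Section~\ref{capitulo:aproximacao} apparatus apply verbatim on this sub-interval. The shifted GKW components of $X_{s_i+\cdot}$ are precisely the restrictions of $(\phi^{H,1},\phi^{H,2})$ to $[s_i,s_{i+1}]$, so the shifted form of Theorem~\ref{deltaj} yields
\[
\int_0^{s_{i+1}-s_i}\mathbf{D}^{k,1}_{s_i}X_u\,dA^{k,1}_{s_i,u}\;\longrightarrow\;\int_{s_i}^{s_{i+1}}\theta^H_u\,dS_u
\]
weakly in $\mathrm{B}^2(\mathbb{F})$, and the same Riemann-sum identity exhibited in (\ref{deff}), now carried out in the shifted setup, shows that the left-hand side equals
\[
R^{k,i}:=\sum_{n=1}^\infty\theta^{k,H}_{s_i,T^{k,1}_{s_i,n-1}}\bigl(S_{s_i,T^{k,1}_{s_i,n}}-S_{s_i,T^{k,1}_{s_i,n-1}}\bigr)1\!\!1_{\{T^{k,1}_{s_i,n}\le s_{i+1}-s_i\}}.
\]

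I then mimic the proof of Corollary~\ref{changeP}: for any $g\in L^\infty(\mathbb{P})$, the functional $Y\mapsto\mathbb{E}_{\mathbb{P}}[gY]=\mathbb{E}[\tfrac{d\mathbb{P}}{d\mathbb{Q}}\,g\,Y]$ lies in the dual of $\mathrm{B}^2(\mathbb{F})$ because $\tfrac{d\mathbb{P}}{d\mathbb{Q}}\in L^2(\mathbb{Q})$. Testing the weak-$\mathrm{B}^2(\mathbb{Q})$ convergence above at the deterministic time $s_{i+1}-s_i$ against this functional converts it into weak-$L^1(\mathbb{P})$ convergence of $R^{k,i}$ to $\int_{s_i}^{s_{i+1}}\theta^H_u\,dS_u$. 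Since $\Pi$ is finite and weak $L^1$ convergence is preserved by finite linear combinations, summing over $i=0,\ldots,p-1$ produces
\[
\sum_{s_i\in\Pi}R^{k,i}\;\longrightarrow\;\sum_{i=0}^{p-1}\int_{s_i}^{s_{i+1}}\theta^H_u\,dS_u\;=\;\int_0^T\theta^H_u\,dS_u
\]
weakly in $L^1(\mathbb{P})$, which is (\ref{gaink}).

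The main point requiring care is the integrability hypothesis for each localized instance of Corollary~\ref{changeP}: the terminal-time version needs $\mathbb{E}_{\mathbb{P}}|\int_{s_i}^{s_{i+1}}\theta^H_u\,dS_u|<\infty$ on every window, which does not follow formally from $\mathbb{E}_{\mathbb{P}}|\int_0^T\theta^H_u\,dS_u|<\infty$ alone. However, $\int\theta^H\,dS$ is $\mathbb{Q}$-square integrable (by the $L^2(\mathbb{Q})$ property of $\phi^{H,S}$ coming from It\^o's representation) and $\tfrac{d\mathbb{P}}{d\mathbb{Q}}\in L^2(\mathbb{Q})$, so a Cauchy--Schwarz bound under $\mathbb{Q}$ furnishes the required $\mathbb{P}$-integrability uniformly over the partition, closing the argument.
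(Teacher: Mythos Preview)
Your argument is essentially the paper's own: apply Theorem~\ref{deltaj} on each shifted window $[s_i,s_{i+1}]$ via the $W_{s_i}$--discretization, identify the resulting integral with the Riemann sum through the shifted analogue of~(\ref{deff}), pass from $\mathbb{Q}$-weak $B^2$ to $\mathbb{P}$-weak $L^1$ using $\frac{d\mathbb{P}}{d\mathbb{Q}}\in L^2(\mathbb{Q})$, and sum over the finitely many pieces. Your final paragraph, supplying the $\mathbb{P}$-integrability of $\int_{s_i}^{s_{i+1}}\theta^H\,dS$ via Cauchy--Schwarz under $\mathbb{Q}$, is a useful clarification that the paper leaves implicit.
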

\begin{proof}
Let $\Pi=\{(s_i)_{i=0}^{p}\}$ be any set of trading dates. To shorten notation, let us define

\begin{equation}\label{wealth}
R(\theta^{k,H},\Pi,k):=\sum_{s_i\in \Pi}\sum_{n=1}^\infty\theta^{k,H}_{s_i,T^{k,1}_{s_i,n-1}}
\big(S_{s_i,T^{k,1}_{s_i,n}} -S_{s_i,T^{k,1}_{s_i,n-1}} \big)1\!\!1_{\{T^{k,1}_{s_i,n} \le s_{i+1} - s_i \}}
\end{equation}
for $k\ge 1$and $\Pi$. At first, we recall that $\{T^{k,1}_{s_i,n} - T^{k,1}_{s_i,n-1}; n\ge 1, s_i\in \Pi\}$ is an i.i.d sequence with absolutely continuous distribution. In this case, the probability of the set $\{T^{k,1}_{s_i,n} \le s_{i+1} - s_i \} $ is always strictly positive for every $\Pi$ and $k,n \ge 1$. Hence, $R(\theta^{k,H},\Pi,k)$ is a non-degenerate subset of random variables. By making a change of variable on the It\^o integral, we shall write

$$\int_0^T\theta^H_tdS_t = \int_0^T\phi^{H,1}_tdW^{(1)}_t = \sum_{s_i\in \Pi}\int_{s_i}^{s_{i+1}}\phi^{H,1}_tdW^{(1)}_t
 = $$
\begin{equation}\label{spp}
\sum_{s_i\in \Pi}\int_0^{s_{i+1}-s_i}\phi^{H,1}_{s_i + t}dW^{(1)}_{s_i,t}.
\end{equation}

Let us fix $\mathbb{Q}\in \mathcal{M}^e$. By the very definition,

$$R(\theta^{k,H},\Pi,k) = \sum_{s_i\in \Pi}\int_0^{s_{i+1}-s_i}\mathbf{D}^{k,1}_{s_i}X_\ell dA^{k,1}_{s_i,\ell}\quad \text{under}~\mathbb{Q}$$
Now we notice that Theorem~\ref{deltaj} holds for the two-dimensional Brownian motion $\big(W^{(1)}_{s_i}, W^{(2)}_{s_i}\big)$, for each $s_i\in \Pi$ with the discretization of the Brownian motion given by $A^{k,1}_{s_i}$. Moreover, using the fact that $\mathbb{E}|\frac{d\mathbb{P}}{d\mathbb{Q}}|^2 <\infty$ and repeating the argument given by~(\ref{deff}) restricted to the interval $[s_i, s_{i+1})$, we have

\begin{eqnarray}
\nonumber\lim_{k\rightarrow \infty} R(\theta^{k,H}, \Pi, k) &=& \sum_{s_i\in \Pi}\lim_{k\rightarrow \infty}\int_0^{s_{i+1}-s_i}\mathbf{D}^{k,1}_{s_i}X_\ell dA^{k,1}_{s_i,\ell} \\
\label{iterl1}& = & \int_0^T\theta^H_tdS_t,
\end{eqnarray}
weakly in $L^1(\mathbb{P})$ for each $\Pi$. This concludes the proof.

\end{proof}

\begin{remark}\label{fundremark}
In practice, one may approximate the gain process by a non-antecipative strategy as follows: Let $\Pi$ be a given set of trading dates on the interval $[0,T]$ so that $|\Pi|=\max_{0\le i\le p}|s_{i}-s_{i-1}|$ is small. We take a large $k$ and we perform a non-antecipative buy and hold-type strategy among the trading dates $[s_i, s_{i+1});s_i\in \Pi$ in the full approximation~(\ref{wealth}) which results

\begin{equation}\label{bhs}
\sum_{s_i\in \Pi}\theta^{k,H}_{s_i,0}
\big(S_{s_i,s_{i+1}-s_i} - S_{s_i,0} \big)\quad \text{where}\quad\theta^{k,H}_{s_i,0} = \frac{\mathbb{E}\Big[\mathbb{D}^{k,1}_{s_i}X_{T^{k,1}_{s_i,1}}\big|\mathcal{F}_{s_i}\Big]}{\sigma_{s_i,0}S_{s_i,0}};s_i\in \Pi.
\end{equation}
Convergence~(\ref{gaink}) implies that the approximation~(\ref{bhs}) results in unavoidable hedging errors w.r.t the gain process due to the discretization of the dynamic hedging, but we do not expect large hedging errors provided $k$ is large and $|\Pi|$ small. Hedging errors arising from discrete hedging in complete markets are widely studied in literature. We do not know optimal rebalancing dates in this incomplete market setting, but simulation results presented in Section~\ref{capitulo:resultados} suggest that homogeneous hedging dates work very well for a variety of models with and without stochastic volatility. A more detailed study is needed in order to get more precise relations between $\Pi$ and the stopping times, a topic which will be further explored in a forthcoming paper.
\end{remark}

Let us now briefly explain how the results of this section can be applied to well-known quadratic hedging methodologies.

\

\noindent \textbf{Generalized F\"{o}llmer-Schweizer}: If one takes the minimal martingale measure $\hat{\mathbb{P}}$, then $L^{H}$ in~(\ref{gfs}) is a $\mathbb{P}$-local martingale and orthogonal to the martingale component of $S$. Due this orthogonality and the zero mean behavior of the cost $L^H$, it is still reasonable to work with generalized F\"{o}llmer-Schweizer decompositions under $\mathbb{P}$ without knowing a priori the existence of locally-risk minimizing hedging strategies.

\

\noindent \textbf{Local Risk Minimization}:
One should notice that if $\int\theta^H dS \in \text{B}^2(\mathbb{F})$, $L^H\in \text{B}^2(\mathbb{F})$ under $\mathbb{P}$ and $\frac{d\hat{\mathbb{P}}}{d\mathbb{P}}\in L^2(\mathbb{P})$, then $\theta^H$ is the locally risk minimizing trading strategy and~(\ref{gfs}) is the F\"{o}llmer-Schweizer decomposition under $\mathbb{P}$.

\

\noindent \textbf{Mean Variance hedging}: If one takes $\tilde{\mathbb{P}}$, then the mean variance hedging strategy is not completely determined by the GKW decomposition under $\tilde{\mathbb{P}}$. Nevertheless, Corollary~\ref{shiftTh} still can be used to approximate  the optimal hedging strategy by computing the density process $\tilde{Z}$ based on the so-called fundamental equations derived by Hobson~\cite{hobson}. See~(\ref{opmvhs}) and~(\ref{radonvomm}) for details. For instance, in the classical Heston model, Hobson derives analytical formulas for $\tilde{\zeta}$. See~(\ref{hobeq}) in Section~\ref{capitulo:resultados}.

\

\noindent \textit{Hedging of fully path-dependent options}: The most interesting application of our results is the hedging of fully path-dependent options under stochastic volatility. For instance, if $H = \Phi(\{S_t; 0\le t\le T\})$ then Corollary~\ref{shiftTh} and Remark~\ref{fundremark} jointly with the above hedging methodologies allow us to dynamically hedge the payoff $H$ based on~(\ref{bhs}). The conditioning on the information flow $\{\mathcal{F}_{s_i}; s_i\in \Pi\}$ in the hedging strategy $\theta^{k,H}_{hedg}:=\{\theta^{k,H}_{s_i};s_i\in \Pi\}$ encodes the continuous monitoring of a path-dependent option. For each hedging date $s_i$, one has to incorporate the whole history of the price and volatility until such date in order to get an accurate description of the hedging. If $H$ is not path-dependent then the information encoded by $\{\mathcal{F}_{s_i}; s_i\in \Pi\}$ in $\theta^{k,H}_{hedg}$ is only crucial at time $s_i$.

\

Next, we provide the details of the Monte Carlo algorithm for the approximating pure hedging strategy $\theta^{k,H}_{hedg} = \{\theta^{k,H}_{s_i,0}; s_i\in \Pi\}$.

\section{The algorithm}\label{capitulo:algoritmo}
In this section we present the basic algorithm to evaluate the hedging strategy for a given European-type contingent claim $H \in L^2(\mathbb{Q})$ satisfying assumption \textbf{(M)} for a fixed $\mathbb{Q}\in \mathcal{M}^e$ at a terminal time $0 < T < \infty$.
The structure of the algorithm is based on the space-filtration discretization scheme induced by the stopping times $\{T^{k,j}_m; k \geq 1, m\ge 1, j=1, \ldots , p\}$. From the Markov property, the key point is the simulation of the first passage time $T^{k,j}_1$ for each $j=1\ldots, p$ for which we refer the work of Burq and Jones \cite{burq} for details.

\

\noindent \textbf{(Step 1)~Simulation of $\{A^{k,j} ; k\ge 1, j=1,\ldots,p\}$}.

\begin{enumerate}
\item One chooses $k\ge 1$ which represents the level of discretization of the Brownian motion.

\item One generates the increments $\{T^{k,j}_\ell-T^{k,j}_{\ell-1}; \ell \ge 1\}$ according to the algorithm described by Burq and Jones~\cite{burq}.

\item One simulates the family $\{\eta^{k,j}_\ell;\ell\ge 1 \}$ independently from $\{T^{k,j}_\ell-T^{k,j}_{\ell-1}; \ell \ge 1\}$. This $i.i.d$ family $\{\eta^{k,j}_\ell;\ell\ge 1\}$ must be simulated according to the Bernoulli random variable $\eta^{k,j}_1$ with parameter $1/2$ for $i=-1,1$. This simulates the jump process $A^{k,j}$ for $j=1,\ldots,p$.
\end{enumerate}

The next step is the simulation of $\mathbb{D}^{k,j}X_{T^{k,j}_1}$ where the conditional expectations in~(\ref{exder}) play a key role. For this, we need to simulate $H$ based on $\{S_t;0\le t\le T \}$ as follows.

\

\noindent \textbf{(Step 2)~Simulation of the risky asset price process $\{S^i;i=1,\ldots, d\}$}.

\begin{enumerate}
\item Generate a sample of $A^{k,i}$ according to Step 1 for a fixed $k\ge 1$.

\item With the partition $\mathcal{T}^k$ at hand, we can apply some appropriate approximation method to evaluate the discounted price. Generally speaking, we work with some It\^{o}-Taylor expansion method.
\end{enumerate}

The multidimensional setup requires an additional notation as follows. In the sequel, $t^{k,j}_\ell$ denotes the realization of the $T^{k,j}_\ell$ by means of Step 1, $t^{k}_{\ell}$ denotes the realization of $T^k_\ell$ based on the finest random partition $\mathcal{T}^k$. Moreover, any sequence $(t^k_1 < t^k_2 < \ldots < t^{k,j}_1)$ encodes the information generated by the realization of $\mathcal{T} ^k$ until the first hitting time of the $j$-th partition. In addition, we denote $t^{k,j}_{1-}$ as the last time in the finest partition previous to $t^{k,j}_1$. Let $\nu^{\ell}_k = (\nu^{\ell}_{1,k},\nu^{\ell}_{2,k})$ be the pair which realizes

$$t^k_\ell =t^{k,\nu^{\ell}_{1,k}}_{\nu^{\ell}_{2,k}},~k,\ell \ge 1$$
Based on this quantities, we define $\overline{\eta}^{k}_{t^k_\ell}$ as the realization of the random variable $\eta^{k,\nu^{\ell}_{1,k}}_{\nu^{\ell}_{2,k}}$. Recall expression \eqref{sigmakn}.

\

\noindent \textbf{(Step 3)~Simulation of the stochastic derivative}~$\mathbb{D}^{k,j}X_{T^{k,j}_1}$.

Based on Steps 1 and 2, for each $j=1,\ldots, p$ one simulates $\mathbb{D}^{k,j}X_{T^{k,j}_1}$ as follows. In the sequel, $\hat{\mathbb{E}}$ denotes the conditional expectation computed in terms of the Monte Carlo method:

\begin{equation}\label{montecarloder}
\hat{\mathbb{D}}^{k,j}_{t^{k,j}_1}X:= \frac{1}{2^{-k}\eta^{k,j}_1}\Big\{\hat{\mathbb{E}}\left[H\Big|\left(t^{k}_1,\overline{\eta}^k_{t^k_1}\right),\ldots, \left(t^{k,j}_1,\overline{\eta}^k_{t^{k,j}_1}\right)\right]
-\hat{\mathbb{E}}\left[H\Big|\left(t^{k}_1,\overline{\eta}^k_{t^k_1}\right),\ldots, \left(t^{k,j}_{1-},\overline{\eta}^k_{t^{k,j}_{1-}}\right)\right]\Big\},
\end{equation}
where with a slight abuse of notation, $\eta^{k,j}_1$ in \eqref{montecarloder} denotes the realization of the Bernoulli variable $\eta^{k,j}_1$. Then we define

\begin{equation}\label{eq:aproximacao_phi}
\hat{\phi}^{H,S,k}_0  := \left(\hat{\mathbb{D}}^{k,1}_{t^{k,1}_1} X,\ldots,\hat{\mathbb{D}}^{k,d}_{t^{k,d}_1} X\right),
\quad
\end{equation}


\

The correspondent simulated pure hedging strategy is given by

\begin{equation}\label{eq:estrategia_hedging_aproximada}
\bar{\theta}^{k,H}_{0,0}:=(\hat{\phi}^{H,S,k}_0)^\top \left[\diag(S_0)\sigma_0 \right]^{-1}.
\end{equation}

\

\noindent \textbf{(Step 4)~Simulation of}~$\theta^{k,H}_{hedg}$.

Repeat these steps several times and

\begin{equation}\label{finalH}
\hat{\theta}^{k,H}_{0,0}:= \text{mean of}~\bar{\theta}^{k,H}_{0,0}.
\end{equation}
The quantity~(\ref{finalH}) is a Monte Carlo estimate of $\theta^{k,H}_{0,0}$.

\begin{remark}
In order to compute the hedging strategy $\theta^{k,H}_{hedg}$ over a trading period $\{s_i;i=0, \ldots, q\}$, one perform the algorithm described above but based on the shifted filtration and the Brownian motions $W_{s_i}^{(j)}$ for $j=1,\ldots, p$ as described in Section~\ref{dished}.
\end{remark}

\begin{remark}
In practice, one has to calibrate the parameters of a given stochastic volatility model based on liquid instruments such as vanilla options and volatility surfaces. With those parameters at hand, the trader must follow the steps~(\ref{montecarloder}) and~(\ref{finalH}). The hedging strategy is then given by calibration and the computation of the quantity~(\ref{finalH}) over a trading period.
\end{remark}

\section{Numerical Analysis and Discussion of the Methods}\label{capitulo:resultados}
In this section, we provide a detailed analysis of the numerical scheme proposed in this work.

\subsection{Multidimensional Black-Scholes model}
At first, we consider the classical multidimensional Black-Scholes model with as many risky stocks as underlying independent random factors to be hedged $(d=p)$. In this case, there is only one equivalent local martingale measure, the hedging strategy $\theta^H$ is given by \eqref{GHedge_Strategy} and the cost is just the option price. To illustrate our method, we study a very special type of exotic option: a BLAC (Basket Lock Active Coupon) down and out barrier option whose payoff is given by

\[
H = \prod_{i\neq j}1\!\!1_{\{\min_{s\in[0,T]}S^i_s\vee\min_{s\in[0,T]}S^j_s>L\}}.
\]
It is well-known that for this type of option, there exists a closed formula for the hedging strategy. Moreover, it satisfies the assumptions of Theorem~\ref{result1}. See e.g~Bernis, Gobet and Kohatsu-Higa~\cite{kohatsu} for some formulas.

For comparison purposes with Bernis, Gobet and Kohatsu-Higa~\cite{kohatsu}, we consider $d =5$ underlying assets, $r = 0\%$ for the interest rate and $T=1$ year for the maturity time. For each asset, we set initial values $S^i_0 = 100;~1\le i\le 5$ and we compute the hedging strategy with respect to the first asset $S^1$ with discretization level $k = 3, 4, 5, 6$ and $20000$ simulations.

Following the work~\cite{kohatsu}, we consider the volatilities of the assets given by $\|\sigma^1\| = 35\%$, $\|\sigma^2\| = 35\%$, $\|\sigma^3\| = 38\%$, $\|\sigma^4\| = 35\%$ and $\|\sigma^5\| = 40\%$, the correlation matrix defined by $\rho_{ij}=0,4$ for $i\neq j$, where $\sigma^i = (\sigma_{i1},\cdots,\sigma_{i5})^\top$ and we use the barrier level $L = 76$. Table \ref{tabela:resultado_blac} provides the numerical results based on the algorithm described in Section \ref{capitulo:algoritmo} for the pointwise hedging strategy $\theta^H$. Due to Theorem~\ref{result1}, we expect that when the discretization level $k$ increases, we obtain results closer to the true value and this is what we find in our Monte Carlo experiments. The standard deviation and percentage $\%$ error in Table~\ref{tabela:resultado_blac} are related to the average of the hedging strategies calculated by Monte Carlo and the difference between the true and the estimated hedging value, respectively.

\begin{table}[h!]
    \centering
    \footnotesize
    \begin{tabular}{ccccccccccccc}
        \toprule[2pt]
         & & & & & & & & & & & &  \\
        & \textbf{k} &  & \textbf{Result} & & \textbf{St. error} & & \textbf{True value} & & \textbf{Diference} & & \textbf{$\%$ error} & \\
         & & & & & & & & & & & & \\ \hline
         & & & & & & & & & & & & \\
        & $3$ & & $0.00376$ & & $2.37\times 10^{-5}$ & & $0.00338$ & & $0.00038$ & & $11.15\%$ &\\
        & $4$ & & $0.00365$ & & $4.80\times 10^{-5}$ & & $0.00338$ & & $0.00027$ & & $8.03\%$ & \\
        & $5$ & & $0.00366$ & & $9.31\times 10^{-5}$ & & $0.00338$ & & $0.00028$ & & $8.35\%$ & \\
        & $6$ & & $0.00342$ & & $1.82\times 10^{-4}$ & & $0.00338$ & & $0.00004$ & & $1.29\%$ &\\
        & & & & & & & & & & & & \\
        \bottomrule[2pt]
    \end{tabular}
    \caption{\footnotesize Monte Carlo hedging strategy of a BLAC down and out option for a 5-dimensional Black-Scholes model.}
    \label{tabela:resultado_blac}
\end{table}

In Figure \ref{figura:resultado_blac3}, we plot the average hedging estimates with respect to the number of simulations. One should notice that when $k$ increases, the standard error also increases, which suggests more simulations for higher values of $k$.

\begin{figure}[h!]
\centering
\includegraphics[scale=0.4]{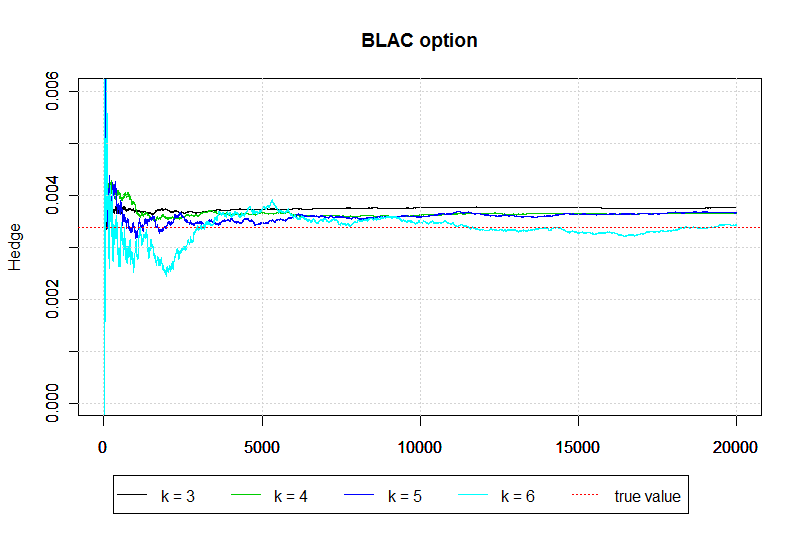}
\caption{\footnotesize Monte Carlo hedging strategy of a BLAC down and out option for a 5-dimensional Black-Scholes model.}
\label{figura:resultado_blac3}
\end{figure}

\subsection{Hedging Errors.}

Next, we present some hedging error results for two well-known non-constant volatility models: The constant elasticity of variance (CEV) model and the classical Heston stochastic volatility model~\cite{heston}. The typical examples we have in mind are the generalized F\"{o}llmer-Schweizer, local risk minimization and mean variance hedging strategies, where the optimal hedging strategies are computed by means of the minimal martingale measure and the variance optimal martingale measure, respectively. We analyze digital and one-touch one-dimensional European-type contingent claims as follows

$$\textbf{Digital option:}~~H = 1\!\!1_{\{S_T<95\}},\quad\textbf{One-touch option:}~~H = 1\!\!1_{\{\max_{t\in[0,T]}S_t>105\}}.$$

\

By using the algorithm described in Section~\ref{capitulo:algoritmo}, we compute the error committed by approximating the payoff $H$ by $\widehat{\mathbb{E}_{\mathbb{Q}}}[H]+\sum_{i=0}^{n-1}\hat{\theta}^{k,H}_{t_i,0} (S_{t_i,t_{i+1}-t_i}-S_{t_{i},0})$. This error will be called \emph{hedging error}. The computation of this error is summarized in the following steps:

\

\begin{enumerate}
\item We first simulate paths under the physical measure and compute the payoff $H$.


\item Then, we consider some deterministic partition of the interval [0,T] into $n$ points $t_0, t_1, \ldots, t_{n-1}$ such that $t_{i+1}-t_{i} = \frac{T}{n}$, for $i = 0, \ldots, n-1$.

\item One simulates at time $t_0 = 0$ the option price $\widehat{\mathbb{E}_{\mathbb{Q}}}[H]$ and the initial hedging estimate $\hat{\theta}^{k,H}_{0,0}$ from~\eqref{eq:aproximacao_phi}, \eqref{eq:estrategia_hedging_aproximada} and~(\ref{finalH}) under a fixed $\mathbb{Q}\in \mathcal{M}^e$ following the algorithm described in Section \ref{capitulo:algoritmo}.

\item We simulate $\hat{\theta}^{k,H}_{t_i,0}$ by means of the shifting argument based on the strong Markov property of the Brownian motion as described in Section~\ref{dished}.

\item We compute $\hat{H}$ by
\begin{equation}\label{eq:aproximacao_H}
\hat{H}:=\widehat{\mathbb{E}_{\mathbb{Q}}}[H]+\sum_{i=0}^{n-1}\hat{\theta}^{k,H}_{t_i,0}(S_{t_i,t_{i+1}-t_i}-S_{t_i,0}).
\end{equation}

\item Finally, the hedging error estimate $\gamma$ and the percentual error $e_{\gamma}$ are given by $\gamma := H-\hat{H}$ and $e_{\gamma} := 100\times \gamma/\widehat{\mathbb{E}_{\mathbb{Q}}}[H]$, respectively.
\end{enumerate}

\begin{remark}\label{expl}
When no locally-risk minimizing strategy is available, we also expect to obtain low hedging errors when dealing with generalized F\"{o}llmer-Schweizer decompositions due to the orthogonal martingale decomposition. In the mean variance hedging case, two terms appear in the optimal hedging strategy: the pure hedging component $\theta^{H,\tilde{\mathbb{P}}}$ of the GKW decomposition under the optimal variance martingale measure $\tilde{\mathbb{P}}$ and $\tilde{\zeta}$ as described by~(\ref{opmvhs}) and~(\ref{radonvomm}). For the Heston model, $\tilde{\zeta}$ was explicitly calculated by Hobson~\cite{hobson}. We have used his formula in our numerical simulations jointly with $\hat{\theta}^{k,H}$ under $\tilde{\mathbb{P}}$ in the calculation of the mean variance hedging errors. See expression~(\ref{hobeq}) for details.
\end{remark}

\subsubsection{Constant Elasticity of Variance (CEV) model}
The discounted risky asset price process described by the CEV model under the physical measure is given by
\begin{equation}\label{eq:modelo_cev}
dS_t = S_t\left[(b_t-r_t)dt+\sigma S^{(\beta-2)/2}_tdB_t\right], \quad S_0 = s,
\end{equation}
where $B$ is a $\mathbb{P}$-Brownian motion. The instantaneous sharpe ratio is $\psi_t = \frac{b_t-r_t}{\sigma S^{(\beta-2)/2}_t}$ such that the model can be rewritten as
\begin{equation}\label{eq:modelo_cev_mme}
dS_t = \sigma_t S^{\beta/2}_tdW_t
\end{equation}
where $W$ is a $\mathbb{Q}$-Brownian motion and $\mathbb{Q}$ is the equivalent local martingale measure. For both the digital and one-touch options, we consider the parameters $r = 0$ for the interest rate, $T = 1$ (month) for the maturity time, $\sigma = 0.2$, $S_0 = 100$ and $\beta = 1.6$ such that the constant of elasticity is $-0.4$. We simulate the hedging error along $[0,T]$ considering discretization levels $k=3, 4$, and $1$ and $2$ hedging strategies per day, which means approximately $22$ and $44$ hedging strategies, respectively, along the interval $[0,T]$. From Corollary~\ref{shiftTh}, we know that this procedure is consistent. For the digital option, we also recall that the hedging strategy has continuous paths up to some stopping time~(see Zhang~\cite{zhang}) so that Theorem~\ref{result1} and Remark~\ref{smoothre} apply accordingly. The hedging error results for the digital and one-touch options are summarized in Tables~\ref{tabela:hedging_error_digital_cev} and~\ref{tabela:hedging_error_barreira_cev}, respectively. The standard deviations are related to the hedging errors.

\begin{table}[h!]
    \scriptsize
    \centering
    \begin{tabular}{ccccccccccccccc}
        \toprule[2pt]
         & & & & & & & & & & & & & &  \\
        & \textbf{Simulations} &  & \textbf{k} & & \textbf{Hedges/day} & & \textbf{Hedging error}~$\gamma$ & & \textbf{St. dev.} & & \textbf{Price } & & \textbf{\% Error}~$e_\gamma$ & \\
         & & & & & & & & & & & & & & \\ \hline
         & & & & & & & & & & & & & & \\
        & $200$ & & $3$ & & $1$ & & $-0.02696$ & & $0.1750$ & & $0.2864$ & & $9.41\%$ & \\
        & $200$ & & $3$ & & $2$ & & $-0.00473$ & & $0.1451$ & & $0.2864$ & & $1.65\%$ & \\
        & $200$ & & $4$ & & $1$ & & $ 0.00494$ & & $0.1562$ & & $0.2759$ & & $1.79\%$ & \\
        & $200$ & & $4$ & & $2$ & & $-0.00291$ & & $0.1522$ & & $0.2760$ & & $1.05\%$ & \\
        & & & & & & & & & & & & & &\\
        \bottomrule[2pt]
    \end{tabular}
    \caption{Hedging error of a digital option for the CEV model.}
    \label{tabela:hedging_error_digital_cev}
\end{table}

\begin{table}[h!]
    \scriptsize
    \centering
    \begin{tabular}{ccccccccccccccc}
        \toprule[2pt]
         & & & & & & & & & & & & & &  \\
        & \textbf{Simulations} &  & \textbf{k} & & \textbf{Hedges/day} & & \textbf{Hedging error}~$\gamma$ & & \textbf{St. dev.} & & \textbf{Price} & & \textbf{\% Error}~$e_\gamma$ & \\
         & & & & & & & & & & & & & & \\ \hline
         & & & & & & & & & & & & & & \\
        & $600$ & & $3$ & & $1$ & & $0.0417$ & & $0.1727$ & & $0.4804$ & & $8.68\%$ & \\
        & $600$ & & $3$ & & $2$ & & $0.0424$ & & $0.1413$ & & $0.4804$ & & $8.82\%$ & \\
        & $600$ & & $4$ & & $1$ & & $0.0144$ & & $0.1770$ & & $0.5061$ & & $2.84\%$ & \\
        & $600$ & & $4$ & & $2$ & & $0.0125$ & & $0.1168$ & & $0.5060$ & & $2.47\%$ & \\
        & & & & & & & & & & & & & &\\
        \bottomrule[2pt]
    \end{tabular}
    \caption{Hedging error of one-touch option for the CEV model.}
    \label{tabela:hedging_error_barreira_cev}
\end{table}

\subsubsection{Heston's Stochastic Volatility Model}

Here we consider two types of hedging methodologies: Local-risk minimization and mean variance hedging strategies as described in the Introduction and Remark~{\ref{expl}}. The Heston dynamics of the discounted price under the physical measure is given by

\[
\left\{\begin{array}{l}
dS_t = S_t(b_t-r_t)\Sigma_tdt + S_t\sqrt{\Sigma_t}dB^{(1)}_t \\
d\Sigma_t = 2\kappa (\theta-\Sigma_t)dt + 2\sigma\sqrt{\Sigma_t}dZ_t,0\le t\le T,
\end{array}\right.
\]
where $Z = \rho B^{(1)} + \bar{\rho}B^{(2)}_t$, $\bar{\rho} = \sqrt{1-\rho^2}$, with $(B^{(1)}B^{(2)})$ two independent $\mathbb{P}$-Brownian motions  and $\kappa, m, \beta_0, \mu$ are suitable constants in order to have  a well-defined Markov process~(see e.g~Heston~\cite{heston}). Alternatively, we can rewrite the dynamics as

\[
\left\{\begin{array}{l}
dS_t = S_t Y^2_t (b_t-r_t) dt + S_tY_tdB^{(1)}_t\\
dY_t = \kappa \Bigg (\frac{m}{Y_t} - Y_t\Bigg)dt + \sigma dZ_t,~0\le t\le T,
\end{array}\right.
\]
where $Y=\sqrt{\Sigma_t}$ and $m = \theta - \frac{\sigma^2}{2\kappa}$.

\

\noindent \textbf{Local-Risk Minimization}.
For comparison purposes with~Heath, Platen and Schweizer~\cite{Heath}, we consider the hedging of a European put option $H$ written on a Heston model with correlation parameter $\rho=0$. We set $S_0 = 100$, strike price $K = 100$, $T = 1$ (month) and we use discretization levels $k = 3, 4$ and $5$. We set the parameters $\kappa = 2.5$, $\theta = 0.04$, $\rho = 0$, $\sigma = 0.3$, $r = 0$ and $Y_0 = 0.02$. In this case, the hedging strategy $\theta^{H,\hat{\mathbb{P}}}$ based on the local-risk-minimization methodology is bounded with continuous paths so that Theorem~\ref{result1} applies to this case. Moreover, as described by Heath, Platen and Schweizer~\cite{Heath}, $\theta^{H,\hat{\mathbb{P}}}$ can be obtained by a PDE numerical analysis.

Table \ref{tabela:resultado_heston_schweizer} presents the results of the hedging strategy $\hat{\theta}^{k,H}_{0,0}$ by using the algorithm described in Section~\ref{capitulo:algoritmo}. Figure~\ref{resultado_heston1_hedge3} provides the Monte Carlo hedging strategy with respect to the number of simulations of order $10000$. We notice that our results agree with the results obtained by Heath, Platen and Schweizer~\cite{Heath} by PDE methods. In this case, the true value of the hedging at time $t = 0$ is approximately $-0.44$. The standard errors in Table~\ref{tabela:resultado_heston_schweizer} are related to the hedging and prices computed, respectively, from the Monte Carlo method described in Section~\ref{capitulo:algoritmo}.

\begin{table}[h!]
    \centering
    \footnotesize
    \begin{tabular}{ccccccccccc}
        \toprule[2pt]
         & & & & & & & & & & \\
         & \textbf{k} &  & \textbf{Hedging} & & \textbf{Standard error} & & \textbf{Monte Carlo price} & & \textbf{Standard error} & \\
         & & & & & & & & & & \\ \hline
         & & & & & & & & & & \\
         & $3$ & & $-0.4480$ & & $6.57\times 10^{-4}$ & & $10.417$ & & $5.00 \times 10^{-3}$ & \\
         & $4$ & & $-0.4506$ & & $1.28\times 10^{-3}$ & & $10.422$ & & $3.35 \times 10^{-3}$ & \\
         & $5$ & & $-0.4453$ & & $2.54\times 10^{-3}$ & & $10.409$ & & $2.75 \times 10^{-3}$ &\\
         & & & & & & & & & & \\
        \bottomrule[2pt]
    \end{tabular}
    \caption{\footnotesize Monte Carlo local-risk minimization hedging strategy of a European put option with Heston model.}
    \label{tabela:resultado_heston_schweizer}
\end{table}

\begin{figure}[h!]
\centering
\includegraphics[scale=0.5]{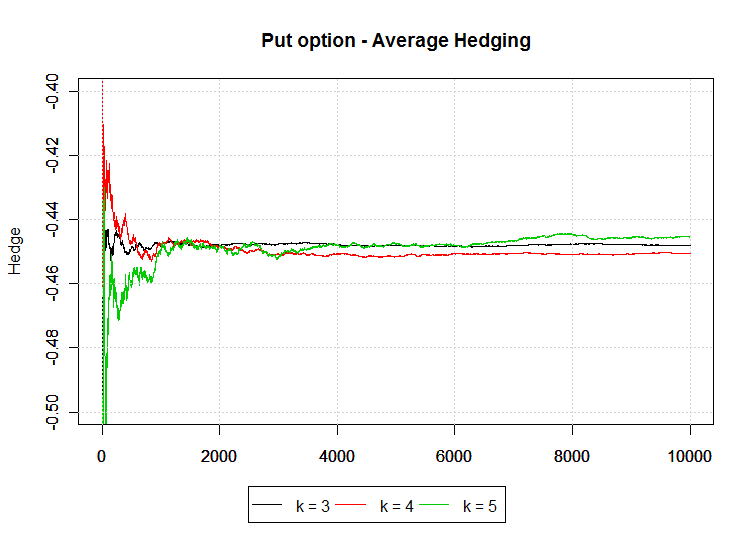}
\caption{\footnotesize Monte Carlo local-risk minimization hedging strategy of a European put option with Heston model.}
\label{resultado_heston1_hedge3}
\end{figure}

\

\noindent \textbf{Hedging with generalized F\"{o}llmer-Schweizer decomposition for one-touch option.} Based on Corollary~\ref{shiftTh}, we also present the hedging error associated to one-touch options for a Heston model with non-zero correlation. We simulate the hedging error along the interval $[0,1]$ using $k = 3, 4$ as discretization levels and $1$ and $2$ hedging strategies per day with parameters $\kappa = 3.63$, $\theta = 0.04$, $\rho = -0.53$, $\sigma = 0.3$, $r = 0$, $b=0.01$, $Y_0 = 0.3$ and $S_0=100$ where the barrier is $105$. The hedging error result for the one-touch option is summarized in Table~\ref{tabela:hedging_error_barreira_heston}. The standard deviations in Table~\ref{tabela:hedging_error_barreira_heston} are related to the hedging error.


To our best knowledge, there is no result about the existence of locally-risk minimizing hedging strategies for one-touch options written on a Heston model with nonzero correlation. As pointed out in Remark~\ref{expl}, it is expected that pure hedging strategies based on the generalized F\"{o}llmer-Schweizer decomposition mitigate very-well the hedging error. This is what we get in the simulation results.


\begin{table}[h!]
    \scriptsize
    \centering
    \begin{tabular}{ccccccccccccccc}
        \toprule[2pt]
         & & & & & & & & & & & & & &  \\
        & \textbf{Simulations} &  & \textbf{k} & & \textbf{Hedges/day} & & \textbf{Hedging error}~$\gamma$ & & \textbf{St. dev.} & & \textbf{Price} & & \textbf{\% error}~$e_\gamma$ & \\
         & & & & & & & & & & & & & & \\ \hline
         & & & & & & & & & & & & & & \\
        & $600$ & & $3$ & & $1$ & & $0.0409$ & & $0.2452$ & & $0.7399$ & & $5.53\%$ & \\
        & $600$ & & $3$ & & $2$ & & $0.0316$ & & $0.2450$ & & $0.7397$ & & $4.27\%$ & \\
        & $600$ & & $4$ & & $1$ & & $0.0268$ & & $0.2842$ & & $0.7735$ & & $3.46\%$ & \\
        & $600$ & & $4$ & & $2$ & & $0.0191$ & & $0.2605$ & & $0.7738$ & & $2.47\%$ & \\
        & & & & & & & & & & & & & &\\
        \bottomrule[2pt]
    \end{tabular}
    \caption{Hedging error with generalized F\"{o}llmer-Schweizer decomposition: One-touch option with Heston model.}
    \label{tabela:hedging_error_barreira_heston}
\end{table}

\

\noindent \textbf{Mean variance hedging strategy}. Here we present the hedging errors associated to one-touch options written on a Heston model with non-zero correlation under the mean variance methodology. Again, we simulate the hedging error along the interval $[0,1]$ using $k = 3, 4$ as discritization levels and $1$ and $2$ hedging strategies per day with parameters $r=0$, $b=0.01$, $\kappa = 3.63$, $\theta = 0.04$, $\rho = -0.53$, $\sigma = 0.3$, $Y_0 = 0.3$ and $S_0=100$ with barrier 105. The computation of the optimal hedging strategy follows from Remark~\ref{expl}. The quantity $\tilde{\zeta}$ is not related to the GKW decomposition but it is described by Theorem 1.1 in Hobson \cite{hobson} as follows. The process $\tilde{\zeta}$ appearing in~(\ref{opmvhs}) and~(\ref{radonvomm}) is given by

\begin{equation}\label{hobeq}
\tilde{\zeta}_t = \tilde{Z}_0\rho\sigma F(T-t)-\tilde{Z}_0b;~0\le t\le T,
\end{equation}
where $F$ is given by (see case 2 of Prop. 5.1 in Hobson \cite{hobson})
$$
F(t) = \frac{C}{A}\tanh\left(ACt+\tanh^{-1}\left(\frac{AB}{C}\right)\right)-B;~0\le t\le T,
$$
with $A = \sqrt{|1-2\rho^2|\sigma^2}$, $B = \frac{\kappa+2\rho\sigma b}{\sigma^2|1-2\rho^2|}$ and $C = \sqrt{|D|}$ where $D = 2b^2+\frac{(\kappa+2\rho\sigma b)^2)}{\sigma^2(1-2\rho^2)}$. The initial condition $\tilde{Z}_0$ is given by

$$
\tilde{Z}_0=\frac{Y_0^2}{2}F(T) + \kappa\theta\int_0^TF(s)ds.
$$

The hedging error results are summarized in Table \ref{tabela:hedging_error_barreira_heston_mean_variance} where the standard deviations are related to the hedging error. In comparison with the local-risk minimization methodology, the results show smaller percentual errors when $k$ increases. Also, in all the cases, we had smaller values of the standard deviation which suggests the mean variance methodology provides more accurate values of the hedging strategy.

\begin{table}[h!]
    \scriptsize
    \centering
    \begin{tabular}{ccccccccccccccc}
        \toprule[2pt]
         & & & & & & & & & & & & & &  \\
        & \textbf{Simulations} &  & \textbf{k} & & \textbf{Hedges/day} & & \textbf{Hedging error}~$\gamma$ & & \textbf{St. dev.} & & \textbf{Price} & & \textbf{\% error}~$e_\gamma$ & \\
         & & & & & & & & & & & & & & \\ \hline
         & & & & & & & & & & & & & & \\
        & $600$ & & $3$ & & $1$ & & $0.0689$ & & $0.1688$ & & $0.7339$ & & $9.39\%$ & \\
        & $600$ & & $3$ & & $2$ & & $0.0592$ & & $0.1344$ & & $0.7339$ & & $8.07\%$ & \\
        & $600$ & & $4$ & & $1$ & & $0.0213$ & & $0.1846$ & & $0.7766$ & & $2.74\%$ & \\
        & $600$ & & $4$ & & $2$ & & $0.0161$ & & $0.1278$ & & $0.7765$ & & $2.07\%$ & \\
        & & & & & & & & & & & & & &\\
        \bottomrule[2pt]
    \end{tabular}
    \caption{Hedging error in the mean variance hedging methodology for one-touch option with Heston model.}
    \label{tabela:hedging_error_barreira_heston_mean_variance}
\end{table}

\section{Appendix}
This appendix provides a deeper understanding of the Monte Carlo algorithm proposed in this work when the representation $(\phi^{H,S}, \phi^{H,I})$ in~(\ref{GHedge_Strategy}) admits additional integrability and path smoothness assumptions. We present stronger approximations which complement the asymptotic result given in Theorem~\ref{deltaj}. Uniform-type weak and strong pointwise approximations for $\theta^H$ are presented and they validate the numerical experiments in Tables 1 and 4 in Section~\ref{capitulo:resultados}. At first, we need of some technical lemmas.

\begin{lemma}\label{identitytkn}
Suppose that $\phi^H= (\phi^{H,1}, \ldots ,\phi^{H,p})$ is a $p$-dimensional progressive process such that $\mathbb{E}\sup_{0\le t\le T}\|\phi^{H}_t\|^2_{\mathbb{R}^p} < \infty$. Then, the following identity holds

\begin{equation} \label{rfdme}
   \Delta \delta^{k} X_{ T^{k,j}_{1} }  = \mathbb{E} \left[ \int_0^{T^{k,j}_{1}} \phi^{H,j}_s dW^{(j)}_s \mid \mathcal{F}^k_{T^{k,j}_{1}} \right]~a.s;~j=1, \ldots , p; k\ge 1.
\end{equation}
\end{lemma}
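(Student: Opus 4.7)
My plan is to unfold the definitions of $\delta^{k}X_{T^{k,j}_1}$ and $\delta^{k}X_{T^{k,j}_1-}$ using the Brownian martingale representation of $H$ and optional sampling, then exploit the product structure of $\mathbb{F}^{k}$ together with independence of the components of $W$. Concretely, the hypothesis $\mathbb{E}\sup_{t\le T}\|\phi^{H}_{t}\|^{2}_{\mathbb{R}^{p}}<\infty$ together with Doob's maximal inequality places the $\mathbb{F}$-martingale $X_{t}:=\mathbb{E}[H\mid\mathcal{F}_{t}]=\mathbb{E}[H]+\sum_{i}\int_{0}^{t}\phi^{H,i}_{s}dW^{(i)}_{s}$ in $\mathbf{B}^{2}(\mathbb{F})$, so it is uniformly integrable. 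Optional sampling at the bounded $\mathbb{F}$-stopping time $T^{k,j}_{1}$ then yields
$$X_{T^{k,j}_1}=\mathbb{E}[H]+\sum_{i=1}^{p}\int_{0}^{T^{k,j}_{1}}\phi^{H,i}_{s}dW^{(i)}_{s},$$
and continuity of $X$ gives $X_{T^{k,j}_{1}-}=X_{T^{k,j}_{1}}$ a.s.

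Taking conditional expectations through the tower property with $\mathcal{F}^{k}_{T^{k,j}_{1}}\subset \mathcal{F}_{T^{k,j}_{1}}$ and $\mathcal{F}^{k}_{T^{k,j}_{1}-}\subset \mathcal{F}_{T^{k,j}_{1}-}$, one obtains
\begin{align*}
\delta^{k}X_{T^{k,j}_{1}}&=\mathbb{E}[H]+\sum_{i=1}^{p}\mathbb{E}\Big[\int_{0}^{T^{k,j}_{1}}\phi^{H,i}_{s}dW^{(i)}_{s}\Big|\mathcal{F}^{k}_{T^{k,j}_{1}}\Big],\\
\delta^{k}X_{T^{k,j}_{1}-}&=\mathbb{E}[H]+\sum_{i=1}^{p}\mathbb{E}\Big[\int_{0}^{T^{k,j}_{1}}\phi^{H,i}_{s}dW^{(i)}_{s}\Big|\mathcal{F}^{k}_{T^{k,j}_{1}-}\Big].
\end{align*}
Subtracting, the claim~\eqref{rfdme} reduces to showing that for the cross components $i\ne j$ the two conditional expectations coincide, while for $i=j$ the $\mathcal{F}^{k}_{T^{k,j}_{1}-}$-conditional expectation of $\int_{0}^{T^{k,j}_{1}}\phi^{H,j}_{s}dW^{(j)}_{s}$ vanishes.

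The crucial structural observation is that no two hitting times of the independent Brownian motions $W^{(1)},\dots,W^{(p)}$ coincide a.s.\ (since each has an absolutely continuous law), so $\mathcal{F}^{k,i}_{T^{k,j}_{1}}=\mathcal{F}^{k,i}_{T^{k,j}_{1}-}$ for $i\ne j$ on a full-measure set. Consequently, the only new information carried by $\mathcal{F}^{k}_{T^{k,j}_{1}}$ relative to $\mathcal{F}^{k}_{T^{k,j}_{1}-}$ comes from the $j$-th factor, namely the pair $(T^{k,j}_{1},\eta^{k,j}_{1})$, and on the other hand $\mathcal{F}^{k}_{T^{k,j}_{1}-}$ carries no direct information on $W^{(j)}|_{[0,T^{k,j}_{1}]}$ (since $A^{k,j}\equiv 0$ strictly before $T^{k,j}_{1}$). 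The cross cancellation and vanishing of the $j$-term then follow by conditioning first on the full $W^{(j)}$-path stopped at $T^{k,j}_{1}$: the $\mathbb{F}$-martingale property of $\int_{0}^{\cdot}\phi^{H,i}_{s}dW^{(i)}_{s}$ combined with the independence of $W^{(i)}$ ($i\ne j$) from $W^{(j)}$ makes the additional $\sigma(T^{k,j}_{1},\eta^{k,j}_{1})$-information irrelevant for $i\ne j$, while for $i=j$ the symmetry of $\eta^{k,j}_{1}$ and orthogonality of the stochastic integral against $\mathcal{F}^{k}_{T^{k,j}_{1}-}$-measurable cross-terms force the conditional expectation to be zero.

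The main obstacle is formalizing the last cancellation argument, since $\phi^{H,i}$ may depend on every component of $W$ and the integration limit $T^{k,j}_{1}$ is itself a function of $W^{(j)}$. I expect this step requires conditioning along a double-layered filtration (first enlarging to $\mathcal{F}^{k}_{T^{k,j}_{1}-}\vee\sigma(W^{(j)}_{\cdot\wedge T^{k,j}_{1}})$ and then projecting back) to decouple the upper limit from the integrand and expose the martingale property; this is the technically delicate part of the proof.
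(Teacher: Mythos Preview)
Your setup is correct: writing $\delta^{k}X$ at $T^{k,j}_{1}$ and $T^{k,j}_{1}-$ via optional sampling and reducing \eqref{rfdme} to the two claims
\[
\mathbb{E}\Big[\int_{0}^{T^{k,j}_{1}}\phi^{H,i}_{s}\,dW^{(i)}_{s}\,\Big|\,\mathcal{F}^{k}_{T^{k,j}_{1}}\Big]=\mathbb{E}\Big[\int_{0}^{T^{k,j}_{1}}\phi^{H,i}_{s}\,dW^{(i)}_{s}\,\Big|\,\mathcal{F}^{k}_{T^{k,j}_{1}-}\Big]\ (i\neq j),\qquad
\mathbb{E}\Big[\int_{0}^{T^{k,j}_{1}}\phi^{H,j}_{s}\,dW^{(j)}_{s}\,\Big|\,\mathcal{F}^{k}_{T^{k,j}_{1}-}\Big]=0
\]
is exactly the right decomposition. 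The gap is that you do not prove either claim; you only describe a hoped-for mechanism and flag it as ``the technically delicate part.'' The heuristic you give---condition on the stopped $W^{(j)}$-path so that $T^{k,j}_{1}$ becomes deterministic and then invoke independence of $W^{(i)}$ from $W^{(j)}$---does not close the argument, because $\phi^{H,i}$ itself is adapted to the full filtration and typically depends on $W^{(j)}$. Once you enlarge to $\mathcal{F}^{k}_{T^{k,j}_{1}-}\vee\sigma(W^{(j)}_{\cdot\wedge T^{k,j}_{1}})$, the integrand in $\int_{0}^{T^{k,j}_{1}}\phi^{H,i}_{s}\,dW^{(i)}_{s}$ is still only $\mathbb{F}$-adapted, not adapted to a filtration generated by $W^{(i)}$ alone, so the ``martingale property plus independence'' step is not available as stated. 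Likewise, the appeal to ``symmetry of $\eta^{k,j}_{1}$'' for the vanishing of the $j$-term is not justified: the law of $\int_{0}^{T^{k,j}_{1}}\phi^{H,j}_{s}\,dW^{(j)}_{s}$ need not be symmetric in any useful sense when $\phi^{H,j}$ is an arbitrary progressive functional of $W$.

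The paper sidesteps this difficulty by a monotone class argument: it verifies \eqref{rfdme} first for integrands of the form $\phi=(1\!\!1_{[[S,+\infty[[},1\!\!1_{[[J,+\infty[[})$ with $\mathbb{F}$-stopping times $S,J$, for which the stochastic integrals are just Brownian increments and the two claims above can be checked by explicit computation on the partition $\{T^{k}_{n}=T^{k,1}_{1}\}$ using only independence of $W^{(1)}$ and $W^{(2)}$ and the independence of Brownian increments. Since such indicators generate the optional $\sigma$-algebra, linearity and a closure-under-bounded-limits step (Burkholder plus dominated convergence) extend the identity to all bounded progressive $\phi$, and localization handles the general case. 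This is precisely the device that replaces your unfinished ``double-layered conditioning'': rather than decoupling a general $\phi^{H,i}$ from $W^{(j)}$, one proves the statement on a generating class where no decoupling is needed.
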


\begin{proof}
It is sufficient to prove for $p=2$ since the argument for $p > 2$ easily follows from this case. Let $\mathcal{H}$ be the linear space constituted by the bounded $\mathbb{R}^2$-valued $\mathbb{F}$-progressive processes $\phi = (\phi^1,\phi^2)$ such that \eqref{rfdme} holds with $X = X_0 + \int_0^\cdot \phi^1_sdW^{(1)}_s + \int_0^\cdot \phi^2_sdW^{(2)}_s$ where $X_0\in \mathcal{F}_0$. Let $\mathcal{U}$ be the class of stochastic intervals of the form $[[S, +\infty[[$ where $S$ is a $\mathbb{F}$-stopping time. We claim that $\phi=\big(1\!\!1_{[[S, + \infty[[}, 1\!\!1_{[[J, + \infty[[}\big) \in \mathcal{H}$ for every $\mathbb{F}$-stopping times $S$ and $J$. In order to check~(\ref{rfdme}) for such $\phi$, we only need to show for $j=1$ since the argument for $j=2$ is the same. With a slight abuse of notation, any sub-sigma algebra of $\mathcal{F}_T$ of the form $\Omega^*_1\otimes \mathcal{G}$ will be denoted by $\mathcal{G}$ where $\Omega^*_1$ is the trivial sigma-algebra on the first copy $\Omega_1$.

At first, we split $\Omega = \bigcup_{n=1}^\infty \{T^{k}_n = T^{k,1}_1 \}$ and we make the argument on the sets $\{T^k_n=T^{k,1}_1 \};~n\ge 1$. In this case, we know that $\mathcal{F}^k_{T^{k,1}_1} = \mathcal{F}^{k,1}_{T^{k,1}_1}\otimes \mathcal{F}^{k,2}_{T^{k,2}_{n-1}} \ $~a.s and

\[
\Delta \delta^{k} X_{ T^{k,j}_{1} } =
\Delta \delta^{k}  \left( W^{(1)}_{T^{k,1}_1} - W^{(1)}_S \right) 1\!\!1_{ \{ S < T^{k,1}_1  \} } + \Delta \delta^{k} \left( W^{(2)}_{T^{k,1}_1} - W^{(2)}_J \right) 1\!\!1_{ \{ J < T^{k,1}_1  \} }.
\]
The independence between $W^{(1)}$ and $W^{(2)}$ and the independence of the Brownian motion increments yield

\begin{align*}
\Delta \delta^{k} \left( W^{(2)}_{T^{k,1}_1} - W^{(2)}_J \right) & = \mathbb{E} \left( W^{(2)}_{T^{k,1}_1} - W^{(2)}_J \mid \mathcal{F}^k_{T^{k,1}_1} \right) - \mathbb{E} \left( W^{(2)}_{T^{k,1}_1} - W^{(2)}_J \mid \mathcal{F}^k_{T^{k}_{n-1}} \right) \\
& = \mathbb{E} \left( W^{(2)}_{T^{k,1}_1} - W^{(2)}_J \mid \mathcal{F}^{k,1}_{T^{k,1}_1}\otimes \mathcal{F}^{k,2}_{T^{k,2}_{n-1}} \right)  = \mathbb{E} \left( W^{(2)}_{T^{k,1}_1} - W^{(2)}_J \mid \sigma \{T^{k,1}_1\} \otimes \mathcal{F}^{k,2}_{T^{k,2}_{n-1}} \right) \\
& = \mathbb{E} \left( W^{(2)}_{T^{k,1}_1} - W^{(2)}_J \mid  \mathcal{F}^{k,2}_{T^{k,2}_{n-1}} \right)  = 0 \quad a.s
\end{align*}
on the set $\{T_{n-1}^k \leq J < T_{n}^k=T^{k,1}_1 \}$. We also have,

\begin{align*}
\Delta \delta^{k} \left( W^{(2)}_{T^{k,1}_1} - W^{(2)}_J \right) & = \mathbb{E} \left( W^{(2)}_{T^{k,1}_1} - W^{(2)}_J \mid \mathcal{F}^k_{T^{k,1}_1} \right) - \mathbb{E} \left( W^{(2)}_{T^{k,1}_1} - W^{(2)}_J \mid \mathcal{F}^k_{T^{k}_{n-1}} \right) \\
& = \mathbb{E} \left( W^{(2)}_{T^{k,1}_1} - W^{(2)}_J \mid \mathcal{F}^{k,1}_{T^{k,1}_1}\otimes \mathcal{F}^{k,2}_{T^{k,2}_{n-1}} \right) - \mathbb{E} \left( W^{(2)}_{T^{k}_{n-1}} - W^{(2)}_J \mid \mathcal{F}^{k,2}_{T^{k,2}_{n-1}} \right) \\
& = \mathbb{E} \left( W^{(2)}_{T^{k,1}_1} - W^{(2)}_J \mid \sigma \{T^{k,1}_1\} \otimes \mathcal{F}^{k,2}_{T^{k,2}_{n-1}} \right) - \mathbb{E} \left( W^{(2)}_{T^{k}_{n-1}} - W^{(2)}_J \mid \mathcal{F}^{k,2}_{T^{k,2}_{n-1}} \right) \\
& = \mathbb{E} \left( W^{(2)}_{T^{k}_{n-1}} - W^{(2)}_J \mid  \mathcal{F}^{k,2}_{T^{k,2}_{n-1}} \right) - \mathbb{E} \left( W^{(2)}_{T^{k}_{n-1}} - W^{(2)}_J \mid \mathcal{F}^{k,2}_{T^{k,2}_{n-1}} \right) = 0,
\end{align*}
on the set $\{J < T_{n-1}^k \}$. By construction $\mathcal{F}^k_{T^{k,1}_1} = \mathcal{F}^{k,1}_{T^{k,1}_1}\otimes \mathcal{F}^{k,2}_{T^{k,2}_{n-1}} \ $~a.s and again the independence between
$W^{(1)}$ and $W^{(2)}$ yields

\begin{align*}
\Delta \delta^{k} \left( W^{(1)}_{T^{k,1}_1} - W^{(1)}_S \right) & = \mathbb{E} \left( W^{(1)}_{T^{k,1}_1} - W^{(1)}_S \mid \mathcal{F}^k_{T^{k,1}_1} \right) - \mathbb{E} \left( W^{(1)}_{T^{k,1}_1} - W^{(1)}_S \mid \mathcal{F}^k_{T^{k}_{n-1}} \right) \\
& = \mathbb{E} \left( W^{(1)}_{T^{k,1}_1} - W^{(1)}_S \mid \mathcal{F}^k_{T^{k,1}_1} \right)
\end{align*}
on $\{T_{n-1}^k \leq S < T_{n}^k=T^{k,1}_1 \}$. Similarly,


\begin{align*}
\Delta \delta^{k} \left( W^{(1)}_{T^{k,1}_1} - W^{(1)}_S \right) & = \mathbb{E} \left( W^{(1)}_{T^{k,1}_1} - W^{(1)}_S \mid \mathcal{F}^k_{T^{k,1}_1} \right) - \mathbb{E} \left( W^{(1)}_{T^{k,1}_1} - W^{(1)}_S \mid \mathcal{F}^k_{T^{k}_{n-1}} \right) \\
& = \mathbb{E} \left( W^{(1)}_{T^{k,1}_1} - W^{(1)}_S \mid \mathcal{F}^k_{T^{k,1}_1} \right) - \mathbb{E} \left( W^{(1)}_{T^{k}_{n-1}} - W^{(1)}_S \mid \mathcal{F}^{k,2}_{T^{k}_{n-1}} \right) \\
& = \mathbb{E} \left( W^{(1)}_{T^{k,1}_1} - W^{(1)}_S \mid \mathcal{F}^k_{T^{k,1}_1} \right) - \mathbb{E} \left( W^{(1)}_{T^{k}_{n-1}} \mid \mathcal{F}^{k,2}_{T^{k}_{n-1}}  \right) + \mathbb{E} \left( W^{(1)}_S \mid \mathcal{F}^{k,2}_{T^{k}_{n-1}} \right) \\
& = \mathbb{E} \left( W^{(1)}_{T^{k,1}_1} - W^{(1)}_S \mid \mathcal{F}^k_{T^{k,1}_1} \right)  + \mathbb{E} \left( W^{(1)}_S \mid \mathcal{F}^{k,2}_{T^{k}_{n-1}} \right)
\end{align*}
on $\{S < T_{n-1}^k\}$. By assumption $S$ is an $\mathbb{F}$-stopping time, where $\mathbb{F}$ is a product filtration. Hence,  $\mathbb{E} \big(W^{(1)}_S | \mathcal{F}^{k,2}_{T^{k}_{n-1}}\big)=0$ a.s on $\{S < T_{n-1}^k\}$.

Summing up the above identities, we shall conclude $\big(1\!\!1_{[[S, + \infty[[}, 1\!\!1_{[[J, + \infty[[}\big)\in \mathcal{H}$. In particular, the constant process $(1,1)\in \mathcal{H}$ and if $\phi^n$ is a sequence in $\mathcal{H}$ such that $\phi^n\rightarrow \phi$ a.s $Leb\times \mathbb{Q}$ with $\phi$ bounded, then a routine application of Burkh\"{o}lder inequality shows that $\phi \in \mathcal{H}$. Since $\mathcal{U}$ generates the optional sigma-algebra then we shall apply the monotone class theorem and, by localization, we may conclude the proof.




\end{proof}

\begin{lemma}\label{lh}
Let $B$ be a one-dimensional Brownian motion and $S^k_n:= \inf\{ t> S^k_{n-1}; |B_t-B_{S^k_{n-1}}|=2^{-k}\}$ with $S^k_0=0~a.s$,~$n\ge 1$. If $\varphi$ is an absolutely continuous and non-negative adapted process then there exists a deterministic constant $C$ which does not depend on $m,k\ge 1$ such that

$$\Bigg|\int_{S^k_{m-1}}^{S^k_m}\varphi_tdB_t\Bigg|^21\!\!1_{\{S^{k}_{m}\le T\}} \le C\sup_{0\le t\le T}|\varphi_t|^22^{-2k}~a.s;~k,m\ge 1.$$
\end{lemma}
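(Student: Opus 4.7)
The plan is to apply integration by parts. Since $\varphi$ is absolutely continuous, its quadratic covariation with $B$ vanishes, so the It\^o product rule applied to $t\mapsto \varphi_t(B_t-B_{S^k_{m-1}})$ between the stopping times $S^k_{m-1}$ and $S^k_m$ yields
\begin{equation*}
\int_{S^k_{m-1}}^{S^k_m}\varphi_t\,dB_t \;=\; \varphi_{S^k_m}\bigl(B_{S^k_m}-B_{S^k_{m-1}}\bigr) \;-\; \int_{S^k_{m-1}}^{S^k_m}\bigl(B_t-B_{S^k_{m-1}}\bigr)\varphi'_t\,dt.
\end{equation*}
This rewrites the stochastic integral as a boundary contribution plus a classical Lebesgue integral, both amenable to pathwise estimation.

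The decisive observation is that on the event $\{S^k_m\le T\}$ the very definition of the hitting times forces $|B_t-B_{S^k_{m-1}}|\le 2^{-k}$ for every $t\in[S^k_{m-1},S^k_m]$, with equality at $t=S^k_m$. Therefore the boundary term is bounded in absolute value by $2^{-k}|\varphi_{S^k_m}|\le 2^{-k}\sup_{0\le t\le T}|\varphi_t|$, while the drift-type integral is bounded by $2^{-k}\int_{S^k_{m-1}}^{S^k_m}|\varphi'_t|\,dt \le 2^{-k}\int_0^T|\varphi'_t|\,dt$. Squaring and applying $(a+b)^2\le 2a^2+2b^2$ produces
\begin{equation*}
\Bigl|\int_{S^k_{m-1}}^{S^k_m}\varphi_t\,dB_t\Bigr|^2 1\!\!1_{\{S^k_m\le T\}} \;\le\; 2\cdot 2^{-2k}\sup_{0\le t\le T}|\varphi_t|^2 \;+\; 2\cdot 2^{-2k}\Bigl(\int_0^T|\varphi'_t|\,dt\Bigr)^2,
\end{equation*}
which exhibits the required $2^{-2k}$ scaling and a $\sup_{[0,T]}|\varphi|^2$ factor in front of the first term.

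The main obstacle is absorbing the total-variation quantity $\int_0^T|\varphi'_t|\,dt$ into the advertised constant $C$. The non-negativity hypothesis is relevant here: for a non-negative absolutely continuous $\varphi$, the quantity $\int_0^T|\varphi'_t|\,dt$ coincides with the total variation $V_T(\varphi)$ of the path, and in the concrete uses of the lemma in the appendix the process $\varphi$ is one of the structural coefficients of the model (a bounded volatility or a bounded drift) for which $V_T(\varphi)$ is controlled by a fixed multiple of $\sup_{[0,T]}|\varphi|$. The resulting deterministic constant $C$ depends on $\varphi$ and on $T$ but is independent of $m$ and $k$, which is precisely the claim; since all the estimates above hold pathwise, the bound is almost sure, as required.
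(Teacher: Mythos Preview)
Your integration-by-parts step and the boundary estimate are fine and coincide with the paper's approach. The gap is in the handling of the second term. You bound
\[
\Bigl|\int_{S^k_{m-1}}^{S^k_m}(B_t-B_{S^k_{m-1}})\varphi'_t\,dt\Bigr|
\;\le\; 2^{-k}\int_0^T|\varphi'_t|\,dt,
\]
i.e.\ by the total variation of $\varphi$ over the \emph{whole} interval $[0,T]$. This quantity is random, and for a generic absolutely continuous adapted process there is no deterministic constant $C$ with $V_T(\varphi)^2\le C\sup_{[0,T]}|\varphi_t|^2$ almost surely. Your final paragraph concedes this and tries to rescue the argument by appealing to ``the concrete uses of the lemma in the appendix'', but that is not a proof of the lemma as stated; and in fact the actual application is to the moving averages $\varphi^{\ell,j}_t=\ell\int_{t-1/\ell}^t\phi^{H,j}_s\,ds$ of an arbitrary $\phi^{H,j}\in\text{B}^2(\mathbb{F})$, whose total variation over $[0,T]$ is neither bounded nor controlled by $\sup|\varphi^{\ell,j}|$ uniformly in $\ell$. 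Allowing $C$ to depend on the random path of $\varphi$ contradicts the word ``deterministic'' in the statement.

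The paper avoids this by keeping the variation \emph{local} to $[S^k_{m-1},S^k_m]$: it bounds the same integral by $2^{-k}\bigl(Var(\varphi)_{S^k_m}-Var(\varphi)_{S^k_{m-1}}\bigr)$ and then identifies this local variation with the increment $|\varphi_{S^k_m}-\varphi_{S^k_{m-1}}|$. The latter is trivially bounded by $2\sup_{[0,T]}|\varphi_t|$, which gives the stated inequality with a truly deterministic $C$. The missing idea in your argument is precisely this localisation: do not throw away the restriction to $[S^k_{m-1},S^k_m]$ before converting the variation into something comparable to $\sup|\varphi|$.
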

\begin{proof}
For given $m,k\ge 1$, Young inequality and integration by parts yield
\begin{align*}
\Bigg|\int_{S^k_{m-1}}^{S^k_m}\varphi_tdB_t\Bigg|^2 & \le C \Big\{ |\varphi_{S^k_m}|^2|B_{S^k_m}|^2 +|\varphi_{S^k_{m-1}}|^2|B_{S^k_{m-1}}|^2 + \big|\int_{S^k_{m-1}}^{S^k_m}B_td\varphi_t \big|^2\Big\}\\
& \le C 2^{-2k}\sup_{0\le t\le T}|\varphi_t|^2 + C\sup_{S^k_{m-1}\le t\le S^k_m}|B_t|^2|Var(\varphi)_{S^k_{m}} - Var(\varphi)_{S^k_{m-1}}|^2\\
& \le C 2^{-2k}\sup_{0\le t\le T}|\varphi_t|^2 + C2^{-2k}|Var(\varphi)_{S^k_{m}} - Var(\varphi)_{S^k_{m-1}}|^2\\
& = C 2^{-2k}\sup_{0\le t\le T}|\varphi_t|^2 + C2^{-2k}|\varphi_{S^k_{m}} - \varphi_{S^k_{m-1}}|^2\\
& \le C2^{-2k}\sup_{0\le t\le T}|\varphi_t|^2~a.s~\text{on}~\{S^{k}_{m}\le T\},
\end{align*}
for some constant $C$ which does not depend on $m,k\ge 1$.
\end{proof}

\begin{lemma}\label{boundness1}
Assume that $\phi^{H,j}\in \text{B}^2(\mathbb{F})$ for some $j=1,\ldots,p$. Then there exists a constant $C$ such that

$$\sup_{k\ge 1}\mathbb{E}\sup_{0\le t\le T}|\mathbf{D}^{k,j}X_t|^2 \le C \mathbb{E}\sup_{0\le t\le T}|\phi^{H,j}|^2.$$

\end{lemma}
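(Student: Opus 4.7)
First I would extend Lemma~\ref{identitytkn} from $n=1$ to arbitrary $n\ge 1$: applying the strong Markov property of Brownian motion at $T^{k,j}_{n-1}$, so that the post-$T^{k,j}_{n-1}$ Brownian increments are independent of $\mathcal{F}_{T^{k,j}_{n-1}}$, the monotone-class argument of Lemma~\ref{identitytkn} transfers to the shifted filtration and yields
\[
\Delta\delta^{k}X_{T^{k,j}_n}=\mathbb{E}\!\left[\int_{T^{k,j}_{n-1}}^{T^{k,j}_n}\phi^{H,j}_s\,dW^{(j)}_s\,\Big|\,\mathcal{F}^{k}_{T^{k,j}_n}\right]\quad\text{a.s.}
\]
Since $\Delta A^{k,j}_{T^{k,j}_n}=W^{(j)}_{T^{k,j}_n}-W^{(j)}_{T^{k,j}_{n-1}}$ is $\mathcal{F}^k_{T^{k,j}_n}$-measurable with $|\Delta A^{k,j}_{T^{k,j}_n}|=2^{-k}$, dividing and taking $\mathcal{F}^k_{T^{k,j}_{n-1}}$-conditional expectation via the tower property produces
\[
\mathbf{D}^{k,j}X_{T^{k,j}_n}=\mathbb{E}\!\left[\frac{1}{\Delta A^{k,j}_{T^{k,j}_n}}\int_{T^{k,j}_{n-1}}^{T^{k,j}_n}\phi^{H,j}_s\,dW^{(j)}_s\,\Big|\,\mathcal{F}^k_{T^{k,j}_{n-1}}\right].
\]

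Next, I would apply conditional Jensen for $x\mapsto x^2$ followed by the conditional It\^o isometry (legitimate by strong Markov at $T^{k,j}_{n-1}$) to obtain
\[
|\mathbf{D}^{k,j}X_{T^{k,j}_n}|^2\le 2^{2k}\,\mathbb{E}\!\left[\int_{T^{k,j}_{n-1}}^{T^{k,j}_n}|\phi^{H,j}_s|^2\,ds\,\Big|\,\mathcal{F}^k_{T^{k,j}_{n-1}}\right]\le 2^{2k}\,\mathbb{E}\!\left[(\phi^{H,j,*}_T)^2\tau_n\,\Big|\,\mathcal{F}^k_{T^{k,j}_{n-1}}\right],
\]
with $\phi^{H,j,*}_T:=\sup_{0\le s\le T}|\phi^{H,j}_s|$ and $\tau_n:=T^{k,j}_n-T^{k,j}_{n-1}$. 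Since $\tau_n$ is independent of $\mathcal{F}_{T^{k,j}_{n-1}}$ by strong Markov with $\mathbb{E}\tau_n^p=c_p\,2^{-2kp}$, I would split $\phi^{H,j,*}_T=\phi^{H,j,*}_{T^{k,j}_{n-1}}+R_n$ (with $R_n\ge 0$): on the $\mathcal{F}_{T^{k,j}_{n-1}}$-measurable past piece the $2^{2k}$ prefactor is exactly cancelled by $\mathbb{E}[\tau_n\,|\,\cdot]=2^{-2k}$, while the future increment is bounded by a conditional Cauchy--Schwarz exploiting $\sqrt{\mathbb{E}\tau_n^2}\lesssim 2^{-2k}$. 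Collecting both contributions gives a $k$-independent constant $C$ with
\[
|\mathbf{D}^{k,j}X_{T^{k,j}_n}|^2\le C\,\mathbb{E}\!\left[(\phi^{H,j,*}_T)^2\,\Big|\,\mathcal{F}^k_{T^{k,j}_{n-1}}\right].
\]

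To finish, because $\mathbf{D}^{k,j}X$ is constant on each $]\!]T^{k,j}_{n-1},T^{k,j}_n]\!]$, one has $\sup_{0\le t\le T}|\mathbf{D}^{k,j}X_t|^2=\sup_n|\mathbf{D}^{k,j}X_{T^{k,j}_n}|^2\le C\sup_n M^{(k,j)}_n$, where $M^{(k,j)}_n:=\mathbb{E}[(\phi^{H,j,*}_T)^2\,|\,\mathcal{F}^k_{T^{k,j}_{n-1}}]$ is a nonnegative martingale in the discrete filtration $(\mathcal{F}^k_{T^{k,j}_{n-1}})_{n\ge 1}$ with terminal value $(\phi^{H,j,*}_T)^2$. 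Doob's maximal inequality applied to $M^{(k,j)}$ then delivers the asserted uniform-in-$k$ bound. The main technical obstacle is the second step above: because $\tau_n$ and $\phi^{H,j,*}_T$ both involve the Brownian path after $T^{k,j}_{n-1}$, a naive Cauchy--Schwarz would force the fourth moment $\mathbb{E}(\phi^{H,j,*}_T)^4$ into the estimate; the past/future decomposition, combined with the exact moment scaling $\mathbb{E}\tau_n^p=c_p\,2^{-2kp}$, is the crucial device that keeps the constant dimensionless in $k$ and produces the clean $L^2$-to-$L^2$ bound.
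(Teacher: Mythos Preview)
Your first step (extending Lemma~\ref{identitytkn} to general $n$ and obtaining the representation of $\mathbf{D}^{k,j}X_{T^{k,j}_n}$) matches the paper and is sound. The remainder of the argument, however, has two genuine gaps.

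\textbf{Gap 1: the past/future split does not avoid fourth moments.} After Jensen and the conditional It\^o isometry you correctly reach
\[
|\mathbf{D}^{k,j}X_{T^{k,j}_n}|^2\le 2^{2k}\,\mathbb{E}\big[(\phi^{H,j,*}_T)^2\tau_n\,\big|\,\mathcal{F}^k_{T^{k,j}_{n-1}}\big].
\]
Writing $\phi^{H,j,*}_T=\phi^{H,j,*}_{T^{k,j}_{n-1}}+R_n$, the ``past'' piece indeed decouples from $\tau_n$ and produces the desired factor $2^{-2k}$. But for the ``future'' contributions $R_n^2\tau_n$ and $2\phi^{H,j,*}_{T^{k,j}_{n-1}}R_n\tau_n$ your proposed conditional Cauchy--Schwarz gives
\[
\mathbb{E}[R_n^2\tau_n\mid\cdot]\le\big(\mathbb{E}[R_n^4\mid\cdot]\big)^{1/2}\big(\mathbb{E}[\tau_n^2\mid\cdot]\big)^{1/2},
\]
and $\mathbb{E}[R_n^4]$ is only controlled by $\mathbb{E}[(\phi^{H,j,*}_T)^4]$, which the hypothesis $\phi^{H,j}\in\text{B}^2(\mathbb{F})$ does \emph{not} provide. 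The increment $R_n$ can be as large as $\phi^{H,j,*}_T$ itself (whenever the running maximum is attained after $T^{k,j}_{n-1}$), so there is no smallness of $R_n$ to exploit; the decomposition does not decouple the future randomness of $\phi^{H,j,*}_T$ from that of $\tau_n$.

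\textbf{Gap 2: the final Doob step is in the wrong $L^p$.} Even granting the pointwise bound $|\mathbf{D}^{k,j}X_{T^{k,j}_n}|^2\le C\,M^{(k,j)}_n$ with $M^{(k,j)}_n=\mathbb{E}[(\phi^{H,j,*}_T)^2\mid\mathcal{F}^k_{T^{k,j}_{n-1}}]$, this is a nonnegative $L^1$-martingale in $n$, and Doob's $L^1$ maximal inequality does \emph{not} yield $\mathbb{E}[\sup_n M^{(k,j)}_n]\le C\,\mathbb{E}[(\phi^{H,j,*}_T)^2]$; one would need $L\log L$ integrability of $(\phi^{H,j,*}_T)^2$, again beyond $\text{B}^2$.

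\textbf{What the paper does instead.} The paper avoids both obstacles by replacing the It\^o-isometry (in-expectation) control with a \emph{pathwise} estimate. Via Lemma~\ref{lh} (integration by parts for absolutely continuous integrands) together with a smoothing $\varphi^{\ell,j}_t=\ell\int_{t-1/\ell}^t\phi^{H,j}_s\,ds$ and a pathwise limit, one obtains
\[
\Big|\int_{T^{k,j}_{n-1}}^{T^{k,j}_n}\phi^{H,j}_s\,dW^{(j)}_s\Big|^2 1_{\{T^{k,j}_n\le T\}}\le C\,(\phi^{H,j,*}_T)^2\,2^{-2k}\qquad\text{a.s., uniformly in }n.
\]
Because this bound holds $\omega$-by-$\omega$ for every $n$, one may take $\sup_n$ \emph{before} expectation; combined with $|\mathbf{D}^{k,j}X_{T^{k,j}_n}|\le 2^k\,\mathbb{E}[\sup_m|\int_{T^{k,j}_{m-1}}^{T^{k,j}_m}\phi^{H,j}\,dW^{(j)}|\mid\mathcal{F}^k_{T^{k,j}_{n-1}}]$ and Doob's $L^2$ inequality applied to this last martingale (whose terminal value is now in $L^2$ by the pathwise bound), the claimed $L^2$-to-$L^2$ estimate follows. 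The essential idea your plan is missing is this almost-sure control of the stochastic integral over one excursion interval; that is what Lemma~\ref{lh} supplies and what the It\^o isometry cannot.
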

\begin{proof}
By repeating the argument employed in Lemma~\ref{identitytkn} for $k\ge 1$,~$n > 1$ and $j\in \{1,\ldots,p\}$, we shall write

$$\mathbf{D}^{k,j}X_t = \mathbb{E}\Bigg[ \frac{1}{\Delta A^{k,j}_{T^{k,j}_n}}\int_{T^{k,j}_{n-1}}^{T^{k,j}_{n}}\phi^{H,j}_tdW^{(j)}_t \big| \mathcal{F}^k_{T^{k,j}_{n-1}} \Bigg]~a.s~\text{on}~\{ T^{k,j}_{n-1} < t \le T^{k,j}_{n}\}.
$$
Doob maximal inequalities combined with Jensen inequality yield

\begin{equation}\label{doobd}
\mathbb{E}\sup_{0\le t\le T}|\mathbf{D}^{k,j}X_t|^2 \le C 2^{2k} \mathbb{E}\sup_{n\ge 1} \Bigg| \int_{T^{k,j}_{n-1}}^{T^{k,j}_n} \phi^{H,j}_tdW^{(j)}_t\Bigg|^21\!\!1_{\{T^{k,j}_{n}\le T\}},
\end{equation}
for $k\ge 1$ and for some positive constant $C$. Now, we need a path-wise argument in order to estimate the right-hand side of \eqref{doobd}. For this, let us define

$$\varphi^{\ell,j}_t:= \ell \int_{t-\frac{1}{\ell}}^{t}\phi^{H,j}_sds;~\ell\ge 1;~0\le t\le T.$$
Lemma~\ref{lh} and the fact that $\sup_{0\le t\le T}|\varphi^{\ell,j}_t|^2 \le \sup_{0\le t\le T}|\phi^{H,j}_t|^2~\forall \ell \ge 1$ yield

\begin{equation}\label{lh1}
\Bigg|\int_{T^{k,j}_{n-1}}^{T^{k,j}_n}\varphi^{\ell,j}_tdW^{(j)}_t\Bigg|^2 1\!\!1_{\{T^{k,j}_{n}\le T\}}\le C\sup_{0\le t\le T}|\phi^{H,j}_t|^22^{-2k};~\ell,n,k\ge 1,
\end{equation}
where $C$ is the constant in Lemma~\ref{lh}. Now, by applying Lemma 2.4 in Nutz~\cite{nutz}, the estimate \eqref{lh1} and a routine localization procedure, the following estimate holds

$$
\Bigg|\int_{T^{k,j}_{n-1}}^{T^{k,j}_n}\phi^{H,j}_tdW^{(j)}_t\Bigg|^21\!\!1_{\{T^{k,j}_{n}\le T\}} \le C\sup_{0\le t\le T}|\phi^{H,j}_t|^22^{-2k};~k\ge 1,
$$
and therefore

\begin{equation}\label{lh2}
\mathbb{E}\sup_{n\ge 1}\Bigg|\int_{T^{k,j}_{n-1}}^{T^{k,j}_n}\phi^{H,j}_tdW^{(j)}_t\Bigg|^2 1\!\!1_{\{T^{k,j}_{n}\le T\}}\le C\mathbb{E}\sup_{0\le t\le T}|\phi^{H,j}_t|^22^{-2k}~\forall k\ge 1.
\end{equation}
The estimate \eqref{doobd} combined with \eqref{lh2} allow us to conclude the proof if $\phi^{H,j} \ge 0~a.s~(Leb\times \mathbb{Q})$. By splitting $\phi^{H,j}= \phi^{H,j,+} - \phi^{H,j,-}$ into the negative and positive parts, we may conclude the proof of the lemma.
\end{proof}

The following result allows us to get a uniform-type weak convergence of $\mathbf{D}^{k,j}X$ under very mild integrability assumption.

\begin{theorem}\label{mainres}
Let $H$ be a $\mathbb{Q}$-square integrable contingent claim satisfying assumption~\textbf{(M)} and assume that $H$ admits a representation $\phi^{H}$ such that $\phi^{H,j} \in \text{B}^2(\mathbb{F})$ for some $j\in \{1,\ldots, p\}$. Then

$$\lim_{k\rightarrow \infty}\mathbf{D}^{k,j}X = \phi^{H,j}$$
weakly in $\text{B}^2(\mathbb{F})$.
\end{theorem}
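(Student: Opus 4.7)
My plan is to combine the uniform $\text{B}^2$-bound from Lemma~\ref{boundness1} with the already-established weak $L^2(Leb\times\mathbb{Q})$-convergence from Theorem~\ref{deltaj} and conclude via a subsequence argument for the weak topology $\sigma(\text{B}^2,\text{M}^2)$. Since $\phi^{H,j}\in\text{B}^2(\mathbb{F})$ by assumption, Lemma~\ref{boundness1} immediately yields
$$\sup_{k\ge 1}\mathbb{E}\sup_{0\le t\le T}|\mathbf{D}^{k,j}X_t|^2\le C\,\mathbb{E}\sup_{0\le t\le T}|\phi^{H,j}_t|^2<\infty,$$
so $\{\mathbf{D}^{k,j}X\}_k$ lies in a norm-bounded subset of $\text{B}^2(\mathbb{F})$. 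By the Banach--Alaoglu-type weak compactness for the duality $(\text{B}^2,\text{M}^2)$ (see~\cite{dellacherie,LEAO_OHASHI09}), every subsequence admits a further subsequence converging weakly in $\text{B}^2(\mathbb{F})$ to some $Y\in\text{B}^2(\mathbb{F})$.

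Next, I would identify $Y$ with $\phi^{H,j}$. For any bounded $\mathbb{F}$-predictable $h$, the process $J_t:=\int_0^t h_s\,ds$ lies in $\text{M}^2$ because its total variation is bounded and hence square-integrable. The pairing $(J,N)=\mathbb{E}\int_0^T N_s\,dJ_s=\mathbb{E}\int_0^T h_s N_s\,ds$ is a continuous linear functional on $\text{B}^2$, so weak convergence of the subsequence $\mathbf{D}^{k_n,j}X\to Y$ forces
$$\mathbb{E}\int_0^T h_s\,\mathbf{D}^{k_n,j}X_s\,ds\longrightarrow \mathbb{E}\int_0^T h_s Y_s\,ds.$$
On the other hand, Theorem~\ref{deltaj} asserts $\mathbf{D}^{k,j}X\to\phi^{H,j}$ weakly in $L^2(Leb\times\mathbb{Q})$, so the same integral converges to $\mathbb{E}\int_0^T h_s\phi^{H,j}_s\,ds$. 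Ranging $h$ over a countable determining family of bounded predictable processes yields $Y=\phi^{H,j}$ in the $L^2(Leb\times\mathbb{Q})$ sense.

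To conclude indistinguishability of $Y$ and $\phi^{H,j}$ as elements of $\text{B}^2$, I would exploit that both are $\mathbb{F}$-predictable (the $\mathbf{D}^{k,j}X$ are $\mathbb{F}^k$-predictable hence $\mathbb{F}$-predictable, and their weak $\text{B}^2$-limit inherits predictability as this class is weakly closed; $\phi^{H,j}$ is predictable by construction). Combined with $Leb\times\mathbb{Q}$-a.e.~equality, the predictable section theorem (Dellacherie-Meyer~\cite{dellacherie}) gives indistinguishability; equivalently, testing against the full class of $J\in\text{M}^2$ (including atomic $J_t=g\mathbf{1}_{\{G\le t\}}$ at predictable stopping times $G$) produces identical pairings. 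A standard subsequence argument then upgrades to the full-sequence statement $\mathbf{D}^{k,j}X\to\phi^{H,j}$ weakly in $\text{B}^2(\mathbb{F})$.

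The main obstacle is the last step: merely identifying the $L^2(Leb\times\mathbb{Q})$ limit is not enough to conclude $\text{B}^2$-equivalence, since optional processes that coincide $Leb\times\mathbb{Q}$-a.e.\ need not be indistinguishable. The crucial input is the predictability of both the approximating sequence and the target process, which, together with the $L^2(Leb\times\mathbb{Q})$-identification and the predictable section theorem, promotes the $L^2$-equality to $\text{B}^2$-equivalence and makes the pairings against jump-type $J\in\text{M}^2$ converge as required.
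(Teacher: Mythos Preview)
Your argument is essentially the paper's: boundedness in $\text{B}^2(\mathbb{F})$ from Lemma~\ref{boundness1}, weak relative sequential compactness (the paper invokes the Eberlein theorem, you invoke Banach--Alaoglu), identification of any cluster point through the weak $L^2(Leb\times\mathbb{Q})$ convergence of Theorem~\ref{deltaj}, and a subsequence argument. The paper compresses the identification step into one line: since $\|\cdot\|_{\text{B}^2}$ dominates $\|\cdot\|_{L^2(Leb\times\mathbb{Q})}$, any $\sigma(\text{B}^2,\text{M}^2)$-cluster point is also a weak $L^2(Leb\times\mathbb{Q})$-cluster point and hence equals $\phi^{H,j}$; this is exactly your test against the absolutely continuous $J_t=\int_0^t h_s\,ds$, stated more abstractly.

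One point to correct: your invocation of the predictable section theorem in the last paragraph is wrong. That theorem does \emph{not} upgrade $Leb\times\mathbb{Q}$-a.e.\ equality of predictable processes to indistinguishability (consider $\mathbf 1_{\{t=1/2\}}$ versus $0$, both deterministic and predictable). The concern you raise---that $Y=\phi^{H,j}$ in $L^2(Leb\times\mathbb{Q})$ need not force equality in $\text{B}^2(\mathbb{F})$---is a genuine subtlety, and the paper passes over it just as quickly as you do, simply asserting ``full convergence'' from the norm comparison. So your extra paragraph does not buy additional rigor; drop the section-theorem claim and either match the paper's embedding argument or flag the point as relying on the identification conventions of the $\text{B}^p/\text{M}^q$ duality in~\cite{dellacherie,LEAO_OHASHI09}.
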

\begin{proof}
Let us fix $j=1,\ldots,p$. From Lemma~\ref{boundness1}, we know that $\{ \mathbf{D}^{k,j}X; k\ge 1 \}$ is bounded in $\text{B}^2(\mathbb{F})$ and therefore this set is weakly relatively compact in $\text{B}^2(\mathbb{F})$. By Eberlein Theorem, we also know that it is $\text{B}^2(\mathbb{F})$-weakly relatively sequentially compact. From Theorem~\ref{deltaj},

$$\lim_{k\rightarrow \infty}\mathbf{D}^{k,j}X = \phi^{H,j}$$
weakly in $L^2(Leb\times \mathbb{Q})$ and since $\|\cdot\|_{\text{B}^2}$ is stronger than $\|\cdot\|_{L^2(Leb\times\mathbb{Q})}$, we necessarily have the full convergence
$$\lim_{k\rightarrow \infty}\mathbf{D}^{k,j}X = \phi^{H,j}$$
in $\sigma(\text{B}^2, \text{M}^2)$.
\end{proof}

Next, we analyze the pointwise strong convergence for our approximation scheme.

\subsection{Strong Convergence under Mild Regularity}\label{capitulo:aproximacao_forte}
In this section, we provide a pointwise strong convergence result for GKW projectors under rather weak path regularity conditions. Let us consider the stopping times
$$\tau^j:=\inf\big\{t >0; |W^{(j)}_t|=1 \big\};~j=1,\ldots, p,$$
and we set

$$\psi^{H,j}(u):= \mathbb{E}|\phi^{H,j}_{\tau^ju} - \phi^{H,j}_{0}|^2,~\text{for}~u\ge 0,j=1\ldots,p.$$
Here, if $u$ satisfies $\tau^ju \ge T$ we set $\phi^{H,j}_{\tau^ju}:= \phi^{H,j}_{T}$ and for simplicity we assume that $\psi^{H,j}(0-)=0$.

\begin{theorem} \label{result1}
If $H$ is a $\mathbb{Q}$-square integrable contingent claim satisfying \textbf{(M)} and there exists a representation $\phi^H=(\phi^{H,1},\ldots, \phi^{H,p})$ of $H$ such that $\phi^{H,j}\in \text{B}^2(\mathbb{F})$ for some $j\in \{1,\ldots, p\}$ and the initial time $t=0$ is a Lebesgue point of $u\mapsto \psi^{H,j}(u)$, then

\begin{equation}\label{pwisestrong}
\mathbf{D}^{k,j} X_{T^{k,j}_{1}} \rightarrow  \phi^{H,j}_{0}~\quad \text{as}~k\rightarrow \infty.
\end{equation}
\end{theorem}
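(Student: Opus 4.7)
The plan is to reduce the pointwise convergence to an $L^2$-decay estimate on the random interval $[0,T^{k,j}_1]$, and then to invoke the Lebesgue-point hypothesis through Brownian scaling. First, using Lemma \ref{identitytkn} together with the facts that $W^{(j)}_{T^{k,j}_1}=\Delta A^{k,j}_{T^{k,j}_1}\in\{\pm 2^{-k}\}$ is $\mathcal{F}^{k}_{T^{k,j}_1}$-measurable and that $\mathcal{F}^{k}_0$ is trivial, one obtains
$$\mathbf{D}^{k,j}X_{T^{k,j}_1}=\mathbb{E}\!\left[\frac{1}{W^{(j)}_{T^{k,j}_1}}\int_{0}^{T^{k,j}_1}\phi^{H,j}_s\,dW^{(j)}_s\right].$$
Since $\phi^{H,j}_0$ is $\mathcal{F}_0$-measurable, hence a.s.\ constant, and $\int_0^{T^{k,j}_1}\phi^{H,j}_0\,dW^{(j)}_s=\phi^{H,j}_0\,W^{(j)}_{T^{k,j}_1}$, subtracting yields
$$\mathbf{D}^{k,j}X_{T^{k,j}_1}-\phi^{H,j}_0=\mathbb{E}\!\left[\frac{1}{W^{(j)}_{T^{k,j}_1}}\int_{0}^{T^{k,j}_1}\bigl(\phi^{H,j}_s-\phi^{H,j}_0\bigr)dW^{(j)}_s\right].$$
Bounding $|W^{(j)}_{T^{k,j}_1}|^{-1}=2^k$, and using Jensen's inequality together with the It\^o isometry (applicable because $\phi^{H,j}\in\text{B}^2(\mathbb{F})$ turns the integral into an $L^2(\mathbb{Q})$-martingale stopped at $T^{k,j}_1$), the problem reduces to showing
$$I_k \ :=\ 2^{2k}\,\mathbb{E}\!\int_0^{T^{k,j}_1}|\phi^{H,j}_s-\phi^{H,j}_0|^2\,ds \ \longrightarrow\ 0.$$

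Second, I would apply Brownian scaling. Setting $\tilde W^{(j),k}_u:=2^kW^{(j)}_{2^{-2k}u}$ and $\tilde\tau^{k,j}:=\inf\{u>0:|\tilde W^{(j),k}_u|=1\}=2^{2k}T^{k,j}_1$ (so $\tilde\tau^{k,j}\stackrel{d}{=}\tau^j$), the substitution $s=2^{-2k}u$ gives $I_k=\mathbb{E}\!\int_0^{\tilde\tau^{k,j}}|\phi^{H,j}_{2^{-2k}u}-\phi^{H,j}_0|^2\,du$. Meanwhile, Fubini and the change of variables $v=\tau^j u$ recast the Lebesgue-point hypothesis on $\psi^{H,j}$ as
$$\mathbb{E}\!\left[\frac{1}{h\tau^j}\int_0^{h\tau^j}|\phi^{H,j}_v-\phi^{H,j}_0|^2\,dv\right]\longrightarrow 0\quad(h\to 0),$$
and evaluating at $h=2^{-2k}$ produces exactly the scale needed to dominate the bulk of $I_k$. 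Splitting the scaled integral at a large cutoff $M$: the part on $\{u\le M\}$ is handled by the Lebesgue-point bound together with dominated convergence (with $\sup_{t\le T}|\phi^{H,j}_t|^2\in L^1$ as the dominating function coming from $\text{B}^2$-integrability), while the tail $\{u>M\}$ is controlled by the uniform integrability of $\sup_t|\phi^{H,j}_t|^2$ combined with the exponential decay of $\mathbb{P}(\tau^j>u)$ and the fact that $\mathbb{P}(u\le\tilde\tau^{k,j})\to 0$ uniformly in $k$ as $u\to\infty$. Sending $k\to\infty$ first and then $M\to\infty$ yields $I_k\to 0$.

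The main technical obstacle lies in this coupling step: $T^{k,j}_1$ and $2^{-2k}\tau^j$ coincide in law but are distinct random variables, and $\phi^{H,j}$ need not be independent of $W^{(j)}$. The definition of $\psi^{H,j}$ through the random time $\tau^j u$ is precisely what makes the Lebesgue-point hypothesis compatible with the scaling of $I_k$, and the absence of a $\text{B}^4$ hypothesis forces the tail estimate to go through uniform integrability of $\sup_t|\phi^{H,j}_t|^2$ rather than through a direct moment inequality; this remains feasible given the all-order $L^p$-integrability of $\tau^j$.
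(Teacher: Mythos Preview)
Your overall plan mirrors the paper's: use Lemma~\ref{identitytkn}, subtract the constant $\phi^{H,j}_0$, and reduce to an $L^2$-decay estimate on $[0,T^{k,j}_1]$ via Brownian scaling. The difficulty is entirely in the last stage, and there your argument has a genuine gap.

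After bounding the main part $\{u\le M\}$ by dropping the indicator, you are left with
\[
\int_0^M \mathbb{E}\bigl|\phi^{H,j}_{2^{-2k}u}-\phi^{H,j}_0\bigr|^2\,du,
\]
and your proposed dominated-convergence step needs $\mathbb{E}|\phi^{H,j}_{s}-\phi^{H,j}_0|^2\to 0$ as $s\downarrow 0$, i.e.\ $L^2$ right-continuity of $\phi^{H,j}$ at $0$. The hypothesis, however, is only that $0$ is a Lebesgue point of $\psi^{H,j}(u)=\mathbb{E}|\phi^{H,j}_{\tau^j u}-\phi^{H,j}_0|^2$, which says nothing about the deterministic-time map $s\mapsto\mathbb{E}|\phi^{H,j}_{s}-\phi^{H,j}_0|^2$. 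So the appeal to the ``Lebesgue-point bound'' on $\{u\le M\}$ is not justified. Relatedly, even after your change of variables the Lebesgue hypothesis yields control of $\mathbb{E}\bigl[\tfrac{1}{\tau^j}\int_0^{\tau^j}|\phi^{H,j}_{2^{-2k}u}-\phi^{H,j}_0|^2\,du\bigr]$, whereas $I_k$ carries no $1/\tau^j$ weight and has the \emph{different} random upper limit $\tilde\tau^{k,j}$; the cutoff-at-$M$ device does not reconcile either discrepancy.

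The paper handles this differently: instead of the deterministic rescaling $s=2^{-2k}u$, it performs the random time change $s=\tau^j u$ inside the stochastic integral, applies BDG, and then uses Cauchy--Schwarz to split off the factor $(\tau^j)^{1/2}$. This lands \emph{directly} on
\[
C\,(\mathbb{E}\tau^j)^{1/2}\Bigl(\tfrac{1}{2^{-2k}}\int_0^{2^{-2k}}\psi^{H,j}(u)\,du\Bigr)^{1/2},
\]
which is precisely the averaged quantity governed by the Lebesgue-point assumption, with no need for pointwise continuity and no cutoff. The essential trick you are missing is this factorisation of $\tau^j$ via Cauchy--Schwarz, which converts the random interval $[0,T^{k,j}_1]$ into the deterministic window $[0,2^{-2k}]$ for $\psi^{H,j}$.
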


\begin{proof}
In the sequel, $C$ will be a constant which may differ from line to line and let us fix $j=1,\ldots,p$. For a given $k\ge 1$, it follows from Lemma \ref{identitytkn} that
\begin{align}\label{withoutshift}
\mathbb{D}^{k,j} X_{T^{k,j}_{1}} & =   \frac{ \mathbb{E} \left[ \int_{0}^{T^{k,j}_{1}} \phi^{H,j}_s dW^{(j)}_s \mid \mathcal{F}^k_{T^{k,j}_{1}} \right]}{ \Delta A^{k,j}_{T^{k,j}_{1}}} \nonumber \\
& = \frac{ \mathbb{E} \left[ \int_{0}^{T^{k,j}_{1}} \left( \phi^{H,j}_s - \phi^{H,j}_{0} + \phi^{H,j}_{0} \right)d W^{(j)}_s \mid \mathcal{F}^k_{T^{k,j}_{1}} \right]}{ \Delta A^{k,j}_{T^{k,j}_{1}}} \nonumber \\
& = \frac{ \mathbb{E} \left[ \int_{0}^{T^{k,j}_{1}} \left(\phi^{H,j}_s - \phi^{H,j}_{0}  \right)dW^{(j)}_s \mid \mathcal{F}^k_{T^{k,j}_{1}} \right]}{ \Delta A^{k,j}_{T^{k,j}_{1}}} + \mathbb{E}  \left[ \phi^{H,j}_{0} \mid \mathcal{F}^k_{T^{k,j}_{1}} \right].
\end{align}
We recall that $T^{k,j}_1 \stackrel{law}{=} 2^{-2k}\tau^j$ so that we shall apply the Burkholder-Davis-Gundy and Cauchy-Schwartz inequalities together with a simple time change argument on the Brownian motion to get the following estimate

\begin{align}\label{espetk1}
\mathbb{E} \Bigg| \frac{\mathbb{E} \left[ \int_{0}^{T^{k,j}_{1}} \left( \phi^{H,j}_s - \phi^{H,j}_{0}\right) d W^{(j)}_s \mid \mathcal{F}^{k}_{T^{k,j}_{1}}  \right]}{\Delta A^{k,j}_{T^{k,j}_{1}}} \Bigg| & \leq 2^k \mathbb{E} \Bigg| \int_{0}^{T^{k,j}_{1}} \left( \phi^{H,j}_s - \phi^{H,j}_{0}\right) d W^{(j)}_s  \Bigg| \nonumber \\
& = 2^k \mathbb{E} \Bigg| \int_{0}^{2^{-2k}} \left( \phi^{H,j}_{\tau^js} - \phi^{H,j}_{0}\right) d W^{(j)}_{\tau^js}\Bigg| \nonumber \\
& \le C2^k \mathbb{E} \Bigg| \int_{0}^{2^{-2k}} \left( \phi^{H,j}_{\tau^js} - \phi^{H,j}_{0}\right)^2\tau^jds  \Bigg|^{1/2} \nonumber \\
& \leq C\mathbb{E}^{1/2} \tau^j \mathbb{E}^{1/2} \frac{1}{2^{-2k}} \int_{0}^{2^{-2k}}\Big( \phi^{H,j}_{\tau^ju} - \phi^{H,j}_{0}\Big)^2 d u \nonumber \\
& = C\mathbb{E}^{1/2} \frac{1}{2^{-2k}} \int_{0}^{2^{-2k}}\Big( \phi^{H,j}_{u\tau^j} - \phi^{H,j}_{0}\Big)^2 du.
\end{align}
\noindent Therefore, the right-hand side of \eqref{espetk1} vanishes if, and only if, $t=0$ is a Lebesgue point of $u\mapsto \psi^{H,j}(u)$, i.e.,

\begin{equation}\label{lpoint}
\frac{1}{2^{-2k}} \int_{0}^{2^{-2k}} \mathbb{E} | \phi^{H,j}_{u\tau^j} - \phi^{H,j}_0 |^2 d u \rightarrow 0~\text{as}~k\rightarrow \infty.
\end{equation}
The estimate \eqref{espetk1}, the limit \eqref{lpoint} and the weak convergence of $\mathcal{F}^{k}_{T^{k,j}_1}$ to the initial sigma-algebra $\mathcal{F}_0$ yield

$$
\lim_{k\rightarrow \infty} \mathbb{D}^{k,j} X_{T^{k,j}_{1}} = \lim_{k\rightarrow \infty} \mathbb{E}  \left[  \phi^{H,j}_{0} \mid \mathcal{F}^k_{T^{k,j}_{1}} \right]  = \phi^{H,j}_{0}$$
strongly in $L^1$. Since $\mathbf{D}^{k,j} X_{T^{k,j}_{1}} = \mathbb{E}\big[\mathbb{D}^{k,j} X_{T^{k,j}_{1}}\big];k\ge 1$ then we conclude the proof.
\end{proof}

\begin{remark}
At first glance, the limit \eqref{pwisestrong} stated in Theorem \ref{result1} seems to be rather weak since it is not defined in terms of convergence of processes. However, from the purely computational point of view, we shall construct a pointwise Monte Carlo simulation method of the GKW projectors in terms of $\mathbf{D}^{k,j}X_{T^{k,j}_1}$ given by~(\ref{exder}). This substantially simplifies the algorithm introduced by Le\~ao and Ohashi~\cite{LEAO_OHASHI09} for the unidimensional case under rather weak path regularity.
\end{remark}

\begin{remark}\label{smoothre}
For each $j=1,\ldots, p$, let us define

 $$\psi^{H,j}(t_0,u):= \mathbb{E}|\phi^{H,j}_{t_0 + \tau^ju} - \phi^{H,j}_{t_0}|^2,~\text{for}~t_0 \in [0,T],~u\ge 0.$$
One can show by a standard shifting argument based on the Brownian motion strong Markov property that if there exists a representation $\phi^H$ such that $u\mapsto \psi^{H,j}(t_0,u)$ is cadlag for a given $t_0$ then one can recover pointwise in $L^1$-strong sense the $j$-th GKW projector for that $t_0$. We notice that if $\phi^{H,j}$ belongs to $\text{B}^2(\mathbb{F})$ and it has cadlag paths then $u\mapsto \psi^{H,j}(t_0,u)$  is cadlag for each $t_0$, but the converse does not hold. Hence the assumption in Theorem~\ref{result1} is rather weak in the sense that it does not imply the existence of a cadlag version of $\phi^{H,j}$.
\end{remark}

\bibliographystyle{plain}
\bibliography{mathf13}

\begin{thebibliography}{10}

\bibitem{kohatsu}
G.~Bernis, E.~Gobet, and A.~Kohatsu-Higa.
\newblock Monte carlo evaluation of greeks for multidimensional barrier and
  lookback options.
\newblock {\em Mathematical Finance}, 13(1):99--113, 2003.

\bibitem{biagini}
F.~Biagini, P.~Guasoni, and M.~Pratelli.
\newblock Mean-variance hedging for stochastic volatility models.
\newblock {\em Mathematical Finance}, 10(2):109--123, 2000.

\bibitem{burq}
Z.~A. Burq and O.~D. Jones.
\newblock Simulation of brownian motion at first-passage times.
\newblock {\em Mathematics and Computers in Simulation}, 77(1):64--71, 2008.

\bibitem{kallsen_2}
A.~{\v{C}}ern{\`y} and J.~Kallsen.
\newblock On the structure of general mean-variance hedging strategies.
\newblock {\em The Annals of probability}, 35(4):1479--1531, 2007.

\bibitem{kallsen_4}
A.~{\v{C}}ern{\`y} and J.~Kallsen.
\newblock Mean--variance hedging and optimal investment in heston's model with
  correlation.
\newblock {\em Mathematical Finance}, 18(3):473--492, 2008.

\bibitem{dellacherie}
C.~Dellacherie and P.-A. Meyer.
\newblock {\em Probabilities and Potential, volume B}.
\newblock Amsterdam, North-Holland, 1982.

\bibitem{dellacherie1978}
C.~Dellacherie, P.-A. Meyer, and M.~Yor.
\newblock Sur certaines proprietes des espaces de banach $\text{H}^1$ et bmo.
\newblock {\em S{\'e}minaire de Probabilt{\'e}s, XII (Univ. Strasbourg,
  Strasbourg 1976/1977)}, 649:98--113, 1978.

\bibitem{Follmer1991}
H.~F{\"o}llmer and M.~Schweizer.
\newblock Hedging of contingent claims under incomplete information.
\newblock {\em Applied Stochastic Analysis, Stochastic Monographs}, 5:389--414,
  1991.

\bibitem{Sondermann1986}
H.~F{\"o}llmer and D.~Sondermann.
\newblock Hedging of non-redundant contingent claims.
\newblock {\em Contributions to Mathematical Economics}, pages 205--223.

\bibitem{pham_1}
C.~Gourieroux, J.~P. Laurent, and H.~Pham.
\newblock Mean-variance hedging and num{\'e}raire.
\newblock {\em Mathematical Finance}, 8(3):179--200, 1998.

\bibitem{He92}
Sheng-wu He, Chia-kang Wang, and Jia-an Yan.
\newblock Semimartingale theory and stochastic calculus.
\newblock 1992.

\bibitem{Heath}
D.~Heath, E.~Platen, and M.~Schweizer.
\newblock A comparison of two quadratic approaches to hedging in incomplete
  markets.
\newblock {\em Mathematical Finance}, 11(4):385--413, 2001.

\bibitem{heston}
S.~L. Heston.
\newblock A closed-form solution for options with stochastic volatility with
  applications to bond and currency options.
\newblock {\em Review of financial studies}, 6(2):327--343, 1993.

\bibitem{hobson}
D.~Hobson.
\newblock Stochastic volatility models, correlation, and the q-optimal measure.
\newblock {\em Mathematical Finance}, 14(4):537--556, 2004.

\bibitem{jeanblanc}
M.~Jeanblanc, M.~Mania, M.~Santacroce, and M.~Schweizer.
\newblock Mean-variance hedging via stochastic control and bsdes for general
  semimartingales.
\newblock {\em The Annals of Applied Probability}, 22(6):2388--2428, 2012.

\bibitem{kallsen_1}
J.~Kallsen, J.~Muhle-Karbe, and R.~Vierthauer.
\newblock Asymptotic power utility-based pricing and hedging.
\newblock {\em Mathematics and Financial Economics}, pages 1--28, 2009.

\bibitem{kallsen_3}
J.~Kallsen and R.~Vierthauer.
\newblock Quadratic hedging in affine stochastic volatility models.
\newblock {\em Review of Derivatives Research}, 12(1):3--27, 2009.

\bibitem{kramkov1}
D.~Kramkov and M.~Sirbu.
\newblock Sensitivity analysis of utility-based prices and risk-tolerance
  wealth processes.
\newblock {\em The Annals of Applied Probability}, 16(4):2140--2194, 2006.

\bibitem{kramkov2}
D.~Kramkov and M.~Sirbu.
\newblock Asymptotic analysis of utility-based hedging strategies for small
  number of contingent claims.
\newblock {\em Stochastic Processes and their Applications},
  117(11):1606--1620, 2007.

\bibitem{LEAO_OHASHI09}
D.~Le{\~a}o and A.~Ohashi.
\newblock Weak approximations for wiener functionals.
\newblock {\em The Annals of Applied Probability}, 23(4):1660--1691, 2013.

\bibitem{nutz}
M.~Nutz.
\newblock Pathwise construction of stochastic integrals.
\newblock {\em Electronic Communications in Probabilidy}, 17(24):1--7, 2012.

\bibitem{rheinlander}
T.~Rheinl{\"a}nder and M.~Schweizer.
\newblock On $l^{2}$-projections on a space of stochastic integrals.
\newblock {\em The Annals of Probability}, 25(4):1810--1831, 1997.

\bibitem{Schweizer1991}
M.~Schweizer.
\newblock Option hedging for semimartingales.
\newblock {\em Stochastic Processes and their Applications}, 37:339--363, 1991.

\bibitem{Schweizer1995}
M.~Schweizer.
\newblock On the minimal martingale measure and the f\"{o}llmer-schweizer
  decomposition.
\newblock {\em Stochastic Analysis and Applications}, pages 573--599, 1995.

\bibitem{Schweizer_1996}
M.~Schweizer.
\newblock Approximation pricing and the variance-optimal martingale measure.
\newblock {\em The Annals of Probability}, 24(1):206--236, 1996.

\bibitem{Schweizer2001}
M.~Schweizer.
\newblock A guided tour through quadratic hedging approaches.
\newblock {\em Option Pricing, Interest Rates and Risk Management, In Jouini,
  E. Cvitanic, J. and Musiela, M. (Eds.)}, pages 538--574, 2001.

\bibitem{zhang}
J.~Zhang.
\newblock Representation of solutions to bsdes associated with a degenerate
  fsde.
\newblock {\em The Annals of Applied Probability}, 15(3):1798--1831, 2005.

\end{thebibliography}

\end{document}